\newcommand\be{\begin{equation}}
\newcommand\ee{\end{equation}}
\newcommand\p{\partial}
\DeclareMathOperator{\odd}{odd}
\newcommand\normordboson{ {\scriptstyle {{*}\atop{*}}} }
\newcommand\Tr{{\rm Tr}\,}
\newcommand\diag{{\rm diag}\,}
\DeclareMathOperator{\gl}{\mathfrak{gl}}
\def\sppan{{\rm span}\,}
\DeclareMathOperator{\BGW}{BGW}
\DeclareMathOperator{\Gr}{Gr}
\newtheorem{theorem}{Theorem}
\newtheorem{lemma}{Lemma}[section]
\newtheorem{proposition}[lemma]{Proposition}
\newtheorem{remark}{Remark}[section]
\newtheorem{conjecture}{Conjecture}[section]
\newtheorem*{theorem*}{Theorem}
\newtheorem{definition}{Definition}
\numberwithin{equation}{section}
\title[On higher BGW tau-functions]{On higher Br\'ezin--Gross--Witten tau-functions}
\author{Alexander Alexandrov}
\author{Saswati Dhara}
\address{Center for Geometry and Physics, Institute for Basic Science (IBS), Pohang 37673, Korea
}
\email{ {\tt alexandrovsash at gmail.com}}
\email{ {\tt saswatidhara@ibs.re.kr}}
\subjclass[2020]{37K10, 14N35, 81R10, 05A15}
\date{\today}
\begin{document}

\begin{abstract} 
In this paper, we consider the higher Br\'ezin--Gross--Witten tau-functions, given by the matrix integrals. For these tau-functions we construct the canonical Kac--Schwarz operators, quantum spectral curves, and $W^{(3)}$-constraints. For the simplest representative we construct the cut-and-join operators, which describe the algebraic version of the topological recursion. We also investigate a one-parametric generalization of the higher Br\'ezin--Gross--Witten tau-functions. 
\end{abstract}

\maketitle

{Keywords: tau-functions, KP hierarchy, W-constraints, BGW tau-function, cut-and-join operator, enumerative geometry, moduli spaces.}\\

\tableofcontents

\newpage 

\def\thefootnote{\arabic{footnote}}

\setcounter{equation}{0}

\section{Introduction}
Cohomological field theories \cite{KM} provide a universal description of a huge family of enumerative geometry invariants. 
The Chekhov--Eynard--Orantin topological recursion on the regular spectral curves \cite{EO,EO1} is closely related to the Givental--Teleman description of cohomological field theories and for many cases \cite{DOSS} it allows us to describe the generating functions in terms of certain structures on the spectral curve. A generalization of this construction for the irregular spectral curves with 
simple ramification points, associated with degenerate cohomological field theories, was developed by Chekhov and Norbury \cite{CN}. This generalization describes Givental's decomposition formula for the partition functions on all, possibly irregular, spectral curves which near ramification points are similar to the Airy curve 
\be
2x=y^2
\ee
or the Bessel curve 
\be\label{BCI}
2xy^2=1.
\ee

Basic ingredients of this construction are the Kontsevich--Witten \cite{Wit91,Kon} and Br\'ezin--Gross--Witten (BGW) \cite{Brezin:1980rk,Gross:1980he} tau-functions.
Because of the special role, played by these functions in the modern mathematical physics, these functions are very well studied and many properties are known. In particular, these functions are solutions of the KdV hierarchy, and satisfy the Virasoro constraints. These constraints can be solved with the cut-and-join operators \cite{KSS,KS2}. Moreover, these functions are related to the intersection theory on the moduli spaces of the Riemann surfaces \cite{Wit91,Kon,Norb,NorbS}. In many respects these two tau-functions are very similar to each other.

A general description of decomposition  of the non-semi-simple cohomological field theories is not available yet. However, it is expected that it is closely related to a version of the Chekhov--Eynard--Orantin topological recursion on the spectral curves with 
higher ramification points \cite{Borot} locally described by the equation
\be\label{RegM}
(m+1)x=y^{m+1}
\ee
for $m\in {\mathbb Z}_{>0}$.
Associated partition functions were intensively investigated in the framework of matrix models, integrable systems, and associated string models in the early 90's of the last century, see  
\cite{KS,DVV,KMMMZ1,KMMMZ2,FKN,AM,IZ,Kon,Wag} and references therein. Canonical Kac--Schwarz (KS) operators, associated to this family, were recently constructed in \cite{H3_2}.

In this paper we investigate the generating functions which we claim to be associated with the spectral curves
\be\label{IrregM}
(m+1)xy^{m+1}=1
\ee
for $m\in {\mathbb Z}_{>0}$.
The simplest representative of this family is the Bessel curve \eqref{BCI}.
These generating functions are given by the asymptotic expansion of  the higher BGW matrix models, introduced by Mironov, Morozov and Semenoff \cite{Mironov:1994mv}. They are tau-functions of the $(m+1)$-reduction of the  Kadomtsev--Petviashvili (KP) hierarchy.
For these tau-functions we construct a complete description of the corresponding KS algebra. In particular, we derive the canonical pair of the KS operators. One of them plays the role of the quantum spectral curve, and another is the raising operator which generate the point of the Sato Grassmannian. Here we see an essential difference with the generalized Kontsevich models, associated to the curves \eqref{RegM}. Namely, for the family \eqref{RegM} the canonical KS operators are polynomials, while for the higher BGW tau-functions one of them is given by an infinite series. 

Using these KS operators we construct the $W^{(3)}$-constraints satisfied by the higher BGW tau-functions. 
For $m=2$, these constraints completely specify the tau-function. We solve these constraints in terms of the cut-and-join operators. This solution provides an algebraic version of the topological recursion. 

The tau-functions, associated to the spectral curves \eqref{RegM}, have a very nice enumerative geometry interpretation. Namely, according to Witten's conjecture \cite{Wag}, proved by Faber, Shadrin, and Zvonkine \cite{FSZ}, 
they are the generating functions of $r$-spin intersection theory, given by the intersection numbers of Witten's and psi classes. In the same time, the original BGW model corresponding to the Bessel spectral curve \eqref{BCI} describes the intersection theory of Norbury's classes \cite{Norb, NorbS, CGG}.  The geometric interpretation for the  higher BGW tau-functions is addressed in the recent paper of Chidambaram,  Garcia-Failde, and Giacchetto \cite{CGG}.

We also consider the one-parametric deformation of the higher BGW tau-functions by a logarithmic term introduced in \cite{Mironov:1994mv}.
For this deformation we construct a complete set of the KS operators, the quantum spectral curve, the $W^{(3)}$-constraints and for the simplest case with $m=2$ we construct the cut-and-join description. For the case of the original BGW tau-function ($m=1$) this deformation  was investigated in \cite{KS2}, where it was called the generalized BGW tau-function, so we call the deformed family the generalized  higher BGW tau-functions.
This deformation is very natural and we believe that it has a natural geometric interpretation. However, this interpretation is unclear yet.

We expect that the tau-functions, investigated in this paper, will play an important role in the topological expansion/Givental decomposition for the irregular spectral curves/non-semi-simple degenerate cohomological field theories.

\subsection{Organization of the paper} In Section \ref{S2}, we provide some elements of the general theory of the KP hierarchy, including the Sato Grassmannian description, the KS algebra, and the algebra of symmetries. Section \ref{S3} is devoted to the BGW tau-function. In particular, in this section we construct the canonical pair of the KS operators for this tau-function. In Section  \ref{S4} we investigate a family of tau-functions where the BGW tau-function is the simplest representative. A one-parametric deformation of this family is considered in Section \ref{S5}. 

\subsection{Acknowledgments}
This work was supported by the Institute for Basic Science (IBS-R003-D1). The authors are grateful to Sergey Shadrin for
inspiring discussions.
We are grateful to Nitin Kumar Chidambaram, Elba Garcia-Failde, and Alessandro Giacchetto for discussions and for sharing their results
prior to publication.

\section{KP hierarchy and its symmetries}\label{S2}

Let us briefly summarize by now standard  description of the KP hierarchy in terms of the Sato Grassmannian and KS operators; for more detail see, e.g., \cite{Sato,SegW,KS,AM,mulase1994algebraic, JMD,H3_2}
 and references therein. 

\subsection{KP hierarchy and Sato Grassmannian}\label{S2.1}

The KP hierarchy was introduced by Sato \cite{Sato}. It
can be represented in terms of a tau-function $\tau({\bf t})$ by the Hirota bilinear identity
\be\label{HBE}
\oint_{\infty} e^{\sum_{k>0} (t_k-t_k') z^k}
\tau ({\bf t}-[z^{-1}])\tau ({\bf t'}+[z^{-1}])dz =0,
\ee
which encodes all nonlinear equations of the KP hierarchy.
Here we use the standard short-hand notation
\be
{\bf t}\pm [z^{-1}]:= \bigl \{ t_1\pm   
z^{-1}, t_2\pm \frac{1}{2}z^{-2}, 
t_3 \pm \frac{1}{3}z^{-3}, \ldots \bigr \}.
\ee

Let us consider the description of the space of solutions for the KP hierarchy, introduced by Sato in \cite{Sato} and further developed by Segal and Wilson in \cite{SegW}. We work within the formal series setup, $\tau({\bf t})\in {\mathbb C}[\![t_1,t_2,t_3,\dots]\!]$. Hence,
we focus on Sato's version of the construction. Let us consider the space $H=H_+\oplus H_-$, where the subspaces 
\be
H_-=z^{-1}{\mathbb C}[\![z^{-1}]\!]
\ee and 
\be
H_+={\mathbb C}[z]
\ee 
are generated by negative and nonnegative powers of $z$ respectively. Then, the Sato Grassmannian $\rm{Gr}$ consists of all closed linear spaces $\mathcal{W}\in H $, which are compatible with $H_+$. Namely, an orthogonal projection $\pi_+ : \mathcal{W} \to H_+ $ should be a Fredholm operator, i.e. both the kernel ${\rm ker}\, \pi_+ \in \mathcal{W}$ and the 
cokernel ${\rm coker}\, \pi_+ \in H_+$ should be finite-dimensional vector spaces. 
The Grassmannian $\Gr$ consists of components $\Gr^{(k)}$, parametrized by an index of the operator $\pi_+$. We need only the component $\Gr^{(0)}$; other components have an equivalent description. 

Moreover, we will consider only the big cell $\Gr^{(0)}_+$ of $\Gr^{(0)}$, which is defined by the constraint ${\rm ker}\, \pi_+ = {\rm coker}\, \pi_+=0$. We call $\Gr^{(0)}_+$ the {\em Sato Grassmannian} for simplicity. There exists a bijection between the points of the Sato Grassmannian ${\mathcal W}\in\Gr^{(0)}_+$ and the tau-functions with $\tau({\bf 0})=1$.  Below for simplicity we consider only the tau-functions normalized by $\tau({\bf 0})=1$.

A point of the Sato Grassmannian ${\mathcal W}\in \rm{Gr}^{(0)}_+$ can be described by an  {\emph {admissible basis}} $\{\Phi_1^{\mathcal W},\Phi_2^{\mathcal W},\Phi_3^{\mathcal W},\dots\}$,
\be
{\mathcal W}=\sppan \{\Phi_1^{\mathcal W},\Phi_2^{\mathcal W},\Phi_3^{\mathcal W},\dots\}.
\ee
The crucial property of the admissible bases is that if $\{\Phi_j^{\mathcal W}\}$ and $\{ {\Phi'}_j ^{\mathcal W}\}$ are two admissible bases of ${\mathcal W}$, then the matrix which relates them is of the kind that has a determinant, or, equivalently, this matrix differs from the identity by an operator of trace class \cite{SegW}:

\begin{definition}\label{Defadm} 
 $\{\Phi_j^{\mathcal W}\}$ is an admissible basis for ${\mathcal W}\in \rm{Gr}^{(0)}_+$,  if
\begin{enumerate}
\item the linear map $H_+ \rightarrow H$ which takes $z^{j-1}$ to $\Phi_j^{\mathcal W}$ is injective and has image $\mathcal W$, and
\item the matrix, relating $\pi_+ (\Phi_j^{\mathcal W})$ to $z^{j-1}$ differs from the identity by an operator of trace class.  
\end{enumerate}
\end{definition}

We call an element of $H$ {\em monic} if its leading coefficient is equal to 1. Any point of the Sato Grassmannian has an admissible basis of the monic elements of the form
\be\label{goodbas}
\Phi_j^{\mathcal W}=z^{j-1}\left(1+O(z^{-1})\right),
\ee
of course, such basis is not unique. 

Let $M\in {\mathbb Z}_{\geq 1}$ and $\Lambda =\diag (\lambda_1,\dots,\lambda_M)$ be a diagonal matrix.
For any function $f({\bf t})$, dependent on the infinite set of variables ${\bf t}=(t_1,t_2,t_3,\dots)$, let
\be\label{Miwa}
f\left(\left[\Lambda^{-1}\right]\right):=f({\bf t})\Big|_{t_k=\frac{1}{k}\Tr \Lambda^{-k}}
\ee
be the {\em Miwa parametrization}. For any basis (\ref{goodbas}) the tau-function of the KP hierarchy in the Miwa parametrization is equal to the ratio of determinants
\be\label{miwatau}
\tau_{\mathcal W}([\Lambda^{-1}])=\frac{\det_{i,j=1}^M \Phi^{\mathcal W}_i(\lambda_j)}{\Delta(\lambda)},
\ee
where
\begin{equation}
\Delta(z)=\prod_{i<j}(z_j-z_i)
\end{equation}
is the {\em Vandermonde determinant}. 
Moreover, if for some function $\tau_{\mathcal W}$ equation (\ref{miwatau}) holds for all $M \in {\mathbb Z}_{\geq 1}$, then  $\tau_{\mathcal W}$ is a tau-function of the KP hierarchy.
 
 \subsection{Kac--Schwarz operators}
 Consider the ring of differential operators with coefficients formal Laurent series in the variable $z^{-1}$,
\be
{\mathcal D}:={\mathbb C}(\!(z^{-1})\!)[\![\frac{\p}{\p z}]\!]
\ee
and its subrings ${\mathcal D}_\pm:=H_\pm[\![\frac{\p}{\p z}]\!]$. Operators from ${\mathcal D}$ we will call differential operators. Below we allow the coefficients of operators in ${\mathcal D}$ to depend also on parameters, in particular, on $\hbar$.
A natural direct sum decomposition holds
\be
{\mathcal D}={\mathcal D}_+ \oplus{\mathcal D}_-.
\ee
\begin{definition}
For any point of the Sato Grassmannian ${\mathcal W}$ the {\em KS algebra}
\be
{\mathcal A}_{\mathcal W}:=\left\{{\mathtt a}\in {\mathcal D}\,\big | \,{\mathtt a}\cdot {\mathcal W} \subset {\mathcal W} \right\}
\ee
 is the algebra of the differential operators which stabilize this point.
\end{definition}
For a point of the Sato Grassmannian the KS  algebra is isomorphic to $w_{1+\infty}$ algebra \cite{H3_2}. The KS operators provide an efficient tool for investigation of tau-functions \cite{KS,KMMMZ1,FKN,AM,Mironov:1994mv,Alexandrov:2014cwa,H3_2}. For instance,  sometimes they allow to construct a complete set of linear Virasoro/W-constraints satisfied by the tau-functions. Moreover, they help to describe matrix models associated with tau-functions.
They are also closely related to the quantum spectral curves and relations between the Lax and the Orlov--Schulman operators. 

However, there is no universal way to construct the KS algebra ${\mathcal A}_{\mathcal W}$ explicitly for a given point of the Sato Grassmannian. For the tau-functions related to the enumerative geometry problems such algebras are often generated by relatively simple operators. In this paper we construct such a description for an interesting family of tau-functions.

Consider the space of the pairs of differential operators
\be\label{GrD}
\Gr_{\mathcal D}:= \left\{\left({\mathtt P},{\mathtt Q}\right)\in {\mathcal D}^2 \Big| \left[{\mathtt P},{\mathtt Q}\right]=1, {\mathtt P} \in\frac{\p}{\p z}+ z^{-1}  {\mathcal D}_-,{\mathtt Q}\in z+ {\mathcal D}_-\right\}.
\ee
In \cite{H3_2} it is proven that $\Gr_{\mathcal D}$ is isomorphic to the Sato Grassmannian $\Gr^{(0)}_+$. For any point of the Sato Grassmannian there is a unique pair of KS operators ${\mathtt P}_{\mathcal W}, {\mathtt Q}_{\mathcal W}\in{\mathcal A}_{\mathcal W}$ such that $({\mathtt P}_{\mathcal W}, {\mathtt Q}_{\mathcal W}) \in\Gr_{\mathcal D}$. This pair can be constructed explicitly for a given Sato group element. 
\begin{definition}
For a point ${\mathcal W}$ of the Sato Grassmannian  we call the pair of KS operators $({\mathtt P}_{\mathcal W}, {\mathtt Q}_{\mathcal W}) \in\Gr_{\mathcal D}$  the {\em canonical KS operators}, associated to ${\mathcal W}$ or the corresponding tau-function.
\end{definition}

In \cite[Corollary 2.2]{H3_2} it was shown that for any ${\mathcal W} \in  \Gr^{(0)}_+$ there are no non-trivial KS operators in ${\mathcal D}_-$, ${\mathcal A}_{\mathcal W} \cap {\mathcal D}_- =0$. Therefore, to find the canonical pair of KS operators it is sufficient to find any pair of KS operators, one from $z+{\mathcal D}_-$ and another from $\frac{\p}{\p z}+{\mathcal D}_-$, by uniqueness this pair is canonical.

The canonical pair of KS operators describes the point of the Sato Grassmannian in the following simple way. The first basis vector satisfies
\be
{\mathtt P}_{\mathcal W}\cdot \Phi_1^{\mathcal W}=0.
\ee 
This equation can be interpreted as the quantum spectral curve equation, it has a unique monic solution in $H$. Then, for this solution the vectors ${\mathtt Q}_{\mathcal W}^{j-1}\cdot \Phi_1^{\mathcal W}= z^{j-1}\left(1+O(z^{-1})\right)$ generate an admissible basis
\be\label{highQ}
{\mathcal W}=\sppan \{\Phi_1^{\mathcal W},{\mathtt Q}_{\mathcal W}\cdot \Phi_1^{\mathcal W},{\mathtt Q}_{\mathcal W}^2\cdot \Phi_1^{\mathcal W},\dots\}.
\ee

\subsection{Symmetries of the KP hierarchy}\label{S2.3}
Using the KS operators one can construct linear equations for the tau-functions. These equations are described by operators, which belong to a certain representation of the central extension of the algebra $\gl(\infty)$, for more details see, e.g., \cite{KS,FKN,Alexandrov:2014cwa,H3_2}. 
Let us remind the reader a description of the $W^{(3)}$ symmetry algebra of the KP hierarchy. 
The {\em Heisenberg--Virasoro subalgebra} of $\gl(\infty)$ is generated by the operators
\be
\widehat{J}_k =
\begin{cases}
\displaystyle{\frac{\p}{\p t_k} \,\,\,\,\,\,\,\,\,\,\,\, \mathrm{for} \quad k>0},\\[2pt]
\displaystyle{0}\,\,\,\,\,\,\,\,\,\,\,\,\,\,\,\,\,\,\, \mathrm{for} \quad k=0,\\[2pt]
\displaystyle{-kt_{-k} \,\,\,\,\,\mathrm{for} \quad k<0,}
\end{cases}
\ee
unit, and
\begin{equation}
\begin{split}
\label{virfull}
\widehat{L}_m&=\frac{1}{2} \sum_{a+b=m} \normordboson \widehat{J}_a \widehat{J}_b\normordboson\\
&=\frac{1}{2} \sum_{a+b=-m}a b t_a t_b+ \sum_{k=1}^\infty k t_k \frac{\p}{\p t_{k+m}}+\frac{1}{2} \sum_{a+b=m} \frac{\p^2}{\p t_a \p t_b}.
\end{split}
\end{equation}
Here the normal ordering $\normordboson\dots\normordboson$ puts all $\frac{\p}{\p t_k}$ to the right of all $t_k$, and we always assume that $t_k=0$ and $\frac{\p}{\p t_k}=0$ for negative $k$.
To extend the Heisenberg--Virasoro algebra to 
$W^{(3)}$ algebra it is necessary to consider also operators cubic in $\widehat{J}_k$,
\begin{multline}
\widehat{M}_k=\frac{1}{3} \sum_{a+b+c=k} \normordboson \widehat{J}_a \widehat{J}_b \widehat{J}_c \normordboson\\
=
\frac{1}{3}\sum_{a+b+c=-k}a\, b\, c\, t_a\, t_b\, t_c+\sum_{c-a-b=k}a\, b\, t_a\, t_b\, \frac{\p}{\p t_{c}}
+\sum_{b+c-a=k}a\, t_{a}\frac{\p^2}{\p t_b\p t_c}
+\frac{1}{3}\sum_{a+b+c=k}\frac{\p^3}{\p t_a \p t_b \p t_c}.
\end{multline}
The operators $\widehat{J}_k$, $\widehat{L}_k$, and $\widehat{M}_k$ satisfy the following commutation relations
\begin{equation}
\begin{split}\label{comm}
\left[\widehat{J}_k,\widehat{J}_m\right]&=k\, \delta_{k,-m},\\
\left[\widehat{J}_k,\widehat{L}_m\right]&=k\widehat{J}_{k+m},\\
\left[\widehat{L}_k,\widehat{L}_m\right]&=(k-m)\widehat{L}_{k+m}+\frac{1}{12}k(k^2-1)\delta_{k,-m},\\
\left[\widehat{J}_k,\widehat{M}_m\right]&=2k\,\widehat{L}_{k+m},\\
\left[\widehat{L}_k,\widehat{M}_m\right]&=(2k-m)\widehat{M}_{k+m}+\frac{1}{6}k(k^2-1)\widehat{J}_{k+m},
\end{split}
\end{equation}
and generate the $W^{(3)}$-algebra. 
A commutator of $\widehat{M}_k$'s contains the terms of fourth power in the current components $\widehat{J}_m$, so it can not be represented as a linear combination of $\widehat{J}_k$, $\widehat{L}_k$, and $\widehat{M}_k$.

To describe the action of the algebra (\ref{comm}) on the Sato Grassmannian let us introduce
\begin{equation}
\begin{split}\label{virw}
{\mathtt j}_k&=z^k,\\
{\mathtt l}_k&=-z^k\left(z\frac{\p}{\p z}+\frac{k+1}{2}\right),\\
{\mathtt m}_k&=z^k\left(\left(z\frac{\p}{\p z}\right)^2+(k+1)z\frac{\p}{\p z}+\frac{(k+1)(k+2)}{6}\right).
\end{split}
\end{equation}
These operators belong to ${\mathcal D}$ and satisfy the commutation relations (\ref{comm}) with the omitted central term. 

With any KS operator one can associate and describe explicitly a linear differential equation, satisfied by the tau-function.
The following lemma provides a particular instance of this general relation between the KS operators and linear constraints, satisfied by the tau-functions.
\begin{lemma}\label{lemma33}
If a point of the Sato Grassmannian $\mathcal{W}$ is stabilized by a KS operator
\be
\sum_k \left( a_k {\mathtt m}_k+ b_k {\mathtt l}_k+c_k {\mathtt j}_k\right ) \in {\mathcal A}_{\mathcal W},
\ee
where parameters $a_k$, $b_k$, and $c_k$ do not depend on ${\bf t}$,
then the corresponding tau-function $\tau_{\mathcal W}$ satisfies a linear constraint 
\be
\sum_k \left( a_k \widehat{M}_k+ b_k \widehat{L}_k+c_k \widehat{J}_k\right ) \cdot \tau_{\mathcal W} = \mu\, \tau_{\mathcal W}
\ee
for some eigenvalue $\mu$ independent of ${\bf t}$.
\end{lemma}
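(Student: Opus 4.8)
The plan is to recognize both the hypothesis and the conclusion as two faces of a single statement about a decomposable vector in the fermionic Fock space via the boson--fermion correspondence, following the conventions of \cite{JMD, Alexandrov:2014cwa, H3_2}. First I would realize the tau-function fermionically: fixing a monic admissible basis $\{\Phi_j^{\mathcal W}\}$ as in \eqref{goodbas}, the point ${\mathcal W}$ corresponds to the semi-infinite wedge $| {\mathcal W} \rangle = \Phi_1^{\mathcal W}\wedge \Phi_2^{\mathcal W}\wedge\cdots$ in the charge-zero sector, and under bosonization this state is exactly $\tau_{\mathcal W}({\bf t})$. The Miwa determinant \eqref{miwatau} is the explicit form of this identification, with the reference space $H_+$ playing the role of the vacuum $\tau\equiv 1$.

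Next I would set up the dictionary between one-particle operators on $H$ and their normal-ordered second quantizations. Any linear operator on $H$ lifts to an operator on the Fock space, and this assignment is a Lie-algebra homomorphism only up to a two-cocycle; that cocycle is precisely the central term which is absent from the centreless relations satisfied by ${\mathtt j}_k,{\mathtt l}_k,{\mathtt m}_k$ in \eqref{virw} but present in \eqref{comm}. The operators \eqref{virw} are normalized so that the lift of ${\mathtt j}_k$ is $\widehat{J}_k$, the lift of ${\mathtt l}_k$ is $\widehat{L}_k$, and the lift of ${\mathtt m}_k$ is $\widehat{M}_k$; I would verify these three identities by matching modes against \eqref{virfull} and the definition of $\widehat{M}_k$, which is a direct computation. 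By linearity the stabilizing operator ${\mathtt a}=\sum_k (a_k{\mathtt m}_k+b_k{\mathtt l}_k+c_k{\mathtt j}_k)$ then lifts to $\sum_k (a_k\widehat{M}_k+b_k\widehat{L}_k+c_k\widehat{J}_k)$.

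The conceptual core is the derivation computation on the wedge. Since ${\mathtt a}\cdot{\mathcal W}\subset{\mathcal W}$, I may write ${\mathtt a}\,\Phi_i^{\mathcal W}=\sum_j A_{ji}\,\Phi_j^{\mathcal W}$, and the second quantization acts on $| {\mathcal W} \rangle$ as a derivation, replacing one factor of the wedge at a time. Every off-diagonal replacement produces a wedge with a repeated factor and so vanishes by antisymmetry, leaving only the diagonal terms; hence the lifted operator acts on $| {\mathcal W} \rangle$ as multiplication by $\mu$, the regularized trace of ${\mathtt a}$ restricted to ${\mathcal W}$. Transporting this eigenvector equation back through bosonization yields $\sum_k (a_k\widehat{M}_k+b_k\widehat{L}_k+c_k\widehat{J}_k)\cdot\tau_{\mathcal W}=\mu\,\tau_{\mathcal W}$, and since $\mu$ is a scalar it is automatically independent of ${\bf t}$.

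The main obstacle is the normal-ordering anomaly encoded in $\mu$: the naive diagonal sum $\sum_i A_{ii}$ diverges, because ${\mathtt l}_k$ and ${\mathtt m}_k$ are not of trace class, so the eigenvalue is well-defined only after subtracting the corresponding sum over the reference space $H_+$. Controlling this difference is exactly where the Grassmannian structure enters: the admissibility clause of Definition \ref{Defadm} guarantees that in an admissible basis ${\mathcal W}$ and $H_+$ are commensurable up to a trace-class correction, so the difference of the two diagonal sums converges to a finite $\mu$, which is the anomaly responsible for the central term in \eqref{comm}.
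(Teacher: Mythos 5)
Your proposal is correct, and it is essentially the same argument the paper relies on: the paper states this lemma without proof, deferring to the representation theory of the central extension of $\gl(\infty)$ in the cited references (Kac--Schwarz, Fukuma--Kawai--Nakayama, Alexandrov), and what you describe — realizing $\tau_{\mathcal W}$ as a semi-infinite wedge, noting that stabilization forces the second-quantized operator to act diagonally on $\Phi_1^{\mathcal W}\wedge\Phi_2^{\mathcal W}\wedge\cdots$ since every off-diagonal replacement repeats a factor, and controlling the eigenvalue via the normal-ordering subtraction and the trace-class clause of admissibility — is precisely that standard argument spelled out. The one step you defer, checking that under bosonization the operators ${\mathtt j}_k$, ${\mathtt l}_k$, ${\mathtt m}_k$ of \eqref{virw} lift exactly to $\widehat{J}_k$, $\widehat{L}_k$, $\widehat{M}_k$, is indeed a routine mode-matching computation and is the reason for the particular normalizations chosen there.
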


\section{BGW tau-function}\label{S3}

In this section, we remind the reader some basic properties of the BGW tau-function, for more detail see \cite{Mironov:1994mv,KS2}. Here we basically follow \cite{KS2}. We also derive a canonical pair of KS operators for the BGW tau-function. 

\subsection{BGW tau-function}

The BGW  model \cite{Brezin:1980rk,Gross:1980he}, also known as the Gross--Witten--Wadia model \cite{Wadia}, is given by the unitary matrix integral
\begin{equation}\label{umm}
Z_{\BGW}=\int_{U(M)}[dU]e^{\frac{1}{\hbar}\Tr\left(A^{\dagger}U+AU^{\dagger}\right)}.
\end{equation}
Here $M\in {\mathbb Z}_{>0}$, and $[dU]$ is the Haar measure on the unitary group $U(M)$ normalized by $\int_{U(M)} [dU] =1$;  $A$ and $A^{\dagger}$ are two external matrices. Using the invariance properties of the Haar measure it is easy to show that the partition function depends on their product, so it is convenient to introduce
\begin{equation}
\Lambda=(A^{\dagger}A)^{\frac{1}{2}}.
\end{equation}
For simplicity we assume that $\Lambda=\diag(\lambda_1,\dots,\lambda_M)$ is a positive  defined diagonal matrix and $\hbar$ is positive.

Let us focus on the expansion of the partition function for large values of the eigenvalues of the matrix $\Lambda$. To describe this expansion it is more convenient to use another matrix integral representation of the BGW partition function. Namely, according to Mironov--Morozov--Semenoff \cite{Mironov:1994mv} the partition function can be described by a particular generalized Kontsevich model
\begin{equation}\label{bgwp}
Z_{\BGW}=\frac{\displaystyle{\int [d \Phi] \exp\left(\frac{1}{\hbar}\text{Tr}\left(\frac{\Lambda^2 \Phi}{2}+\frac{1}{2 \Phi}-\hbar M\log  \Phi   \right)     \right)}}{\displaystyle{\int [d \Phi] \exp\left(\text{Tr}\left(\frac{1}{2\hbar \Phi}-M\log \Phi   \right)     \right)}}.
\end{equation}
The measure of integration here
\be\label{flatmeas}
[ d\Phi]:=\frac{1}{\prod_{k=1}^{M-1}k!} \prod_{i<j}d \Im \Phi_{ij} \,d \Re \Phi_{ij}\, \prod_{i=1}^M d \Phi_{ii}
\ee
is a flat measure associated with the space of $ M\times M$ Hermitian matrices.

In equation (\ref{bgwp}),  the integration is taken over $M\times M$ normal matrices $\Phi$, 
\begin{equation}\label{Phid}
\Phi=U ~\text{diag}(\varphi_1, \varphi_2,\ldots, \varphi_M)U^{\dagger},\quad \quad \varphi_i\in \gamma,
\end{equation}
where $U$ is a unitary matrix and the contour $\gamma$ (aka Hankel's loop) runs from $-\infty$ around the origin counter clockwise and returning to $-\infty$. Then, the measure of integration can be expressed in terms of $U$ and the eigenvalues $\varphi_i$ in the standard way
\begin{equation}
[d\Phi]=\Delta(\varphi)^2[dU]\prod_i^Md\varphi_i.
\end{equation}

Using the Harish-Chandra--Itzykson--Zuber formula we reduce the expression \eqref{bgwp} to the integral over the eigenvalues
\begin{equation}\label{BGWmod}
Z_{\BGW}=(-1)^{\frac{M(M-1)}{2}}\prod_1^M(j-1)!\frac{\text{det}_{i,j=1}^{M}\,\mathcal I_{M-i}(\lambda_j)}{\Delta\left(\frac{{\lambda}^2}{2\hbar}\right)},
\end{equation}
where
\begin{equation}\label{mbf1}
\mathcal I_{\nu}(z)=\frac{1}{2 \pi \iota}\int_{\gamma}e^{\frac{ z^2{\varphi}}{2\hbar}+\frac{1}{2\hbar{\varphi}}}\frac{d{\varphi}}{{\varphi}^{\nu+1}},
\end{equation}
and  $\iota=\sqrt{-1}$.
Function $\mathcal I_{\nu}(z)$ can be written as 
\begin{equation}
\mathcal I_{\nu}(z)=z^\nu I_{\nu}(z/\hbar),
\end{equation}
where $ I_{\nu}(z)$  is the modified Bessel function defined as
\begin{equation}\label{mbf}
 I_{\nu}(z)=\left(\frac{2}{z}\right)^{\nu}\frac{1}{2 \pi \iota}\int_{\gamma}e^{\frac{ z^2{\varphi}}{4}+\frac{1}{{\varphi}}}\frac{d{\varphi}}{{\varphi}^{\nu+1}}.
\end{equation}

Let $\tilde\lambda_j=\frac{\lambda_j}{\hbar}$, we consider a diagonal matrix, 
\begin{equation}
\tilde{\Lambda}=\Lambda/\hbar=\diag(\tilde\lambda_1,\tilde\lambda_2,\ldots,\tilde\lambda_M).
\end{equation}
We introduce
\begin{equation}
\tau_{\BGW}([\Lambda^{-1}])=\mathcal C^{-1}_{\BGW}Z_{\BGW},
\end{equation}
where
\begin{equation}
\mathcal C_{\BGW}=\frac{{ \Delta(\lambda)e^{\text{Tr}~\tilde{\Lambda}}\prod_{i=1}^{M}(j-1)!}}{{ \Delta\left(\frac{\lambda^2}{2\hbar}\right)\left(\frac{2\pi}{\hbar}\right)^{\frac{M}{2}}\prod_{i=1}^{M}\lambda_i^\frac{1}{2}}}.
\end{equation}
Then from \eqref{BGWmod} we have
\begin{equation}\label{detbgw}
\tau_{\BGW}([\Lambda^{-1}])=\frac{\text{det}_{i,j=1}^M\Phi_j^{\BGW}(\lambda_i)}{\Delta(\lambda)},
\end{equation}
where $\Phi_j^{\BGW}$ can be written in terms of the modified Bessel function 

\begin{equation}
\Phi_j^{\BGW}(\lambda)=\sqrt{2\pi \tilde{\lambda}}\lambda^{j-1}e^{-\tilde{\lambda}}I_{j-1}(\tilde{\lambda}).
\end{equation}
Here we consider the asymptotic expansion of $\Phi_j^{\BGW}(z)$ for large values of $|z|$, which can be obtained by the stationary phase method in integral \eqref{mbf}, 
\begin{equation}\label{phiexp}
\Phi_j^{\BGW}(z)=z^{j-1}\left(1+\sum_{k=1}^{\infty}\frac{(-\hbar)^k}{z^k}\frac{a_k(j)}{8^kk!}\right),
\end{equation}
 where $a_k(j)=(4(j-1)^2-1^2)(4(j-1)^2-3^2)\ldots (4(j-1)^2-(2k-1)^2)$. 
 
 Comparing  \eqref{phiexp} and \eqref{detbgw} with \eqref{goodbas} and \eqref{miwatau} respectively we conclude, that \eqref{detbgw} describes a tau-function of the 
KP hierarchy in the Miwa parametrization.  The vectors $\Phi_j^{\BGW}(z)$ define a point of the Sato Grassmannian, associated with the BGW tau-function
 \be
 {\mathcal W}_{\BGW}=\sppan \{\Phi_1^{\BGW}, \Phi_2^{\BGW},\dots \}\in \rm{Gr}^{(0)}_+.
 \ee
 Moreover, as we will see below, it describes the KdV reduction of the KP hierarchy. The KdV integrability of the BGW model was established by Mironov--Morozov--Semenoff \cite{Mironov:1994mv}.

\subsection{KS description of the BGW tau-function}
Operators
  \begin{equation}
\begin{split}\label{KSBGW}
  {\mathtt a}_{\BGW}={z}\frac{\partial}{\partial z}-\frac{1}{2}+\frac{z}{\hbar},\quad \quad
{\mathtt b}_{\BGW}=z^{2}
\end{split}
\end{equation}
stabilize the point $ {\mathcal W}_{\BGW}$ of the Sato Grassmannian \cite{Mironov:1994mv},
\be
  {\mathtt a}_{\BGW} \cdot {\mathcal W}_{\BGW} \in {\mathcal W}_{\BGW},\quad \quad  {\mathtt b}_{\BGW} \cdot {\mathcal W}_{\BGW} \in {\mathcal W}_{\BGW}.
\ee
Therefore, they are the KS operators for this point. These operators satisfy the commutation relation $\left[  {\mathtt a}_{\BGW},  {\mathtt b}_{\BGW}\right]=2{\mathtt b}_{\BGW}$.

Then, Lemma \ref{lemma33} allows one to construct linear constraints for the BGW tau-function. 
The KS operators ${\mathtt b}_{\BGW}^k$ describe a KdV reduction of the tau-function
\be\label{rc}
\frac{\p}{\p t_{2k}} \tau_{\BGW}=0.
\ee 
KS operators ${\mathtt b}_{\BGW}^k{\mathtt a}_{\BGW}$ imply the Virasoro constraints
\be\label{vc}
\left(\frac{1}{2}\widehat{L}_{2k}-\frac{1}{2\hbar}\frac{\p}{\p t_{2k+1}}+\frac{\delta_{k,0}}{16}\right)\cdot \tau_{\BGW}=0
\ee
for $k\geq 0$. These Virasoro constraints for the unitary matrix integral description of the BGW tau-function \eqref{umm} are derived by Gross and Newman \cite{GN}.

The BGW tau-function also satisfies the dimension constraint 
\begin{equation}\label{dimV}
 \hbar \frac{\p}{\p \hbar}  {\tau}_{\BGW}=\widehat{L}_0\cdot {\tau}_{\BGW}.
\end{equation}
Let us consider the operator 
\begin{equation}\label{CAJ}
\begin{split}
\widehat{W}_{\BGW}=\sum_{k,m\in {\mathbb Z}_{\odd}^+}^\infty \left(kmt_{k}t_{m}\frac{\p}{\p t_{k+m-1}}+\frac{1}{2}(k+m+1)t_{k+m+1}\frac{\p^2}{\p t_k \p t_m}\right)+\frac{t_1}{8}.
\end{split}
\end{equation}
From the constraints \eqref{rc}, \eqref{vc}, and \eqref{dimV}, satisfied by $\tau_{\BGW}$, it follows that the BGW tau-function satisfies the equation
\be
\frac{\p}{\p \hbar}\tau_{\BGW} =\widehat{W}_{\BGW}\cdot \tau_{\BGW}.
\ee
The solution normalized by $\tau_{\BGW}({\bf 0})=1$ is given by the following theorem. 
\begin{theorem}[\cite{KS2}]\label{TW}
\be\label{Wa}
\tau_{\BGW}=\exp\left({\hbar \widehat{W}_{\BGW}}\right)\cdot 1.
\ee
\end{theorem}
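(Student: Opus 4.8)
The plan is to read \eqref{Wa} as the integration of the first-order linear evolution equation $\frac{\p}{\p \hbar}\tau_{\BGW}=\widehat{W}_{\BGW}\cdot\tau_{\BGW}$ recorded just above the theorem, whose generator $\widehat{W}_{\BGW}$ is independent of $\hbar$. Since an $\hbar$-independent generator integrates to an exponential, the only points requiring care are: giving $\exp(\hbar\widehat{W}_{\BGW})\cdot 1$ a meaning as a formal power series, identifying the value of $\tau_{\BGW}$ at $\hbar=0$, and uniqueness of the solution.

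First I would record a grading. Assigning $\deg t_k=k$ and inspecting the three types of terms in $\widehat{W}_{\BGW}$, the term $kmt_kt_m\frac{\p}{\p t_{k+m-1}}$ has net degree $k+m-(k+m-1)=1$, the term $\frac12(k+m+1)t_{k+m+1}\frac{\p^2}{\p t_k\p t_m}$ has degree $(k+m+1)-k-m=1$, and $\frac{t_1}{8}$ has degree $1$; hence $\widehat{W}_{\BGW}$ is homogeneous of weight $+1$. Consequently $\widehat{W}_{\BGW}^N\cdot 1$ is a weighted-homogeneous polynomial of degree $N$, so a fixed monomial in the $t_k$ of weight $w$ receives a contribution only from the single summand $\frac{\hbar^w}{w!}\widehat{W}_{\BGW}^w\cdot 1$. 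Thus every coefficient of $\exp(\hbar\widehat{W}_{\BGW})\cdot 1=\sum_{N\ge0}\frac{\hbar^N}{N!}\widehat{W}_{\BGW}^N\cdot 1$ is a finite sum, and the right-hand side of \eqref{Wa} is a well-defined element of ${\mathbb C}[\![\hbar,t_1,t_2,\dots]\!]$.

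Next I would pin the initial data. Expanding $\tau_{\BGW}=\sum_{N\ge0}\hbar^N\tau^{(N)}$, the dimension constraint \eqref{dimV}, $\hbar\frac{\p}{\p\hbar}\tau_{\BGW}=\widehat{L}_0\cdot\tau_{\BGW}$ with $\widehat{L}_0=\sum_{k\ge1}kt_k\frac{\p}{\p t_k}$, equates coefficients of $\hbar^N$ to give $N\tau^{(N)}=\widehat{L}_0\cdot\tau^{(N)}$, i.e. each $\tau^{(N)}$ is weighted-homogeneous of degree $N$; in particular $\tau^{(0)}$ is a constant, and the normalization $\tau_{\BGW}({\bf 0})=1$ then forces $\tau^{(0)}=1$, that is $\tau_{\BGW}\big|_{\hbar=0}=1$. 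The same reasoning gives $\bigl(\exp(\hbar\widehat{W}_{\BGW})\cdot 1\bigr)\big|_{\hbar=0}=1$.

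Finally, comparing coefficients of $\hbar^N$ in $\frac{\p}{\p\hbar}\tau_{\BGW}=\widehat{W}_{\BGW}\cdot\tau_{\BGW}$ yields the recursion $\tau^{(N+1)}=\frac{1}{N+1}\widehat{W}_{\BGW}\cdot\tau^{(N)}$, which together with $\tau^{(0)}=1$ determines every $\tau^{(N)}$ uniquely; since the coefficients $\frac1{N!}\widehat{W}_{\BGW}^N\cdot 1$ of $\exp(\hbar\widehat{W}_{\BGW})\cdot 1$ obey the identical recursion with the identical seed, the two formal series coincide, proving \eqref{Wa}. The step I expect to be the genuine obstacle is not this integration but the input borrowed from the text, namely the passage from the KdV reduction and Virasoro family \eqref{rc}--\eqref{vc} together with \eqref{dimV} to the single cubic evolution equation $\frac{\p}{\p\hbar}\tau_{\BGW}=\widehat{W}_{\BGW}\cdot\tau_{\BGW}$: reorganizing the infinite family of bilinear constraints into one $W$-type operator acting purely in the $t$-variables is the substantive part, and the weight-$+1$ homogeneity of $\widehat{W}_{\BGW}$ established above is precisely what makes this reorganization compatible with the $\hbar$-grading.
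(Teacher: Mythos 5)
Your proposal is correct and follows essentially the same route as the paper: the paper (citing \cite{KS2}) derives $\frac{\p}{\p \hbar}\tau_{\BGW}=\widehat{W}_{\BGW}\cdot \tau_{\BGW}$ from the constraints \eqref{rc}, \eqref{vc}, \eqref{dimV} and presents \eqref{Wa} as the solution normalized by $\tau_{\BGW}({\bf 0})=1$, exactly the integration step you carry out. Your explicit verification of the weight-$(+1)$ homogeneity of $\widehat{W}_{\BGW}$ (making the exponential well defined and matching the $\hbar$-grading of \eqref{rec0}) is a worthwhile detail the paper leaves implicit, but it does not change the argument.
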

Let us consider the {\em topological expansion} of the tau-function,
\be
\tau_{\BGW}= \sum_{k=0}^\infty \tau_{\BGW}^{(k)}\hbar^k. 
\ee
The coefficients $\tau_{\BGW}^{(k)}$ are homogenous polynomials in ${\bf t}$.
They satisfy a linear recursion relation
\be\label{rec0}
 \tau_{\BGW}^{(k)} =\frac{\widehat{W}_{\BGW}}{k}\cdot  \tau_{\BGW}^{(k-1)}.
\ee
We call it the {\em algebraic topological recursion} to distinguish it from the Chekhov--Eynard--Orantin topological recursion.
\begin{remark}
According to Norbury's conjecture \cite{Norb}, recently proven in \cite{CGG}, the BGW tau-function is the generating function of certain intersection numbers on the moduli spaces of punctured Riemann surfaces. For this interpretation of the BGW tau-function the parameter $\hbar$ describes the expansion in the Euler characteristic of the punctured Riemann surface, and the operator $\widehat{W}_{\BGW}$ describes the change of topology. 
That is why slightly abusing notation we call such operators the {\em cut-and-join operators}. 
\end{remark}
The operator $\widehat{W}_{\BGW}$ consists of an infinite number of terms, however, on each step of the recursion \eqref{rec0} only a finite number of them contribute. Hence, the recursion is given by the action of the polynomial differential operators on polynomials.

\subsection{Canonical KS operators}\label{S3.3}

In  \cite{KS2} it was shown that the KS operators ${\mathtt a}_{\BGW}$ and 
  \begin{equation}
\begin{split}
 {\mathtt c}_{\BGW}&=\frac{1}{{\mathtt b}_{\BGW}}  {\mathtt a}_{\BGW}^2\\
 &=\frac{\partial^2}{\partial z^2}+\frac{2}{\hbar}\frac{\partial}{\partial z}+\frac{1}{\hbar^2}+\frac{1}{4z^2}
\end{split}
\end{equation}
completely specify the point ${\mathcal W}_{\BGW}$ of the Sato Grassmannian.
In particular, it means that the canonical KS operators ${\mathtt P}_{\BGW}$ and ${\mathtt Q}_{\BGW}$ can be expressed in terms of ${\mathtt a}_{\BGW}$ and ${\mathtt c}_{\BGW}$. 
Let us find the canonical pair of KS operators for the BGW tau-function.
\begin{proposition}\label{Prop3.1}
For the BGW tau-function the canonical pair of KS operators is given by
 \begin{equation}
\begin{split}\label{QQ}
{\mathtt P}_{\BGW}&=\sqrt{{\mathtt c}_{\BGW}} -\frac{1}{\hbar},\\
{\mathtt Q}_{\BGW}&=\left({\mathtt a}_{\BGW}+\frac{1}{2}\right)\frac{1}{\sqrt{{\mathtt c}_{\BGW}}}.
\end{split}
\end{equation}
 \end{proposition}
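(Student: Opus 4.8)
The plan is to exploit the uniqueness built into the notion of canonical operators, so that essentially no commutator computation is needed up front. As recorded after the definition of the canonical operators (via \cite[Corollary 2.2]{H3_2}, $\mathcal{A}_{\mathcal W}\cap\mathcal{D}_-=0$), there is at most one KS operator in each affine subspace $\partial_z+\mathcal{D}_-$ and $z+\mathcal{D}_-$, and the canonical pair realizes both. Hence it suffices to prove two things: first, that the operators in \eqref{QQ} are well-defined elements of $\mathcal{D}$ with $\mathtt{P}_{\BGW}\in\partial_z+\mathcal{D}_-$ and $\mathtt{Q}_{\BGW}\in z+\mathcal{D}_-$; second, that both stabilize $\mathcal{W}_{\BGW}$. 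Identification with the canonical pair is then automatic, and with it both the sharper $\mathtt{P}_{\BGW}\in\partial_z+z^{-1}\mathcal{D}_-$ and the relation $[\mathtt{P}_{\BGW},\mathtt{Q}_{\BGW}]=1$.

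For the first task, write $D:=\partial_z+\hbar^{-1}$, so that $\mathtt{c}_{\BGW}=D^2+\tfrac{1}{4z^2}$ and $\mathtt{a}_{\BGW}+\tfrac12=zD$. The key structural fact, and the source of the ``infinite series'' phenomenon noted in the introduction, is that $D$ is invertible in $\mathcal{D}$: since $\hbar^{-1}\neq0$ one has $D^{-1}=\hbar\sum_{k\ge0}(-\hbar\,\partial_z)^k\in\mathcal{D}$. I would then construct $\mathtt{S}:=\sqrt{\mathtt{c}_{\BGW}}=D+R$ by solving $DR+RD+R^2=\tfrac{1}{4z^2}$ recursively in the filtration by $z$-degree ($\deg z=1$, $\deg\partial_z=-1$), in which $D$ has leading part the scalar $\hbar^{-1}$. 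The recursion is uniquely solvable term by term because the leading coefficient $2\hbar^{-1}$ is invertible, starting from $R=\tfrac{\hbar}{8z^2}+\cdots$; every term lies in $\mathcal{D}$, their $\partial_z$-orders grow without bound (so $\mathtt{S}$ is a genuine infinite-order operator), and $R\in z^{-1}\mathcal{D}_-$. Thus $\mathtt{P}_{\BGW}=D-\hbar^{-1}+R=\partial_z+R\in\partial_z+z^{-1}\mathcal{D}_-$. Inverting, $\mathtt{S}^{-1}=\hbar-\hbar^2\partial_z+\cdots\in\mathcal{D}$, so $\mathtt{Q}_{\BGW}=zD\,\mathtt{S}^{-1}$ has $z$-degree-$1$ part equal to $z$, and a one-line check that the degree-$0$ parts cancel yields $\mathtt{Q}_{\BGW}\in z+\mathcal{D}_-$.

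The substantive task is the stabilization. First I would lift the commutator identity to the square root: from $[\mathtt{a}_{\BGW},z^{-2}]=-2z^{-2}$ and $\mathtt{c}_{\BGW}=z^{-2}\mathtt{a}_{\BGW}^2$ one gets $[\mathtt{c}_{\BGW},\mathtt{a}_{\BGW}]=2\mathtt{c}_{\BGW}$, hence $[\mathtt{c}_{\BGW},zD]=2\mathtt{c}_{\BGW}$. Setting $E:=[\mathtt{S},zD]-\mathtt{S}$ and using $\mathtt{S}^2=\mathtt{c}_{\BGW}$ turns this into $\mathtt{S}E+E\mathtt{S}=0$; since the leading part of $X\mapsto\mathtt{S}X+X\mathtt{S}$ is multiplication by $2\hbar^{-1}$ (invertible) and $E$ has strictly negative top $z$-degree, it follows that $E=0$, i.e. $[\mathtt{S},\mathtt{a}_{\BGW}]=[\mathtt{S},zD]=\mathtt{S}$. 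Next, the modified Bessel equation for $\Phi_1^{\BGW}$ gives $\mathtt{c}_{\BGW}\Phi_1^{\BGW}=\hbar^{-2}\Phi_1^{\BGW}$, so on matching leading terms $\mathtt{S}\Phi_1^{\BGW}=\hbar^{-1}\Phi_1^{\BGW}$ and $\mathtt{P}_{\BGW}\Phi_1^{\BGW}=0$, the quantum spectral curve. Combining the raising relation $\mathtt{a}_{\BGW}\Phi_j^{\BGW}=2(j-1)\Phi_j^{\BGW}+\hbar^{-1}\Phi_{j+1}^{\BGW}$ (so $\mathtt{a}_{\BGW}$ generates the basis from $\Phi_1^{\BGW}$) with $\mathtt{S}\mathtt{a}_{\BGW}=(\mathtt{a}_{\BGW}+1)\mathtt{S}$, the induction $\mathtt{S}\Phi_{j+1}^{\BGW}=\hbar(\mathtt{a}_{\BGW}+1-2(j-1))\mathtt{S}\Phi_j^{\BGW}$ shows $\mathtt{S}\,\mathcal{W}_{\BGW}\subset\mathcal{W}_{\BGW}$, so $\mathtt{P}_{\BGW}$ is a KS operator. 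In the basis $\{\Phi_j^{\BGW}\}$ the operator $\mathtt{S}$ is lower-triangular with constant diagonal $\hbar^{-1}$, hence invertible on $\mathcal{W}_{\BGW}$, giving $\mathtt{S}^{-1}\mathcal{W}_{\BGW}\subset\mathcal{W}_{\BGW}$; and since $zD=\mathtt{a}_{\BGW}+\tfrac12$ is KS, $\mathtt{Q}_{\BGW}=zD\,\mathtt{S}^{-1}$ is KS as well. By the uniqueness above, $(\mathtt{P}_{\BGW},\mathtt{Q}_{\BGW})$ is the canonical pair.

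The main obstacle is exactly this passage from the polynomial generators $\mathtt{a}_{\BGW},\mathtt{c}_{\BGW}$ to the non-polynomial $\mathtt{c}_{\BGW}^{\pm1/2}$: neither the commutator lift $[\mathtt{S},zD]=\mathtt{S}$ nor the stabilization $\mathtt{S}\,\mathcal{W}_{\BGW}\subset\mathcal{W}_{\BGW}$ is formal, since $\mathtt{S}$ does not lie in the algebra generated by $\mathtt{a}_{\BGW}$ and $z^2$. Both rest on the invertibility of $D$ (equivalently $\hbar^{-1}\neq0$) together with the explicit eigenvalue $\mathtt{c}_{\BGW}\Phi_1^{\BGW}=\hbar^{-2}\Phi_1^{\BGW}$ and the raising action on the Bessel basis; everything else is routine $z$-degree bookkeeping.
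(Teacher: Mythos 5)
Your proof is correct, but it reaches the two substantive claims (that $\sqrt{\mathtt{c}_{\BGW}}$ and its inverse stabilize $\mathcal{W}_{\BGW}$) by a genuinely different route than the paper. The paper's key observation is that $\sqrt{\mathtt{c}_{\BGW}}=\frac{1}{\hbar}\sqrt{1+{\mathtt A}}$ with ${\mathtt A}=\hbar^2{\mathtt c}_{\BGW}-1$ itself a KS operator; expanding the square root (and, implicitly, its reciprocal) as a binomial series in ${\mathtt A}$ --- convergent in ${\mathcal D}$ because ${\mathtt A}$ strictly lowers the $z$-degree --- exhibits both operators as limits of elements of the algebra generated by KS operators, so stabilization is essentially formal and no computation on the Bessel basis is needed. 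This also undercuts your closing remark: $\mathtt{S}=\sqrt{{\mathtt c}_{\BGW}}$ \emph{does} lie in the completion of the algebra generated by the KS operator ${\mathtt c}_{\BGW}$, and that is exactly the paper's shortcut. Your route instead establishes the intertwining relation $[\mathtt{S},{\mathtt a}_{\BGW}]=\mathtt{S}$ by the anticommutator-uniqueness trick, feeds in the eigenvalue equation $\mathtt{S}\cdot\Phi_1^{\BGW}=\hbar^{-1}\Phi_1^{\BGW}$, and propagates along the raising relation (the $m=1$ case of \eqref{rec}). This is longer, but it buys something the paper leaves implicit: the triangular action of $\mathtt{S}$ on the admissible basis with constant diagonal $\hbar^{-1}$ proves that $\mathtt{S}$ is \emph{surjective} on $\mathcal{W}_{\BGW}$, which is precisely what is needed to conclude that $\mathtt{S}^{-1}$, and hence the operator in \eqref{canb}, is KS --- a point the paper simply asserts. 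Two cautions. First, your inputs (the Bessel equation ${\mathtt c}_{\BGW}\cdot\Phi_1^{\BGW}=\hbar^{-2}\Phi_1^{\BGW}$ and the raising action of ${\mathtt a}_{\BGW}$) must be cited from their independent derivation by integration by parts (the $m=1$ cases of \eqref{rec} and \eqref{dKS}), not from the quantum-spectral-curve discussion that the paper \emph{deduces} from this proposition; as written you are not circular, but this should be made explicit. Second, checking that ``the degree-$0$ parts cancel'' does not by itself give ${\mathtt Q}_{\BGW}\in z+{\mathcal D}_-$, since negative $z$-degree does not imply membership in ${\mathcal D}_-$ (e.g.\ $z\frac{\p^2}{\p z^2}$ has degree $-1$ but is not in ${\mathcal D}_-$); the honest one-line check is $zD\,\mathtt{S}^{-1}=z\left(1+RD^{-1}\right)^{-1}=z+\sum_{k\geq 1}(-1)^k z\left(RD^{-1}\right)^k$, where every summand lies in ${\mathcal D}_-$ because $R\in z^{-1}{\mathcal D}_-$ and $D^{-1}$ has constant coefficients.
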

 
 \begin{proof} 
To prove the first equation we need to show that $\sqrt{{\mathtt c}_{\BGW}} -\frac{1}{\hbar}\in \frac{\p}{\p z}+{\mathcal D}_-$ and that it is a KS operator for $\mathcal{W}_{\BGW}$.
Consider the operator 
\be
{\mathtt c}_0=\left(\frac{\partial}{\partial z}+\frac{1}{\hbar}\right)^2.
\ee
This is not a KS operator for $\mathcal{W}_{\BGW}$. We have
$\sqrt{{\mathtt c}_0}-\frac{1}{\hbar}=\frac{\p}{\p z}$.
Since $ {\mathtt c}_{\BGW}-{\mathtt c}_0=\frac{1}{4z^2}$,  we have
\be\label{f1}
\sqrt{{\mathtt c}_{\BGW}}-\frac{1}{\hbar}= \frac{\p}{\p z} + {\mathtt d},
\ee
where ${\mathtt d} \in  z^{-1}{\mathcal D}_-$.
Moreover, 
\be
\sqrt{{\mathtt c}_{\BGW}}=\frac{1}{\hbar}\sqrt{1+{\mathtt  A}}=\frac{1}{\hbar}\left(1+\frac{{\mathtt  A}}{2}-\frac{{\mathtt  A}^2}{8}+\dots\right)
\ee
where
  \be
{\mathtt  A}=\hbar^2{\mathtt c}_{\BGW}-1= \hbar^2\frac{\partial^2}{\partial z^2}+2\hbar\frac{\partial}{\partial z}+\frac{\hbar^2}{4z^2}
  \ee
is a KS operator. Therefore, $\sqrt{{\mathtt c}_{\BGW}}$ is a KS operator, and from \eqref{f1} we have the first equation in \eqref{QQ}.

We also have
\be\label{canb}
\left({\mathtt a}_{\BGW}+\frac{1}{2}\right)\frac{1}{\sqrt{{\mathtt c}_{\BGW}}}=z\left(\frac{\partial}{\partial z}+\frac{1}{\hbar}\right)\frac{1}{\frac{\partial}{\partial z}+\frac{1}{\hbar}+{\mathtt d}}\in z+ {\mathcal D}_-.
\ee
The operator in \eqref{canb} is a KS operator, therefore it coincides with ${\mathtt Q}_{\BGW}$. This completes the proof.
\end{proof}
 
The canonical pair of KS operators completely specify the point of the Sato Grassmannian  \cite{H3_2}. In particular,
the operator ${\mathtt P}_{\mathcal W}$ annihilates the first basis vector, hence it defines the quantum spectral curve. For  $\mathcal{W}_{\BGW}$ we have
\be
{\mathtt P}_{\BGW}\cdot \Phi_1^{\BGW}=0,
\ee
or, equivalently,
\be\label{eqs}
\sqrt{{\mathtt c}_{\BGW}}\cdot\Phi_1^{\BGW} = \frac{1}{\hbar}\Phi_1^{\BGW}.
\ee
This equation has a unique monic solution in the space of formal Laurent series ${\mathbb C}(\!(z^{-1})\!)$. From \eqref{eqs} we have
\be
\left({\mathtt c}_{\BGW}- \frac{1}{\hbar^2}\right)\cdot\Phi_1^{\BGW} =0,
\ee 
or
\be
\left(\frac{\partial^2}{\partial z^2}+\frac{2}{\hbar}\frac{\partial}{\partial z}+\frac{1}{4z^2}\right)\cdot\Phi_1^{\BGW} =0.
\ee
This is the quantum Bessel curve equation  \cite{KS2,DN2}. All higher basis vectors can be obtained by the action of the raising operator ${\mathtt Q}_{\BGW}$, see \eqref{highQ}.
\begin{remark}
In \cite{H3_2} the canonical KS operators were constructed for the Kontsevich--Witten tau-function, a close cousin of the BGW tau-function. Here we see a difference in the KS description of two models. While the canonical KS operators for the Kontsevich--Witten tau-function are polynomial in $\frac{\p}{\p z}$, $z$ and $z^{-1}$, they are infinite sums in $z^{-1}$ and $\frac{\p}{\p z}$ for the BGW case.
\end{remark}

\section{Higher BGW tau-functions}\label{S4}

In this section, we investigate the higher BGW tau-functions, introduced by Mironov--Morozov--Semenoff \cite{Mironov:1994mv}.
In particular, we construct a complete set of the KS operators,  the quantum spectral curve, the $W^{(3)}$-constraints and  the cut-and-join description for the simplest case $m=2$.

\subsection{Matrix models and tau-functions}\label{S4.1}

Following Mironov--Morozov--Semenoff \cite{Mironov:1994mv} we consider the generalized Kontsevich models with antipolynomial (to be precise, antimonomial) potential
\be\label{Ebgwp}
Z^{(m)}=\frac{\displaystyle{\int [d \Phi] \exp\left(\frac{1}{\hbar}\text{Tr}\left(\frac{1}{m+1} \Lambda^{m+1} \Phi+\frac{1}{m(m+1)  \Phi^m}-\hbar M\log  \Phi   \right)     \right)}}{\displaystyle{\int [d \Phi] \exp\left(\frac{1}{\hbar}\text{Tr}~~\left(\frac{1}{m(m+1)  \Phi^m}-\hbar M\log \Phi   \right)     \right)}}. 
\ee
Below we call them the {\em higher BGW} models. Original BGW model \eqref{bgwp} corresponds to $m=1$, 
\be
Z_{\BGW}=Z^{(1)},
\ee
and below we assume that $m \in {\mathbb Z}_{>0}$. In \eqref{Ebgwp} we integrate over the $M\times M$ normal matrices $\Phi$ associated to the Hankel's loop $\gamma$, see \eqref{Phid}. For simplicity we again assume that $\Lambda=\diag(\lambda_1,\dots,\lambda_M)$ is a positive  defined diagonal matrix and $\hbar$ is positive. For arbitrary complex $\lambda_j$ and $\hbar$ the contours defining the normal matrix $\Phi$ would be rotated correspondingly.

After the integration over the unitary group with the help of the Harish-Chandra--Itzykson--Zuber formula the matrix model \eqref{Ebgwp} reduces to
\begin{equation}\label{detH}
Z^{(m)}=(-1)^{\frac{M(M-1)}{2}}\prod_{j=1}^M(j-1)!  \frac{\text{det}_{i,j=1}^{M}\left(\mathcal I_{M-i}^{(m)}(\lambda_j)  \right)}{\Delta\left({\frac{{\lambda}^{m+1}}{(m+1)\hbar}}\right)},
\end{equation}
where
\begin{equation}\label{bess}
 \mathcal I_{\nu}^{(m)}(z)=\frac{1}{2 \pi \iota}\int_{\gamma}e^{\frac{1}{m(m+1)\hbar}\left(mz^{m+1}\varphi+\frac{1}{\varphi^{m}}\right)}\frac{d \varphi}{\varphi^{\nu+1}}.
\end{equation}
Note that, $ \mathcal I_{\nu}^{(1)}(z)= \mathcal I_{\nu}(z)$ given by equation (\ref{mbf1}).

In this paper we are focused on the asymptotic expansion of the partition function $Z^{(m)}$ for the large eigenvalues $\lambda_i$. This allows us to associate a tau-function of the KP hierarchy with a higher BGW model.
With the quasi-classical prefactor
\be\label{qqpref}
{\mathcal C}_m=\frac{{ \Delta(\lambda)e^{\frac{\text{Tr} {\Lambda}^m}{m\hbar}}\prod_{i=1}^{M}(j-1)!}}{{ \Delta\left(\frac{\lambda^{m+1}}{(m+1)\hbar}\right) \left(\frac{2\pi}{\hbar}\right)^{\frac{M}{2}}\prod_{i=1}^{M}\lambda_i^\frac{m}{2}}},
\ee
we have the determinant formula
\be\label{qqpref1}
{\mathcal C}_m^{-1} Z^{(m)}=\frac{\text{det}_{i,j=1}^M\Phi_j^{(m)}(\lambda_i)}{\Delta(\lambda)}.
\ee
Here
\begin{equation}
\begin{split}\label{bv}
\Phi_j^{(m)}(z)&=\sqrt{\frac{2\pi {z}^m}{\hbar}}e^{-\frac{z^m}{m\hbar}}\mathcal I_{j-1}^{(m)}({z})\\
&=\sqrt{\frac{2\pi {z}^m}{\hbar}}e^{-\frac{z^m}{m\hbar}}\frac{1}{2 \pi \iota}\int_{\gamma}e^{\frac{1}{m(m+1)\hbar}\left(m z^{m+1}\varphi+\frac{1}{\varphi^{m}}\right)}\frac{d \varphi}{\varphi^{j}},
\end{split}
\end{equation}
and  in the last line we assume the asymptotic expansion of the integral at large positive $z$.

Let us consider this asymptotic expansion in more detail. For the potential
\be
V(\varphi)=\frac{1}{m(m+1)\hbar}\left(m z^{m+1}\varphi+\frac{1}{\varphi^{m}}\right)
\ee
we consider the critical points of the potential satisfying
\be
\frac{\p}{\p \varphi}V(\varphi)=0.
\ee
The solutions $\varphi_k=z^{-1} e^{\iota \frac{2 \pi k}{m+1}}$, $k\in \{0,1,\dots,m\}$ correspond to the roots of the unity. The contour of integration $\gamma$ can be smoothly deformed without crossing these points in a way that the new contour contains an arc $(z^{-1}-\iota \epsilon,z^{-1}+\iota \epsilon)$ with small $\epsilon$. Here $z$ is assumed to be positive. The steepest descent method allows us to find the asymptotic expansion of the integral by the expansion in the neighborhood of the stationary point $\varphi_0=z^{-1}$. This point corresponds to the dominant contribution in \eqref{bv}, contributions of the other critical points are suppressed and can be neglected. 
We have $V''(\varphi_0)>0$, therefore it is convenient to consider a change of variables $\varphi\mapsto \varphi_0+\iota \frac{\varphi}{V''(\varphi_0)}$. Then, the steepest descent method gives the asymptotic expansion
\begin{equation}
\mathcal I_j^{(m)}(z)=\frac{\varphi_0^{-j}e^{V(\varphi_0)}}{ 2\pi\sqrt{V''(\varphi_0)}}\int_{\mathbb R} d\varphi \left(1+\frac{\iota \varphi}{\varphi_0\sqrt{V''(\varphi_0)}}\right)^{-j}e^{-\frac{\varphi^2}{2}+\sum_{\ell=3}^{\infty}t_{\ell}^*\varphi^{\ell}}.
\end{equation} 
Here
\be
t_{\ell}^*=\frac{(\iota)^{\ell}}{\ell ! }\frac{V^{(\ell)}(\varphi_0)}{V''(\varphi_0)^{\frac{\ell}{2}}}. 
\ee
More explicitly, we have
\be
\mathcal I_j^{(m)}(z)=z^{j-1} e^{\frac{z^m}{m \hbar}}\frac{1}{2\pi} \sqrt{\frac{\hbar }{   z^{m}}}\int_{\mathbb R} d\varphi \left(1+ \iota \varphi  \sqrt{\frac{\hbar}{z^{m}}}\right)^{-j}e^{-\frac{\varphi^2}{2}+\sum_{\ell=3}^{\infty}t_{\ell}^*\varphi^{\ell}}
\ee
with
\be
t_{\ell}^*=\frac{(-\iota)^{\ell}}{\ell!}\frac{(m+\ell-1)!}{(m+1)!} \left(\frac{\hbar}{z^m}\right)^{\frac{\ell}{2}-1}.
\ee
and $t_{\ell}^*=0$ for $\ell<3$.
Using this asymptotic expansion of \eqref{bv} we introduce
\begin{definition}\label{def1}
\be
\Phi_j^{(m)}(z)=z^{j-1}\frac{1}{\sqrt{2\pi}}\int_{\mathbb R} d\varphi \left(1+ \iota \varphi  \sqrt{\frac{\hbar}{z^{m}}}\right)^{-j}e^{-\frac{\varphi^2}{2}+\sum_{\ell=3}^{\infty}t_{\ell}^*\varphi^{\ell}}.
\ee
\end{definition}
\begin{remark}\label{R4.1}
These formal series are well defined for all $j\in {\mathbb Z}$ with $\Phi_j^{(m)}(z)=z^{j-1}(1+O(z^{-1}))$.
Moreover, $z^{N}\Phi_{j-N}^{(m)}(z)\in z^{j-1}(1+O(z^{-1}))$ is a well defined formal series of $z^{-1}$ and $N$ for arbitrary $j\in {\mathbb Z}$.
\end{remark}
Consider the expansion
 $e^{\sum_{\ell=3}^{\infty}t_{\ell}^*\varphi^{\ell}}=1+ \sum_{k=3}^\infty p_{k}({\bf t^*})\varphi^k$, where $p_{k}$ is the elementary Schur function. From Definition \ref{def1} we have
\begin{multline}\label{eqex}
\Phi_j^{(m)}(z)=z^{j-1}\left(1+\sum_{k=1}^{\infty}\begin{pmatrix}
j+2k-1 \\
j-1
\end{pmatrix} \left(-\frac{ \hbar}{z^{m}}\right)^{k}(2k-1)!!\right.\\
\left.
+\sum_{\ell=2}^{\infty}p_{2\ell}({\bf t^*})(2\ell-1)!!+g_j(m,z)  \right),
\end{multline}
where
\begin{equation}
\begin{split}
g_j(m,z)&= 
   \sum_{\ell=1}^{\infty}p_{2\ell+1}({\bf t}^*)\sum_{k=0}^{\infty}\begin{pmatrix}
j+2k \\
j-1
\end{pmatrix} \left(\frac{-\iota \hbar^{\frac{1}{2}}}{z^{m/2}}\right)^{2k+1}(2\ell+2k+1)!!\\
&+   \sum_{\ell=2}^{\infty}p_{2\ell}({\bf t}^*)\sum_{k=0}^{\infty}\begin{pmatrix}
j+2k+1 \\
j-1
\end{pmatrix} \left(-\frac{ \hbar}{z^{m}}\right)^{k+1}(2\ell+2k+1)!!
\end{split}
\end{equation}
and $
\begin{pmatrix}
n \\
r
\end{pmatrix}=\frac{n!}{r!(n-r)!}$ are the binomial coefficients.

\subsection{Higher BGW model as a tau-function of KP hierarchy}

The prefactor \eqref{qqpref} is chosen in such a way that
\begin{equation}\label{norm}
\Phi_j^{(m)}(z)=z^{j-1}\left(1+O(z^{-m})\right).
\end{equation}
These $\Phi_j^{(m)}(z)$ constitute an admissible basis for a point in the Sato Grassmannian
\begin{equation}\label{Gp}
\mathcal{W}_{m}=\sppan\{ \Phi_1^{(m)},\Phi_2^{(m)},\Phi_3^{(m)},\ldots\}\in \rm{Gr}^{(0)}_+.
\end{equation}
Therefore, the higher BGW models 
\be\label{HBGW}
\tau^{(m)}\left(\left[\Lambda^{-1}\right]\right)={\mathcal C}_m^{-1} Z^{(m)}
\ee
define tau-functions of the KP hierarchy in the Miwa parametrization
\begin{equation}\label{deter}
\tau^{(m)}\left(\left[\Lambda^{-1}\right]\right)=\frac{\text{det}_{i,j=1}^M\Phi_j^{(m)}(\lambda_i)}{\Delta(\lambda)}.
\end{equation}
We call them the {\em higher BGW tau-functions}. For $m=1$ the tau-function coincides with the BGW tau-function considered in Section \ref{S3}, $\tau^{(1)}=\tau_{\BGW}$ and $\Phi_k^{(1)}=\Phi_k^{\BGW}$. From \eqref{norm} we have $\tau^{(m)}({\bf 0})=1$. Below we investigate these tau-functions.

Let us introduce the Euler operator
\be\label{Eul}
\widehat{D}=\sum_{k=1}^\infty k t_k \frac{\p}{\p t_k},
\ee
which coincides with $\widehat{L}_0$.
\begin{lemma}\label{lemma31}
The tau-function $\tau^{(m)}$ satisfies the homogeneity condition 
\be
\widehat{D}\cdot \tau^{(m)}=m  \hbar \frac{\p}{\p \hbar} \tau^{(m)}. 
\ee
\end{lemma}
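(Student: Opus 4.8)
The plan is to derive the identity from a homogeneity (scaling) symmetry of the basis vectors, transported through the determinant formula \eqref{deter}. The key observation is that in Definition \ref{def1} the variables $z$ and $\hbar$ enter only through the prefactor $z^{j-1}$ and the single combination $\hbar/z^m$: the factor $(1+\iota\varphi\sqrt{\hbar/z^m})^{-j}$ and the coefficients $t_\ell^*=\frac{(-\iota)^\ell}{\ell!}\frac{(m+\ell-1)!}{(m+1)!}(\hbar/z^m)^{\ell/2-1}$ depend on $\hbar/z^m$ alone, while $\varphi$ is a dummy variable. Hence after integration $\Phi_j^{(m)}(z)=z^{j-1}F_j(\hbar/z^m)$ for a formal series $F_j$ in one variable. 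Applying $z\frac{\p}{\p z}+m\hbar\frac{\p}{\p\hbar}$ annihilates the argument $\hbar/z^m$ and acts only on the prefactor, so that
\be
\left(z\frac{\p}{\p z}+m\hbar\frac{\p}{\p\hbar}\right)\Phi_j^{(m)}(z)=(j-1)\,\Phi_j^{(m)}(z).
\ee

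Next I would promote this to the determinant \eqref{deter}. Under the simultaneous rescaling $\lambda_i\mapsto\mu\lambda_i$ (for all $i$) together with $\hbar\mapsto\mu^m\hbar$, each quotient $\hbar/\lambda_i^m$ is invariant, so the entries transform as $\Phi_j^{(m)}(\lambda_i)\mapsto\mu^{j-1}\Phi_j^{(m)}(\lambda_i)$; that is, the $j$-th column of the matrix is multiplied by $\mu^{j-1}$. Consequently the numerator $\det_{i,j=1}^M\Phi_j^{(m)}(\lambda_i)$ scales by $\mu^{\sum_{j=1}^M(j-1)}=\mu^{M(M-1)/2}$, which is exactly the scaling of the Vandermonde $\Delta(\lambda)$. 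Therefore the ratio $G:=\det_{i,j=1}^M\Phi_j^{(m)}(\lambda_i)/\Delta(\lambda)$ is invariant, giving $\left(\sum_{i=1}^M\lambda_i\frac{\p}{\p\lambda_i}+m\hbar\frac{\p}{\p\hbar}\right)G=0$.

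It then remains to convert the eigenvalue dilation back into the Euler operator \eqref{Eul} on the times. From $t_k=\frac1k\sum_i\lambda_i^{-k}$ one computes $\sum_i\lambda_i\frac{\p}{\p\lambda_i}t_k=-kt_k$, so by the chain rule $\sum_i\lambda_i\frac{\p}{\p\lambda_i}G=-\big(\widehat{D}\cdot\tau^{(m)}\big)\big|_{\mathrm{Miwa}}$. Since the Miwa map $\mathbf t(\lambda)$ carries no $\hbar$-dependence, $\frac{\p}{\p\hbar}G=\big(\frac{\p}{\p\hbar}\tau^{(m)}\big)\big|_{\mathrm{Miwa}}$. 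Feeding these two facts into the invariance of $G$ yields $\big(\widehat{D}\cdot\tau^{(m)}\big)\big|_{\mathrm{Miwa}}=m\hbar\big(\frac{\p}{\p\hbar}\tau^{(m)}\big)\big|_{\mathrm{Miwa}}$, valid for every $M\geq1$.

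The last --- and only delicate --- step is to upgrade this family of identities on the Miwa loci to an identity of formal power series in $\mathbf t$, and this is where I expect the main care to be needed. The governing principle is that a series in ${\mathbb C}[\![t_1,t_2,\dots]\!]$ is determined by its Miwa transforms for all $M$ (this is the same mechanism that lets \eqref{miwatau} characterize $\tau^{(m)}$ as a KP tau-function), so a polynomial-differential relation holding on every Miwa locus must hold identically. Granting this, $\widehat{D}\cdot\tau^{(m)}=m\hbar\frac{\p}{\p\hbar}\tau^{(m)}$ follows. As a consistency check, for $m=1$ the relation reduces to $\widehat{L}_0\cdot\tau_{\BGW}=\hbar\frac{\p}{\p\hbar}\tau_{\BGW}$, recovering the dimension constraint \eqref{dimV}.
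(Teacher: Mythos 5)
Your proof is correct and takes essentially the same route as the paper: the paper likewise establishes the homogeneity $\Phi_j^{(m)}(z)=z^{j-1}f_j\!\left(\hbar/z^m\right)$ (via the substitution $\varphi\mapsto\varphi/z$ in the contour integral \eqref{bv}, rather than reading it off Definition \ref{def1} as you do) and then states that the lemma follows immediately from the determinant formula \eqref{deter} and the Miwa parametrization \eqref{Miwa}. The column-scaling of the determinant against the Vandermonde, the chain-rule computation $\sum_i\lambda_i\frac{\p}{\p\lambda_i}t_k=-kt_k$, and the Miwa-injectivity step you spell out are precisely the details the paper leaves implicit.
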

\begin{proof}
Up to a simple prefactor, the functions $\Phi_j^{(m)}(z)$ depend only on a combination of $z$ and $\hbar$, namely $\frac{z^m}{\hbar}$.
Indeed, if we change the variable of integration $\varphi \mapsto \varphi/z$ in \eqref{bv}, we have
\be
\Phi_j^{(m)}(z)=z^{j-1} \sqrt{\frac{2\pi {z}^m}{\hbar}}e^{-\frac{z^m}{m\hbar}}\frac{1}{2 \pi \iota}\int_{\gamma}e^{\frac{z^{m}}{m(m+1)\hbar}\left(m\varphi+\frac{1}{\varphi^{m}}\right)}\frac{d \varphi}{\varphi^{j}}=z^{j-1}f_j\left(\frac{\hbar}{z^m}\right)
\ee
for some functions $f_j$. Then the statement of the lemma follows immediately from the determinant formula (\ref{deter}) and the definition of the Miwa parametrization \eqref{Miwa}.
\end{proof}

Using the expansion \eqref{eqex} it is easy to find explicitly the leading coefficients of the expansion of basis vectors 
\be
\Phi_j^{(m)}=z^{j-1}\left(1+\sum_{k=1}^\infty\Phi_{j,k}^{(m)} \left(\frac{\hbar}{z^k}\right)^m\right),
\ee
they are presented in Appendix \ref{A1}. They are polynomials, $\Phi_{j,k}^{(m)} \in {\mathbb Q}[j,m]$.
For $m=1$ the coefficients are given by \eqref{phiexp}.

Let us consider the perturbative expansion of the functions $ \mathcal I_{\nu}^{(m)}(z)$, given by the contour integrals \eqref{bess} for  small $|z|$.
We put 
\be
\chi=\frac{z^{m+1}}{(m+1)\hbar}.
\ee
Let us change the variable of integration $\varphi \mapsto \varphi/\chi$. Then
\begin{equation}
\begin{split}\label{smallzexp}
 \mathcal I_{\nu}^{(m)}(z)&=\chi^\nu \frac{1}{2 \pi \iota}\int_{\gamma}e^{\varphi+\frac{\chi^{m}}{m(m+1)\hbar}\frac{1}{\varphi^{m}}}\frac{d \varphi}{\varphi^{\nu+1}}\\
 &=\chi^\nu\sum_{k=0}^{\infty}\frac{1}{k!}\left(\frac{\chi^{m}}{m(m+1)\hbar}\right)^k\frac{1}{\Gamma(k m+\nu+1)}\\
 &= \frac{\chi^\nu}{\Gamma(\nu+1)}{}_0F_m \left(;\left\{\frac{1+\nu}{m},\frac{2+\nu}{m}\ldots,\frac{m+\nu}{m}\right\};\frac{\chi^m}{m^{m+1}(m+1)\hbar}\right),
\end{split}
\end{equation}
where we use the Hankel's loop integral formula for the reciprocal gamma function
\be
\frac{1}{\Gamma(\nu+1)}= \frac{1}{2 \pi \iota}\int_{\gamma}e^{\varphi}\frac{d \varphi}{\varphi^{\nu+1}}.
\ee
Here the generalized hypergeometric function ${}_0F_m \left(;\{b_1,b_2,\ldots,b_m\};x\right)$ is defined by
\begin{align}
{}_0F_m \left(;\{b_1,b_2,\ldots,b_q\};x\right)&=\sum_{k=0}^{\infty}\frac{1}{(b_1)_k(b_2)_k\ldots(b_m)_k}\frac{x^k}{k!},
\end{align}
where $(b)_k=\frac{\Gamma(b+k)}{\Gamma(b)}$. For $m=1$ this expression is closely related to the expansion of the modified Bessel function \eqref{mbf}. 

For large values of $|z|$ the coefficients of $\Phi_j$'s also coincide with the coefficients of the assymptotic expansion of the of hyper-geometric functions. We have compared the first three terms of these expressions for $m=1$ and $m=2$ case and found a perfect agreement.

\subsection{KS algebra for higher BGW tau-functions}\label{kso}
In the previous section, we construct the points of the Sato Grassmannian associated with the higher BGW tau-functions. In this section, we discuss the KS algebras for these points. Description of the higher BGW tau-functions by the KS operators was initiated by Mironov--Morozov--Semenoff \cite{Mironov:1994mv}, however, their description is incomplete. Here we provide a complete description, in particular, construct a canonical pair of KS operators.
 
Consider the operators
 \begin{equation}\label{ks}
{\mathtt a}_m={z}\frac{\partial}{\partial z}-\frac{m}{2}+\frac{z^m}{\hbar},\quad \quad{\mathtt b}_m=z^{m+1}
\end{equation}
satisfying the commutation relation
\begin{equation}
[{\mathtt a}_m,{\mathtt b}_m]=(m+1){\mathtt b}_m.
\end{equation}
Using the integration by parts in the integral representation of $ \Phi_j^{(m)}$, see \eqref{bv}, one can show that
\begin{equation}
\begin{split}\label{rec}
{\mathtt a}_m\cdot \Phi_j^{(m)} &={(j-1)(m+1)}\Phi_j^{(m)}+\frac{1}{\hbar}\Phi_{j+m}^{(m)},\\
{\mathtt b}_m\cdot \Phi_j^{(m)}&=j (m+1)\hbar \Phi_{j+1}^{(m)}+  \Phi_{j+m+1}^{(m)}.
\end{split}
\end{equation}
Therefore, the operators ${\mathtt a}_m$ and  ${\mathtt b}_m$ stabilize the point  \eqref{Gp} of the Sato Grassmannian 
\begin{equation}
{\mathtt a}_m\cdot \mathcal{W}_{m} \in \mathcal{W}_{m},\quad \quad
{\mathtt b}_m\cdot \mathcal{W}_{m} \in \mathcal{W}_{m},
\end{equation}
and ${\mathtt a}_m\in {\mathcal A}_{{\mathcal W}_m} $, ${\mathtt b}_m\in {\mathcal A}_{{\mathcal W}_m} $ are the KS operators.

However, it easy to see that the operators ${\mathtt a}_m$ and ${\mathtt b}_m$ do not generate the KS algebra ${\mathcal A}_{{\mathcal W}_m}$, therefore they do not
specify the point $\mathcal{W}_{m}$ in the Sato Grassmannian. To specify this point uniquely we need to consider some other KS operators.
\begin{remark}
Using integration by parts of the basis vectors \eqref{bv} we have
\begin{equation}\label{rec1}
\frac{\hbar}{{\mathtt b}_m}{\mathtt a}_m\cdot \Phi_j^{(m)}= \Phi_{j-1}^{(m)},
\end{equation}
which implies 
\begin{equation}
\frac{\hbar}{{\mathtt b}_m}{\mathtt a}_m \cdot \Phi_1^{(m)}= \Phi_{0}^{(m)}\notin\mathcal{W}_{m}.
\end{equation}
Hence, ${\mathtt b}_m^{-1}{\mathtt a}_m$  is not a KS operator for $\mathcal{W}_{m}$ \cite{Mironov:1994mv}.
\end{remark}
Let us consider the operators 
\begin{equation}\label{cex1}
\begin{split}
{\mathtt c}_m=\frac{\hbar}{{\mathtt b}_m}{\mathtt a}_m^2, \quad \quad
{\mathtt d}_m=\frac{1}{\hbar\, {\mathtt a}_m} {\mathtt b}_m.
\end{split}
\end{equation} 
These operators belong to ${\mathcal D}$,
\begin{equation}\label{cex}
\begin{split}
{\mathtt c}_m &=\hbar z^{1-m}\frac{\partial^2}{\partial z^2}+\frac{z^{m-1}}{\hbar}+\frac{m^2\hbar }{4z^{m+1}}+\left(2 +\hbar\frac{1-m}{z^m}\right)\frac{\partial}{\partial z},\\ 
{\mathtt d}_m&=\frac{1}{z^{m}\left(1+\hbar z^{-m}\left(z\frac{\p}{\p z}-\frac{m}{2}\right)\right)}z^{m+1}.
\end{split}
\end{equation} 
The operator ${\mathtt d}_m$ is given by a formal series
\be\label{dser}
{\mathtt d}_m=\sum_{k=0}^{\infty}\left(-\hbar z^{-m}\left(z\frac{\p}{\p z}-\frac{m}{2}\right)\right)^kz.
\ee

Combining (\ref{rec}) with (\ref{rec1}) we have
\begin{equation}\label{dKS}
\begin{split}
{\mathtt c}_m\cdot \Phi_j^{(m)}&=(j-1)(m+1)\Phi_{j-1}^{(m)}+\frac{1}{\hbar}\Phi_{j+m-1}^{(m)},\\
{\mathtt d}_m\cdot \Phi_j^{(m)}&= \Phi_{j+1}^{(m)}.
\end{split}
\end{equation} 
Therefore, these operators are KS operators for ${\mathcal W}_m$. The operator ${\mathtt c}_{\BGW}$ considered in Section \ref{S3.3} corresponds to the case $m=1$, ${\mathtt c}_{1}=\hbar{\mathtt c}_{\BGW}$.

The operators ${\mathtt c}_m$ and ${\mathtt d}_m$ satisfy the commutation relation
\be
[{\mathtt c}_m,{\mathtt d}_m]=m+1.
\ee

\begin{remark}
KS operators are related to the certain relations between the Lax operator $L$ and the Orlov-Schulman operator $M$, see e.g. \cite{Cafasso}. Existence of the KS operators
${\mathtt a}_m$ and ${\mathtt b}_m$ implies, in particular, that the operators
\be
P=L^{m+1},\quad \quad Q=ML+\frac{L^m}{\hbar}
\ee
satisfy
\be
(P)_-=(Q)_-=0
\ee
and a version of Douglas's string equation
\be
[P,Q]=(m+1)P.
\ee

Similarly for the operators ${\mathtt c}_m$ and ${\mathtt d}_m$ we consider
\be
\tilde{P}=\hbar M^2 L^{1-m}+\frac{L^{m-1}}{\hbar}+\frac{m^2\hbar }{4L^{m+1}}+M\left(2 +\hbar\frac{1-m}{L^m}\right).
\ee
and
\be
\tilde{Q}=L\sum_{k=0}^{\infty}\left(-\hbar \left(ML-\frac{m}{2}\right)L^{-m}\right)^k.
\ee
These operators also satisfy
\be
(\tilde{P})_-=(\tilde{Q})_-=0
\ee
and a version of Douglas's string equation
\be
[\tilde{P},\tilde{Q}]=(m+1).
\ee
\end{remark}

\begin{proposition}\label{P4.2}
For $m\geq 2$ the KS operators ${\mathtt c}_m$ and ${\mathtt d}_m$ completely specify the point ${\mathcal W}_m$ of the Sato Grassmannian. Namely, the quantum spectral curve operator 
is given by
\be\label{C1}
{\mathtt P}_{{\mathcal W}_m}=\frac{1}{m+1}\left( {\mathtt c}_m-\frac{1}{\hbar}({\mathtt d}_m)^{m-1}\right).
\ee
For $m=2$ we also have
\begin{equation}\label{C2}
\begin{split}
{\mathtt Q}_{{\mathcal W}_2}&=\frac{1}{3}\left(\hbar {\mathtt c}_2+2{\mathtt d}_2\right),\\
\end{split}
\end{equation}
for $m\geq3$ the operator ${\mathtt d}_m$ coincides with a canonical KS operator
\begin{equation}\label{dasQ}
\begin{split}
{\mathtt Q}_{{\mathcal W}_m}&={\mathtt d}_m.
\end{split}
\end{equation}
\end{proposition}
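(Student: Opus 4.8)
The plan is to recognize the operators in the statement as the \emph{canonical} pair $({\mathtt P}_{{\mathcal W}_m},{\mathtt Q}_{{\mathcal W}_m})\in\Gr_{\mathcal D}$ attached to ${\mathcal W}_m$. By \eqref{dKS} the operators ${\mathtt c}_m$ and ${\mathtt d}_m$ stabilize ${\mathcal W}_m$, hence so does every operator obtained from them by composition and ${\mathbb C}[\hbar,\hbar^{-1}]$-linear combination; in particular $\frac{1}{m+1}({\mathtt c}_m-\frac{1}{\hbar}{\mathtt d}_m^{m-1})$, $\frac{1}{3}(\hbar{\mathtt c}_2+2{\mathtt d}_2)$ and ${\mathtt d}_m$ all lie in ${\mathcal A}_{{\mathcal W}_m}$. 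By \cite[Corollary 2.2]{H3_2}, ${\mathcal A}_{{\mathcal W}_m}\cap{\mathcal D}_-=0$, so it is enough to verify that the proposed ${\mathtt P}$-operator lies in $\frac{\p}{\p z}+{\mathcal D}_-$ and the proposed ${\mathtt Q}$-operator lies in $z+{\mathcal D}_-$: each then differs from the true canonical operator by a KS operator in ${\mathcal D}_-$, hence coincides with it. Since $\Gr_{\mathcal D}\cong\Gr^{(0)}_+$, exhibiting the canonical pair as functions of ${\mathtt c}_m,{\mathtt d}_m$ is exactly the assertion that these two operators ``completely specify'' ${\mathcal W}_m$; the relation $[{\mathtt c}_m,{\mathtt d}_m]=m+1$ then yields $[{\mathtt P}_{{\mathcal W}_m},{\mathtt Q}_{{\mathcal W}_m}]=1$ as a built-in consistency check.

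Everything thus reduces to computing projections onto ${\mathcal D}_+$ (write $(\,\cdot\,)_+$ for this projection). I would grade ${\mathcal D}$ by $\deg z=1$, $\deg\frac{\p}{\p z}=-1$, $\deg\hbar=m$. Then \eqref{cex} shows ${\mathtt c}_m$ is homogeneous of degree $-1$, and \eqref{dser} shows ${\mathtt d}_m$ is homogeneous of degree $1$, with the additional structural fact that its $\hbar^k$-term $(-\hbar z^{-m}(z\frac{\p}{\p z}-\frac{m}{2}))^k z$ has differential order at most $k$. The heart of the argument is the positive part of ${\mathtt d}_m^{m-1}$, which is homogeneous of degree $m-1$. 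For a monomial $\hbar^a z^b(\frac{\p}{\p z})^j$ occurring in ${\mathtt d}_m^{m-1}$, the order bound (order and $\hbar$-power both add under composition, and order $\le\hbar$-power in each factor) gives $j\le a$, homogeneity gives $am+b-j=m-1$, and membership in ${\mathcal D}_+$ requires $b\ge0$; together these force $a(m-1)\le m-1$, i.e.\ $a\le1$, so only $z^{m-1}$ (from $a=0$) and a constant multiple of $\hbar\frac{\p}{\p z}$ (from $a=1$) survive. Evaluating the $m-1$ single-correction words $z^i({\mathtt d}_m-z)z^{m-2-i}$ at their $\hbar^1$ order pins the coefficient, giving
\[
({\mathtt d}_m^{m-1})_+=z^{m-1}-(m-1)\hbar\frac{\p}{\p z}.
\]

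The remaining bookkeeping is then immediate. From \eqref{cex}, $({\mathtt c}_m)_+=\frac{z^{m-1}}{\hbar}+2\frac{\p}{\p z}$, so the polynomial term $z^{m-1}/\hbar$ cancels and
\[
\Big({\mathtt c}_m-\frac{1}{\hbar}{\mathtt d}_m^{m-1}\Big)_+=\big(2+(m-1)\big)\frac{\p}{\p z}=(m+1)\frac{\p}{\p z},
\]
whence ${\mathtt P}_{{\mathcal W}_m}\in\frac{\p}{\p z}+{\mathcal D}_-$ for all $m\ge2$; the action \eqref{dKS} moreover gives ${\mathtt P}_{{\mathcal W}_m}\cdot\Phi_j^{(m)}=(j-1)\Phi_{j-1}^{(m)}$, so that ${\mathtt P}_{{\mathcal W}_m}\cdot\Phi_1^{(m)}=0$ identifies it with the quantum spectral curve. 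For the raising operator the same exponent count shows that the $\frac{\p}{\p z}$-term of ${\mathtt d}_m$ at order $\hbar$ equals $-\hbar z^{2-m}\frac{\p}{\p z}$, whose coefficient lies in ${\mathcal D}_-$ precisely when $m\ge3$; hence ${\mathtt d}_m\in z+{\mathcal D}_-$ and ${\mathtt Q}_{{\mathcal W}_m}={\mathtt d}_m$ for $m\ge3$. When $m=2$ this term degenerates to the constant-coefficient $-\hbar\frac{\p}{\p z}\in{\mathcal D}_+$, so ${\mathtt d}_2\notin z+{\mathcal D}_-$; the combination $\frac{1}{3}(\hbar{\mathtt c}_2+2{\mathtt d}_2)$ is tailored to remove it, since $(\hbar{\mathtt c}_2)_+=z+2\hbar\frac{\p}{\p z}$ and $(2{\mathtt d}_2)_+=2z-2\hbar\frac{\p}{\p z}$ add to $3z$, placing ${\mathtt Q}_{{\mathcal W}_2}$ in $z+{\mathcal D}_-$.

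The main obstacle is the middle step: the positive part of the composite ${\mathtt d}_m^{m-1}$ is genuinely delicate, because the non-commutativity of $z$ and $\frac{\p}{\p z}$ manufactures ${\mathcal D}_+$ contributions out of ${\mathcal D}_-$ factors --- already a single $z^i({\mathtt d}_m-z)z^{m-2-i}$ produces a constant-coefficient $\frac{\p}{\p z}$ even though ${\mathtt d}_m-z\in{\mathcal D}_-$ for $m\ge3$. The grading-plus-order estimate is what makes this tractable, collapsing an a priori infinite-order computation to the two monomials with $a\le1$ and a single coefficient evaluation, and the exceptional status of $m=2$ drops out of the same exponent $z^{2-m}$.
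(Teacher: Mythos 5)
Your proof is correct and follows essentially the same route as the paper's: both reduce the statement, via uniqueness of the canonical pair (${\mathcal A}_{{\mathcal W}_m}\cap{\mathcal D}_-=0$), to checking that the proposed operators lie in $\frac{\p}{\p z}+{\mathcal D}_-$ and $z+{\mathcal D}_-$, and then compute the ${\mathcal D}_+$-projections from \eqref{cex} and \eqref{dser}. The only difference is that where the paper truncates ${\mathtt d}_m$ at order $\hbar$ and asserts the resulting projection, you rigorously justify $({\mathtt d}_m^{m-1})_+=z^{m-1}-(m-1)\hbar\frac{\p}{\p z}$ with the grading-plus-order estimate --- a welcome sharpening of the same computation, not a different argument.
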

\begin{proof}
The operators $ {\mathtt c}_m$ and $ {\mathtt d}_m$ are KS operators for ${\mathcal W}_m$, hence their combinations in the right hand side of \eqref{C1},  \eqref{C2}, and \eqref{dasQ} 
are also KS operators. Therefore, it remains only to show that these operators are of the form $\frac{\p}{\p z}+{\mathcal D}_-$ (for \eqref{C1}) and $z+{\mathcal D}_-$ (for \eqref{C2} and \eqref{dasQ}),
by uniqueness it will immediately imply that they are canonical. 

From \eqref{cex} and \eqref{dser} for $m\geq 2$ we have
\begin{equation}
\begin{split}
 {\mathtt c}_m-\frac{1}{\hbar}({\mathtt d}_m)^{m-1}&=\frac{z^{m-1}}{\hbar}+2\frac{\p}{\p z} -\frac{1}{\hbar}\left(z-\frac{\hbar}{z^m}\left(\frac{\p}{\p z}-\frac{m}{2}\right)z\right)^{m-1}+\dots\\
&=(m+1)\frac{\p}{\p z}+\dots,
\end{split}
\end{equation}
where by $\dots$ we denote elements of ${\mathcal D}_-$. This implies \eqref{C1}.

From \eqref{dser} for $m\geq 3$ we have ${\mathtt d}_m\in z+{\mathcal D}_-$. 
From   \eqref{cex} and \eqref{dser} for $m=2$ we have
\begin{equation}
\begin{split}
\hbar {\mathtt c}_2&\in z+2\hbar \frac{\p}{\p z}+{\mathcal D}_-,\\
{\mathtt d}_2&\in z-\hbar\frac{\p}{\p z}+{\mathcal D}_-,
\end{split}
\end{equation}
therefore
\begin{equation}
\begin{split}
\frac{1}{3}\left(\hbar {\mathtt c}_2+2{\mathtt d}_2\right)\in  z+{\mathcal D}_-.
\end{split}
\end{equation}
This completes the proof.
\end{proof}

Together with Proposition \ref{Prop3.1} this theorem describes canonical KS operators for all $\tau^{(m)}$ with $m>0$.

\begin{remark}
From  \eqref{dKS} and \eqref{dasQ} we see that for $m\geq 3$ the basis given by Definition \ref{def1} is distinguished, see \cite[Section 2.4]{H3_2}
\be
\check{\Phi}_j^{(m)}=\Phi_j^{(m)}.
\ee
Then, Sato's group element associated with the tau-function $\tau^{(m)}$ for $m\geq 3$ is given by the asymptotic expansion of the integral near the critical point $\varphi_0=z^{-1}$
\be
{\mathtt G}_{{\mathcal W}_m}=\sqrt{\frac{2\pi {z}^m}{\hbar}}e^{-\frac{z^m}{m\hbar}}\frac{1}{2 \pi \iota}\int_{\gamma}e^{\frac{1}{m(m+1)\hbar}\left(m z^{m+1}\varphi+\frac{1}{\varphi^{m}}\right)+\varphi^{-1}\frac{\p}{\p z}}d \varphi.
\ee

For $m=2$ from \eqref{C2} we conclude that the distinguished basis is given by
\be
\check{\Phi}_j^{(2)}=\sqrt{\frac{2\pi}{\hbar}}ze^{-\frac{z^2}{2\hbar}}\frac{1}{2 \pi \iota}\int_{\gamma}e^{\frac{1}{6\hbar}\left(m z^{3}\varphi+\frac{1}{\varphi^{2}}\right)} H_j(\varphi^{-1}) d \varphi,
\ee
where $H_j(z)$ can be expressed in terms of the Hermite polynomial, $H_j(z)=(z+\hbar \frac{\p}{\p z})^j\cdot 1$.
\end{remark}

For $m=2$, corresponding to the reduction to the Boussinesq hierarchy,  we have
\begin{equation}
\begin{split}
{\mathtt Q}_{{\mathcal W}_2}&=\frac{1}{3}\left(\frac{\hbar^2}{z}\frac{\p^2}{\p z^2}+z+\frac{\hbar^2}{z^3}+\left(2\hbar-\frac{\hbar^2}{z^2}\right)\frac{\p}{\p z}+\frac{2}{z^{2}\left(1+\hbar z^{-2}\left(z\frac{\p}{\p z}-\frac{m}{2}\right)\right)}z^{3}\right),\\
{\mathtt P}_{{\mathcal W}_2}&=\frac{1}{3\hbar}\left(\frac{\hbar^2}{z}\frac{\p^2}{\p z^2}+z+\frac{\hbar^2}{z^3}+\left(2\hbar-\frac{\hbar^2}{z^2}\right)\frac{\p}{\p z}-\frac{1}{z^{2}\left(1+\hbar z^{-2}\left(z\frac{\p}{\p z}-\frac{m}{2}\right)\right)}z^{3}\right).
\end{split}
\end{equation}

 \subsection{Quantum spectral curve}\label{S4.4}        
 
 Let 
 \be
 \widehat{x}=\frac{z^{m+1}}{m+1},\quad \quad \widehat{y}=\frac{\hbar}{z^m}\frac{\p}{\p z}.
 \ee
These operators satisfy the canonical commutation relation
 \be
 \left[\widehat{y}, \widehat{x}\right]=\hbar.
 \ee
 
Using integration by parts it is easy to show that the functions ${\mathcal I}^{(m)}_{j}$, given by \eqref{bess}, satisfy
\be\label{qsc1}
\left((m+1)(m+1-j)\hbar\widehat{y}^m+(m+1)\widehat{x}\widehat{y}^{m+1}-1\right){\mathcal I}^{(m)}_{j-1}=0.
\ee
In particular, for $j=1$ we have
\be\label{qsc}
\left(m(m+1)\hbar\widehat{y}^m+(m+1)\widehat{x}\widehat{y}^{m+1}-1\right){\mathcal I}^{(m)}_{0}=0.
\ee   
This equation can be interpreted as a quantum spectral curve equation.
\begin{remark}
For $m=1$ this quantum spectral curve reduces to
\be
\left(2\hbar\widehat{y}+2\widehat{x}\widehat{y}^2-1\right){\mathcal I}_{0}^{(1)}=0,
\ee 
which was derived in \cite{KS2} for the BGW tau-function in a slightly different normalization.
\end{remark}  

The operators $\widehat{x}$ and $\widehat{y}$ are related to combinations of the operators ${\mathtt a}_m$ and ${\mathtt b}_m$ by a simple conjugation
\be
 {\mathtt  d}_m^{-1}=\sqrt{\frac{2\pi {z}^m}{\hbar}}e^{-\frac{z^m}{m\hbar}} \,\widehat{y} \, \sqrt{\frac{\hbar}{2\pi {z}^m}}e^{\frac{z^m}{m\hbar}},\quad  \frac{{\mathtt b}_m}{m+1}=\sqrt{\frac{2\pi {z}^m}{\hbar}}e^{-\frac{z^m}{m\hbar}}\, \widehat{x}\, \sqrt{\frac{\hbar}{2\pi {z}^m}}e^{\frac{z^m}{m\hbar}}.
\ee
Therefore, the basis vectors given by Definition \ref{def1} satisfy
\be\label{sqc1}
\left((m+1)(m+1-j)\hbar {\mathtt d}_m^{-m} +{\mathtt b}_m {\mathtt d}_m^{-m-1}-1\right)\cdot \Phi_j^{(m)}=0.
\ee
In general, the operator in the left hand side is not a KS one. However, these operators are closely related to the quantum spectral curve operators ${\mathtt P}_{{\mathcal W}_m}$, given by Proposition \ref{P4.2}, namely for $j=1$ we have
\be
\frac{1}{\hbar}{\mathtt d}_m^{m-1}\left((m+1)m\hbar {\mathtt d}_m^{-m} +{\mathtt b}_m {\mathtt d}_m^{-m-1}-1\right)={\mathtt c}_m-\frac{1}{\hbar}{\mathtt d}_m^{m-1}.
\ee

In the semi-classical limit the quantum spectral curve equation \eqref{qsc} reduces to the classical spectral curve
\be\label{csc}
(m+1)xy^{m+1}=1.
\ee
\begin{conjecture}\label{C11}
Generating functions of the higher BGW models $\tau^{(m)}$ are given by a version of the Chekhov--Eynard--Orantin topological recursion on the irregular spectral curve \eqref{csc},
described in \cite[Example 5.14]{Borot} with $r=m+1$ and $s=m$.
\end{conjecture}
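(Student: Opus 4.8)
The plan is to reduce the conjecture to the equivalence, established in \cite{Borot}, between the partition function of a higher Airy structure (specified by $\mathcal W$-algebra constraints) and the output of the corresponding topological recursion. Since the semi-classical limit of the quantum spectral curve \eqref{qsc} already produces the irregular curve \eqref{csc}, and since \cite[Example 5.14]{Borot} attaches to this curve (with $r=m+1$, $s=m$) an explicit higher Airy structure, it suffices to prove that the linear constraints satisfied by $\tau^{(m)}$ coincide with the defining constraints of that Airy structure. The topological recursion statement then follows from \cite{Borot} by construction, with $\hbar$ playing the role of the genus-expansion parameter.

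First I would make the $\mathcal W$-constraints for $\tau^{(m)}$ fully explicit. The canonical KS operators of Proposition \ref{P4.2}, together with ${\mathtt c}_m$, ${\mathtt d}_m$, ${\mathtt a}_m$, ${\mathtt b}_m$ and their products ${\mathtt a}_m^k{\mathtt b}_m^\ell$, generate an infinite family of KS operators. Expanding each of them in the basis $\{{\mathtt j}_k,{\mathtt l}_k,{\mathtt m}_k,\dots\}$ of \eqref{virw} and applying Lemma \ref{lemma33} (extended, where needed, to currents of spin up to $m+1$) converts every such operator into a linear differential constraint built from $\widehat J_k$, $\widehat L_k$, $\widehat M_k$ and their higher-spin analogues obeying \eqref{comm}. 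For $m=2$ the operators ${\mathtt c}_2$, ${\mathtt d}_2$, being quadratic in $\p_z$, already yield a complete set of $W^{(3)}$-constraints, which is why the tau-function is then fully determined; for $m\ge 3$ one must also retain the degree-$(m+1)$ operators coming from ${\mathtt a}_m$, ${\mathtt b}_m$, so as to produce the full $\mathcal W(\mathfrak{gl}_{m+1})$ tower required by an $(m+1)$-fold ramification point.

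Next I would set up the dictionary between the two sides. On the Airy-structure side the data are the rank $r=m+1$ of the $\mathcal W$-algebra, the shift $s=m$ encoding the irregular (pole) behaviour of \eqref{csc}, the genus parameter (matched with $\hbar$, using the homogeneity of Lemma \ref{lemma31} to fix the grading), and the identification of the KP times $t_k$ with a local coordinate on \eqref{csc} near its ramification point. Under this dictionary I would compare mode by mode the coefficients of $\widehat M_k$, $\widehat L_k$, $\widehat J_k$ and of the higher generators in our constraints against those tabulated in \cite[Example 5.14]{Borot}, checking that the characteristic shift by $s=m$ — which appears here as the term $-\tfrac{m}{2}$ in ${\mathtt a}_m$ and the logarithmic potential in \eqref{Ebgwp} — reproduces exactly Borot's shifted generators. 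Showing that the two ideals of constraints coincide identifies the two partition functions.

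The main obstacle is precisely this matching for general $m$. Two points require care. First, the KS operators of the present paper most naturally furnish cubic ($W^{(3)}$) constraints, whereas the $r=m+1$ Airy structure is governed by a full tower of generators up to spin $m+1$; one must verify that the higher generators obtained from ${\mathtt a}_m^k{\mathtt b}_m^\ell$ close into the same $\mathcal W(\mathfrak{gl}_{m+1})$ representation and generate the \emph{identical} ideal, rather than merely a sub-ideal. Second, the shift $s=m$ places this firmly in the irregular regime, outside the regular (Airy, $s=1$) setting, so Borot's construction must be applied in its genuinely irregular form; confirming that the $\hbar$-grading, the $(m+1)$-reduction $\p_{t_{(m+1)k}}\tau^{(m)}=0$ inherited from ${\mathtt b}_m$, and the normalization $\tau^{(m)}(\mathbf 0)=1$ are all compatible with the normalization of \cite[Example 5.14]{Borot} is the delicate step. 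Once the constraint systems are matched, the algebraic cut-and-join recursion \eqref{rec0} and the Chekhov--Eynard--Orantin topological recursion become two descriptions of the same generating function.
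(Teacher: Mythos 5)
The statement you are trying to prove is Conjecture \ref{C11}, and the paper itself offers \emph{no} proof of it: the authors state it as an open conjecture, motivated only by the semi-classical limit of the quantum spectral curve \eqref{qsc} reducing to \eqref{csc}. So there is nothing in the paper to compare your argument against; the relevant question is whether your proposal actually closes the gap, and as written it does not. Your strategy --- identify the linear constraints satisfied by $\tau^{(m)}$ with the defining constraints of the higher Airy structure of \cite[Example 5.14]{Borot} and invoke the equivalence, proved there, between Airy-structure partition functions and topological recursion --- is the natural and, in hindsight, correct route (it is essentially the one carried out later in \cite{CGG}). But the decisive step, matching the two constraint ideals for general $m$, is precisely what you defer: you acknowledge it yourself as ``the main obstacle,'' and nothing in the paper supplies it. Theorem \ref{T2} establishes only the $W^{(3)}$-constraints, and Section \ref{S3.6} explicitly notes that for $m\geq 3$ these do \emph{not} determine $\tau^{(m)}$ and that the completion to $W^{(m+1)}$ ``will be considered elsewhere.'' Your suggestion that the operators ${\mathtt a}_m^k{\mathtt b}_m^\ell$ ``close into the same $\mathcal W(\mathfrak{gl}_{m+1})$ representation'' is an assertion, not an argument: Lemma \ref{lemma33} as stated covers only spin $\leq 3$, its extension to higher spin requires controlling the eigenvalues and normal-ordering corrections (recall that already the commutator of two $\widehat M_k$'s leaves the $W^{(3)}$ span, producing quartic terms in the currents), and verifying that the resulting tower reproduces Borot's shifted generators --- including the precise constants analogous to $C_m$ and the $1/\hbar$ dilaton-type shifts in \eqref{Wopm} --- is a genuine computation that no part of your text performs.

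Two further cautions. First, the semi-classical statement you lead with is far weaker than the conjecture: the quantum spectral curve only constrains the principal specialization (the first basis vector $\Phi_1^{(m)}$), not the full correlation structure, so it cannot substitute for the constraint matching. Second, even in the one case where the paper's constraints do uniquely fix the tau-function, $m=2$, your proposal stops at ``compare mode by mode'' without exhibiting the dictionary (dilaton shift, identification of times, normalization $\tau^{(m)}({\bf 0})=1$ versus Borot's normalization) or performing the comparison; for $m=2$ this would be a finite and feasible check, and carrying it out would at least yield a proof of the conjecture in that case. As it stands, your text is a credible research plan for the conjecture, consistent with how it was eventually resolved, but it is not a proof: the content of the conjecture is exactly the ideal-matching step that remains open in your write-up.
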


 For $m=2$ the general solution to the equation \eqref{qsc} is
 \begin{multline}\label{soln}
 {\mathcal I}^{(2)}_{0}=\alpha_1\cdot {}_0F_2 \left(;\frac{1}{2},1;\frac{1}{216}\left(\frac{z^2}{\hbar}\right)^3\right)
 +\alpha_2z^3\cdot {}_0F_2 \left(;\frac{3}{2},\frac{3}{2};\frac{1}{216}\left(\frac{z^2}{\hbar}\right)^3\right)\\
 +\alpha_3 G^{2,0}_{0,3}\left(;\frac{1}{2},0,0;\frac{1}{216}\left(\frac{z^2}{\hbar}\right)^3\right),
 \end{multline}
where $\alpha_k$, $k=1,2,3$ are arbitrary constants. Here ${}_0F_2 $ is the generalized hypergeometric function and $G^{3,0}_{0,0}$ is the Meijer G-function. Comparing it with the small $z$ expansion, given by \eqref{smallzexp}, we conclude that $\alpha_1=1$ and $\alpha_2=\alpha_3=0$. Therefore
\be
\Phi_1^{(2)}=\sqrt{\frac{2\pi {z}^2}{\hbar}}e^{-\frac{z^2}{2\hbar}}\,\, {}_0F_2 \left(;\frac{1}{2},1;\frac{1}{216}\left(\frac{z^2}{\hbar}\right)^3\right),
\ee
where we take the asymptotic expansion of the generalized hypergeometric function at large positive argument. 

Similarly, for $j=2$ we have a general solution of \eqref{qsc1},
\begin{multline}\label{soln1}
 {\mathcal I}^{(2)}_{1}=\gamma_1z^3\cdot {}_0F_2 \left(;1,\frac{3}{2};\frac{1}{216}\left(\frac{z^2}{\hbar}\right)^3\right)
 +\gamma_2\cdot {}_0F_2 \left(;\frac{1}{2},\frac{1}{2};\frac{1}{216}\left(\frac{z^2}{\hbar}\right)^3\right)\\
 +\gamma_3 G^{2,0}_{0,3}\left(;\frac{1}{2},\frac{1}{2},0;\frac{1}{216}\left(\frac{z^2}{\hbar}\right)^3\right),
 \end{multline}
where $\gamma_k$, $k=1,2,3$ are arbitrary constants. Comparing it with the small $z$ expansion, given by \eqref{smallzexp}, we conclude that $\gamma_1=(3\hbar)^{-1}$ and $\gamma_2=\gamma_3=0$. For all other values of $j$, the general solution to the equation (\ref{qsc1}) is 
\begin{multline}\label{solnn}
 {\mathcal I}^{(2)}_{j-1}=\beta_1\cdot {}_0F_2 \left(;\frac{1}{2},\frac{3}{2}-\frac{j}{2};\frac{1}{216}\left(\frac{z^2}{\hbar}\right)^3\right)
 +\beta_2z^3\cdot {}_0F_2 \left(;\frac{3}{2},2-\frac{j}{2};\frac{1}{216}\left(\frac{z^2}{\hbar}\right)^3\right)\\
 +\beta_3 z^{3j-3}{}_0F_2 \left(;\frac{1}{2}+\frac{j}{2},\frac{j}{2};\frac{1}{216}\left(\frac{z^2}{\hbar}\right)^3\right).
 \end{multline}
Comparing it with the small $z$ expansion, given by \eqref{smallzexp} and large z expansion given in\cite{DLMF}, we conclude that  $\beta_1=\beta_2=0$ and $$\beta_3=\frac{\left(3\hbar\right)^{1-j}}{\Gamma(j)}~.$$ 
 Hence all $\Phi_j^{(2)}$ can be written as an asymptotic expansion of the hypergeometric function,
 \be\label{Aseq}
\Phi_j^{(2)}=\sqrt{\frac{2\pi {z}^2}{\hbar}}e^{-\frac{z^2}{2\hbar}}\frac{z^{3j-3}}{\left(3\hbar\right)^{j-1}\Gamma(j)}{}_0F_2 \left(;\frac{1}{2}+\frac{j}{2},\frac{j}{2};\frac{1}{216}\left(\frac{z^2}{\hbar}\right)^3\right).
\ee
 \subsection{W-constraints}\label{S3.6}    
In this section, we construct a family of linear $W^{(3)}$-constraints satisfied by the tau-functions $\tau^{(m)}({\bf t})$. For $m=2$ these constraints completely specify the formal series $\tau^{(m)}({\bf t})$, for higher $m$ the family of the constraints should be completed to $W^{(m+1)}$. The higher constraints can be constructed with the same methods and will be considered elsewhere.  

We consider the constraints corresponding to the particular linear combinations of the KS operators ${\mathtt b}_m^k$, ${\mathtt b}_m^k{\mathtt a}_m$ and ${\mathtt b}_m^k{\mathtt a}_m^2$.
The construction is completely analogous to the construction of the constraints for the generalized Kontsevich model, see e.g., \cite{H3_2}.

Let
\begin{equation}\label{vir}
\begin{split}
{\mathtt j}_k^{(m)}&=\frac{1}{m+1}{\mathtt b}_m^k=\frac{1}{m+1}{\mathtt j}_{(m+1)k},\\
{\mathtt l}_k^{(m)}&=-\frac{1}{m+1}{\mathtt b}_m^k{\mathtt a}_m+\frac{k+1}{2}{\mathtt b}_m^k=\frac{1}{m+1}\left({\mathtt l}_{(m+1)k}-\frac{1}{\hbar}{\mathtt j}_{(m+1)k+m}\right),\\
{\mathtt m}_k^{(m)}&=\frac{1}{m+1}{\mathtt b}_m^k{\mathtt a}_m^2+(k+1){\mathtt b}_m^k\left({\mathtt a}_m+\frac{1}{6}(m+1)(k+2)\right)\\
&=\frac{1}{m+1}\left({\mathtt m}_{(m+1)k}-\frac{2}{\hbar}{\mathtt l}_{(m+1)k+m}+\frac{1}{\hbar^2}{\mathtt j}_{(m+1)k+2m}+C_{m}{\mathtt j}_{(m+1)k}\right)
\end{split}
\end{equation}
with
\be
C_{m}=\frac{m(m+2)}{12}.
\ee
Recall that the operators ${\mathtt j}_k$, ${\mathtt l}_k$, and ${\mathtt m}_k$ are given by \eqref{virw}.

Then ${\mathtt j}_k^{(m)}$ for $k\geq 1$, ${\mathtt l}_k^{(m)}$ for $k\geq 0$, and (recall that ${\mathtt c}_m=\frac{\hbar}{{\mathtt b}_m}{\mathtt a}_m^2$ is a KS operator) ${\mathtt m}_k^{(m)}$ for  $k\geq -1$ are the KS operators for ${\mathcal W}_m$. 
 These operators satisfy the commutation relations
\begin{equation}
\begin{split}
\left[{\mathtt j}_k^{(m)},{\mathtt j}_j^{(m)}\right]&=0,\\
\left[{\mathtt j}_k^{(m)},{\mathtt l}_j^{(m)}\right]&=k{\mathtt j}_{k+j}^{(m)},\\
\left[{\mathtt l}_k^{(m)},{\mathtt l}_j^{(m)}\right]&=(k-j){\mathtt l}_{k+j}^{(m)},\\
\left[{\mathtt j}_k^{(m)},{\mathtt m}_j^{(m)}\right]&=2k{\mathtt l}_{k+j}^{(m)},\\
\left[{\mathtt l}_k^{(m)},{\mathtt m}_j^{(m)}\right]&=(2k-j){\mathtt m}_{k+j}^{(m)}+\frac{1}{6}k(m+1)^2(k^2-1){\mathtt j}_{k+j}^{(m)}.
\end{split}
\end{equation}
Let us use Lemma \ref{lemma33} to construct  operators annihilating the tau-function $\tau^{(m)}$. The only non-trivial part is to find the corresponding eigenvalues. They can be computed using the commutation relations between the operators. Let us consider the operators
\begin{equation}\label{Wopm}
 \begin{aligned}
\widehat{\mathcal J}_{k}^{(m)}&=\frac{1}{m+1}\widehat{J}_{(m+1)k}, && k\geq 1,\\
\widehat{\mathcal L}_k^{(m)}&=\frac{1}{m+1}\left(\widehat{L}_{(m+1)k} -\frac{1}{\hbar}\widehat{J}_{(m+1)k+m}+\frac{1}{2}C_{m}\delta_{k,0}\right),&& k\geq 0,\\
\widehat{\mathcal M}_k^{(m)}&=\frac{1}{m+1}\left(\widehat{M}_{(m+1)k}-\frac{2}{\hbar}  \widehat{L}_{(m+1)k+m}+\frac{1}{\hbar^2}\widehat{J}_{(m+1)k+2m}+C_{m} \widehat{J}_{(m+1)k}\right), \quad && k\geq -1.
\end{aligned}
\end{equation}
Recall that $\widehat{J}_0=0$.
These operators satisfy the following commutation relations
\begin{equation}
 \begin{aligned}
\left[{\widehat{\mathcal J}}_k^{(m)},~ { \widehat{\mathcal J}}_{k'}^{(m)} \right]&=0,  &&k,k'\geq 1,\\
\left[{\widehat{\mathcal J}}_k^{(m)},~ { \widehat{\mathcal L}}_{k'}^{(m)} \right]&=k{\widehat{\mathcal J}}_{k+k'}^{(m)}, &&k\geq 1,k'\geq 0,\\
\left[{\widehat{\mathcal L}}_k^{(m)},~ { \widehat{\mathcal L}}_{k'}^{(m)} \right]&=(k-k')\widehat{\mathcal L}_{k+k'}^{(m)}, &&k,k'\geq 0,\\
\left[{\widehat{\mathcal J}}_k^{(m)},~ {\widehat{\mathcal M}}_{k'} ^{(m)}\right]&=2k{\widehat{\mathcal L}}_{k+k'}^{(m)}, &&k\geq 1,k'\geq -1,\\
\left[{\widehat{\mathcal L}}_k^{(m)},~ { \widehat{\mathcal M}}_{k'}^{(m)}\right]&=(2k-k'){\widehat{\mathcal M}}_{k+k'}^{(m)}+\frac{1}{6}k(m+1)^2(k^2-1){\widehat{\mathcal J}}_{k+k'}^{(m)},\,\,\,\, &&k\geq 0,k'\geq -1.
\end{aligned}
\end{equation}
All operators \eqref{Wopm} can be obtained by a commutation of the other operators form the same family. Therefore, the eigenvalues for these operators are trivial, and we have the following theorem.
\begin{theorem}\label{T2}
The higher BGW tau-functions $\tau^{(m)}$ satisfy the $W^{(3)}$-constraints
\begin{equation}\label{Constrm}
\begin{split}
\widehat{\mathcal J}_{k}^{(m)}\cdot \tau^{(m)}&=0,\quad \quad k\geq1,\\
\widehat{\mathcal L}_{k}^{(m)}\cdot \tau^{(m)}&=0,\quad \quad k\geq0,\\
\widehat{\mathcal M}_{k}^{(m)}\cdot \tau^{(m)}&=0, \quad \quad k\geq -1.
\end{split}
\end{equation}
\end{theorem}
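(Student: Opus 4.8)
The plan is to turn each KS operator in \eqref{vir} into a linear constraint via Lemma~\ref{lemma33}, and then to fix the resulting eigenvalues using the centerless commutation relations of the family \eqref{Wopm}. First I would record that, since ${\mathtt a}_m,{\mathtt b}_m,{\mathtt c}_m\in{\mathcal A}_{{\mathcal W}_m}$, every finite combination ${\mathtt j}_k^{(m)}$ ($k\geq1$), ${\mathtt l}_k^{(m)}$ ($k\geq0$), ${\mathtt m}_k^{(m)}$ ($k\geq-1$) in \eqref{vir} is again a KS operator for ${\mathcal W}_m$. Applying Lemma~\ref{lemma33} to each of these, and observing that the operators \eqref{Wopm} differ from the currents attached to \eqref{vir} only by the additive normal-ordering constants $C_m$ and $\tfrac12 C_m\delta_{k,0}$ (so that each still acts as a scalar on $\tau^{(m)}$), one obtains constants $J_k,L_k,M_k$ independent of ${\bf t}$ with
\be
\widehat{\mathcal J}_k^{(m)}\cdot\tau^{(m)}=J_k\,\tau^{(m)},\qquad \widehat{\mathcal L}_k^{(m)}\cdot\tau^{(m)}=L_k\,\tau^{(m)},\qquad \widehat{\mathcal M}_k^{(m)}\cdot\tau^{(m)}=M_k\,\tau^{(m)}.
\ee
The entire content of the theorem is then that $J_k=L_k=M_k=0$ throughout the stated ranges.

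The key observation is that the operators \eqref{Wopm} close under commutation with \emph{no} central term: each right-hand side of the commutation relations listed after \eqref{Wopm} is a linear combination of operators from the same family. Hence evaluating any such commutator on the common eigenvector $\tau^{(m)}$ gives $[\widehat A,\widehat B]\cdot\tau^{(m)}=0$, which translates into a linear relation among the eigenvalues. From $[\widehat{\mathcal J}_k^{(m)},\widehat{\mathcal L}_{k'}^{(m)}]=k\,\widehat{\mathcal J}_{k+k'}^{(m)}$ I read off $k\,J_{k+k'}=0$ for $k\geq1,\,k'\geq0$, so $J_n=0$ for all $n\geq1$. Feeding this into $[\widehat{\mathcal J}_k^{(m)},\widehat{\mathcal M}_{k'}^{(m)}]=2k\,\widehat{\mathcal L}_{k+k'}^{(m)}$ yields $2k\,L_{k+k'}=0$ for $k\geq1,\,k'\geq-1$, hence $L_n=0$ for all $n\geq0$. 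Finally, using $J_n=0$ in
\be
[\widehat{\mathcal L}_k^{(m)},\widehat{\mathcal M}_{k'}^{(m)}]=(2k-k')\,\widehat{\mathcal M}_{k+k'}^{(m)}+\tfrac16 k(m+1)^2(k^2-1)\,\widehat{\mathcal J}_{k+k'}^{(m)}
\ee
leaves $(2k-k')\,M_{k+k'}=0$; for each $n\geq-1$ I can pick $(k,k')$ with $k+k'=n$ and $2k\neq k'$---for instance $(0,n)$ when $n\geq1$, $(1,-1)$ when $n=0$, and $(0,-1)$ when $n=-1$---to conclude $M_n=0$. This establishes \eqref{Constrm}.

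The hard part is not this eigenvalue bookkeeping but the preparatory input it relies on. One must verify the operator identities in \eqref{vir}, in particular that normal-ordering the cubic combination ${\mathtt b}_m^k{\mathtt a}_m^2$ reproduces ${\mathtt m}_{(m+1)k}$ plus exactly the lower terms shown, with the constant pinned to $C_m=\tfrac{m(m+2)}{12}$; and one must confirm that the combinations remain KS operators on the full ranges quoted. The delicate endpoint is $k=-1$, where the factor $(k+1)$ vanishes and ${\mathtt m}_{-1}^{(m)}$ collapses to $\tfrac{1}{(m+1)\hbar}{\mathtt c}_m$, so that admissibility of ${\mathtt m}_{-1}^{(m)}$ is precisely the statement ${\mathtt c}_m\in{\mathcal A}_{{\mathcal W}_m}$ established earlier through \eqref{dKS}. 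Once these identities and the centerless commutation relations of \eqref{Wopm} are in place, the vanishing of all eigenvalues---and hence the $W^{(3)}$-constraints---follows automatically.
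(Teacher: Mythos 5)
Your proposal is correct and follows essentially the same route as the paper: the operators \eqref{vir} are KS operators (with the $k=-1$ case of ${\mathtt m}_k^{(m)}$ resting on ${\mathtt c}_m\in{\mathcal A}_{{\mathcal W}_m}$), Lemma \ref{lemma33} turns them into eigenvalue equations for the operators \eqref{Wopm}, and the closed, centerless commutation relations force all eigenvalues to vanish because every operator of the family arises as a commutator within the family. Your explicit choices of commutator pairs $(k,k')$ are all admissible and simply spell out the bookkeeping the paper leaves implicit.
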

The first line of \eqref{Constrm} describes the $(m+1)$-reduction of the KP hierarchy (aka Gelfand--Dickey hierarchy).

\begin{remark}
For $m=1$ the constraints given by the first and the second lines of \eqref{Constrm} coincide with \eqref{rc} and \eqref{vc} respectively. 
\end{remark}
For instance, let us put $m=2$. In this case $\tau^{(2)}$ is a tau-function of the Boussinesq hierarchy, satisfying
\begin{equation}
\begin{split}\label{contm}
\widehat{\mathcal J}^{(2)}_{k}\cdot \tau^{(2)}&=0,\quad k\geq1,\\
\widehat{\mathcal L}^{(2)}_{k}\cdot \tau^{(2)}&=0,\quad k\geq0,\\
\widehat{\mathcal M}^{(2)}_{k}\cdot \tau^{(2)}&=0, \quad k\geq -1,
\end{split}
\end{equation}
where the operators are given by
\begin{equation}
\begin{split}
\widehat{\mathcal J}^{(2)}_{k}&=\frac{1}{3}\widehat{J}_{3k},\\
\widehat{\mathcal L}^{(2)}_k&=\frac{1}{3}\left(\widehat{L}_{3k} -\frac{1}{\hbar}\widehat{J}_{3k+2}+\frac{1}{3}\delta_{k,0}\right) ,\\
\widehat{\mathcal M}^{(2)}_k&=\frac{1}{3}\left(\widehat{M}_{3k}-\frac{2}{\hbar}  \widehat{L}_{3k+2}+\frac{1}{\hbar^2}\widehat{J}_{3k+4}+\frac{2}{3}\widehat{J}_{3k}\right).
\end{split}
\end{equation}
These operators satisfy the commutation relations
\begin{equation}
 \begin{aligned}
\left[{\widehat{\mathcal J}}_k^{(2)},~ { \widehat{\mathcal J}}_{k'}^{(2)} \right]&=0, &&k,k'\geq 1,\\
\left[{\widehat{\mathcal J}}_k^{(2)},~ { \widehat{\mathcal L}}_{k'}^{(2)} \right]&=k{\widehat{\mathcal J}}_{k+k'}^{(2)}, &&k\geq 1,k'\geq 0,\\
\left[{\widehat{\mathcal L}}_k^{(2)},~ { \widehat{\mathcal L}}_{k'}^{(2)} \right]&=(k-k')\widehat{\mathcal L}_{k+k'}^{(2)},&&k,k'\geq 0,\\
\left[{\widehat{\mathcal J}}_k^{(2)},~ {\widehat{\mathcal M}}_{k'}^{(2)} \right]&=2k{\widehat{\mathcal L}}_{k+k'}^{(2)},&&k\geq 1,k'\geq -1,\\
\left[{\widehat{\mathcal L}}_k^{(2)},~ { \widehat{\mathcal M}}_{k'}^{(2)}\right]&=(2k-k'){\widehat{\mathcal M}}_{k+k'}^{(2)}+\frac{3}{2}k(k^2
-1){\widehat{\mathcal J}}_{k+k'}^{(2)},\quad\quad \,&&k\geq 0,k'\geq -1.
\end{aligned}
\end{equation}
In the next section we use the constraints \eqref{contm} to derive the cut-and-join description of the tau-function $\tau^{(2)}$.

\subsection{Cut-and-join operators and algebraic topological recursion}\label{S4.6} In the previous section, we construct the $W^{(3)}$-constraints satisfied by the higher BGW tau-functions. For $m=2$ these constraints completely specify the formal series $\tau^{(2)}$ and can be solved. There are different ways to solve them, let us describe the solution in terms of the cut-and-join operators. This type of solution for the Virasoro constraints was introduced by Morozov and Shakirov for the Hermitian matrix model \cite{Morozov:2009xk}. The solution leads to the {\em algebraic topological recursion} \cite{H3_3}. We can rewrite the constraints \eqref{contm} as
\begin{equation}
\begin{split}
\frac{\p}{\p t_{3k+3}} \tau^{(2)}&=0,\\
\frac{\p}{\p t_{3k+2}}\tau^{(2)}&=\hbar\left(\widehat{L}_{3k}+\frac{1}{3}\delta_{k,0}\right)\cdot \tau^{(2)},\\
\frac{\p}{\p t_{3k+1}} \tau^{(2)}&=\left(2\hbar  \widehat{L}_{3k-1}- \hbar^2\left(\widehat{M}_{3k-3}+\frac{2}{3}\widehat{J}_{3k-3}\right)\right)\cdot \tau^{(2)},
\end{split}
\end{equation}
for $k \geq 0$. Combining these constraints we have
\begin{equation}\label{eulcut}
\widehat{D}\cdot \tau^{(2)} =\left(\hbar\widehat{W}_1 +\hbar^2\widehat{W}_2 \right)\cdot \tau^{(2)},
\end{equation}
where the cut-and-join operators $\widehat{W}_1$ and $\widehat{W}_2$ are given by
\begin{equation}
\begin{split}
\widehat{W}_1 &=\frac{2}{3}t_2+\sum_{k=0}^{\infty}\left((3k+2)t_{3k+2} \widehat{L}_{3k}+2(3k+1)t_{3k+1}  \widehat{L}_{3k-1} \right),\\
\widehat{W}_2 &=-2 t_3 t_1-\sum_{k=0}^{\infty}(3k+1)t_{3k+1} \widehat{ M}_{3k-3},
\end{split}
\end{equation}
and $\widehat{D}$ is the Euler operator \eqref{Eul}. 
\begin{remark}
These operators do not commute,
\be
\left[\widehat{W}_1,\widehat{W}_2\right]\neq 0.
\ee
\end{remark}
\begin{remark}
Operators $\widehat{W}_1$ and $\widehat{W}_2$ such that \eqref{eulcut} is satisfied are not unique. We do not know how to fix this ambiguity.
\end{remark}
From Lemma \ref{lemma31} we have
\begin{theorem}
\begin{equation}\label{TR2}
2 \hbar \frac{\p}{\p \hbar} \tau^{(2)} =\left(\hbar\widehat{W}_1 +\hbar^2\widehat{W}_2 \right)\cdot \tau^{(2)}.
\end{equation}
\end{theorem}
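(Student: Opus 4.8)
The plan is to obtain the statement as a direct corollary of two facts already in place: the homogeneity relation of Lemma~\ref{lemma31} specialized to $m=2$ and the Euler-operator identity \eqref{eulcut}. Recall that the operator $\widehat{D}$ of \eqref{Eul} coincides with $\widehat{L}_0$, and that Lemma~\ref{lemma31} at $m=2$ reads
\be
\widehat{D}\cdot \tau^{(2)} = 2\hbar \frac{\p}{\p \hbar}\tau^{(2)}.
\ee
Equation \eqref{eulcut}, on the other hand, rewrites the same left-hand side as $\left(\hbar\widehat{W}_1+\hbar^2\widehat{W}_2\right)\cdot\tau^{(2)}$. Equating the two expressions for $\widehat{D}\cdot\tau^{(2)}$ yields \eqref{TR2} at once, so the theorem reduces to a single substitution once these two inputs are granted.

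Since I may assume both inputs, the only thing genuinely worth unpacking is how \eqref{eulcut} itself arises, as that is where the substance sits. First I would take the three families of constraints \eqref{contm}, put into the evolutionary form displayed just above \eqref{eulcut}, which express $\frac{\p}{\p t_{3k+3}}\tau^{(2)}$, $\frac{\p}{\p t_{3k+2}}\tau^{(2)}$, and $\frac{\p}{\p t_{3k+1}}\tau^{(2)}$ as the action of Heisenberg--Virasoro and $W^{(3)}$ operators on $\tau^{(2)}$. Next I would multiply the $t_n$-constraint by the weight $n\,t_n$ and sum over all $n\geq 1$; by construction the left-hand sides assemble into $\sum_{n\geq 1}n\,t_n\frac{\p}{\p t_n}\tau^{(2)} = \widehat{D}\cdot\tau^{(2)}$, while the $\hbar$- and $\hbar^2$-graded pieces of the right-hand side package into $\hbar\widehat{W}_1$ and $\hbar^2\widehat{W}_2$ respectively. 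The $\frac{2}{3}t_2$ and $-2t_3t_1$ contact terms trace back to the $\frac{1}{3}\delta_{k,0}$ boundary contribution at $k=0$ in the $t_{3k+2}$-constraint and to the $\frac{2}{3}\widehat{J}_{3k-3}$ term evaluated at $k=0$, where $\widehat{J}_{-3}=3t_3$.

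I expect no analytic obstacle anywhere in this argument; the single point requiring care is the recombination bookkeeping for \eqref{eulcut}, and in particular the observation that the $\frac{2}{3}\widehat{J}_{3k-3}$ terms with $k\geq 2$ reduce to $\frac{2}{3}\frac{\p}{\p t_{3(k-1)}}$ and hence annihilate $\tau^{(2)}$ by the first line of \eqref{contm}, so that they may be dropped from $\widehat{W}_2$ (this is precisely the source of the non-uniqueness recorded in the accompanying remark). The closing passage from \eqref{eulcut} to \eqref{TR2} carries no independent difficulty: it is purely the homogeneity statement of Lemma~\ref{lemma31}, whose proof itself rests only on the scaling $\Phi_j^{(m)}(z)=z^{j-1}f_j(\hbar/z^m)$ of the basis vectors together with the determinant formula \eqref{deter}. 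Thus the theorem inherits all its content from the $W^{(3)}$-constraints of Theorem~\ref{T2} and the homogeneity of the tau-function, and adds nothing new beyond their combination.
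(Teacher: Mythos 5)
Your proposal is correct and follows exactly the paper's own route: the paper derives \eqref{eulcut} by combining the evolutionary form of the constraints \eqref{contm} (where the $\frac{2}{3}t_2$ and $-2t_3t_1$ terms arise just as you describe, from the $\frac{1}{3}\delta_{k,0}$ contact term and from $\widehat{J}_{-3}=3t_3$ at $k=0$, with the $k\geq 2$ current terms dropped since they annihilate $\tau^{(2)}$), and then states the theorem as an immediate consequence of Lemma \ref{lemma31} at $m=2$. Your bookkeeping, including the observation that dropping the annihilating terms is the source of the non-uniqueness noted in the paper's remark, matches the paper's argument in full.
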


Let us consider the {\em topological expansion} of the higher BGW tau-function
\begin{equation}\label{texp}
\tau^{(m)}({\bf t })=1+\sum_{k=1}^{\infty}\hbar^k \tau^{(m,k)}({\bf t }),
\end{equation}
where  $\tau^{(m,k)}({\bf t })$ are polynomials in ${\bf t}$, which do not depend on $\hbar$.
 One can find these polynomials with the help of the
cut-and-join operators   which are differential operators acting on the time variables ${\bf t}$.  

For $m=2$ the topological expansion \eqref{texp} equation \eqref{TR2} leads to the algebraic topological recursion
\begin{equation}
2 k\, \tau^{(2,k)}= \widehat{W}_1 \cdot  \tau^{(2,k-1)}+\widehat{W}_2 \cdot \tau^{(2,k-2)}.
\end{equation}
This recursion allows us to find all the polynomials $ \tau^{(2,k)}$ recursively with the initial conditions  $\tau^{(2,-1)} =0$ and $\tau^{(2,0)}=1$.
For instance, the first coefficients of the topological expansion are given by
\begin{equation}
\tau^{(2)}({\bf t})=1+\frac{t_2}{3}  \hbar + \frac{14 t_2^2-3 t_1^4}{36} \hbar^2 + 
 \frac{182 t_2^3-117 t_1^4 t_2   - 432 t_1^2 t_4}{324} \hbar^3+O(\hbar^4).
\end{equation}
The solution can also be represented by the ordered exponential of the operators $\widehat{W}_1$ and $\widehat{W}_2$.
\begin{remark}
Similar cut-and-join description is known for the Kontsevich--Penner model \cite{Ocaj1,Ocaj2} and for the generalized Kontsevich model with monomial potential \cite{GKMcaj2,GKMcaj1}.
\end{remark}

For the higher $m$ one can combine the $W^{(m+1)}$-constraints as follows
\be
\widehat{D}\cdot \tau^{(m)}=\left(\hbar\widehat{W}_1 +\hbar^2\widehat{W}_2 +\ldots +\hbar^m\widehat{W}_m\right) \cdot\tau^{(m)}.
\ee
Here the form of the operators $\widehat{W}_k$ depend on the choice of $m$.
This equation, together with Lemma \ref{lemma31}, describes the algebraic topological recursion,  which allows to find all coefficients of the topological expansion \eqref{texp} recursively
\be
m k \, \tau^{(m,k)}= \widehat{W}_1 \cdot  \tau^{(m,k-1)}+\widehat{W}_2 \cdot \tau^{(m,k-2)}+\dots+\widehat{W}_m\cdot \tau^{(m,k-m)}.
\ee

In next section, we will consider a one-parametric deformation of the higher BGW models.

\section{Generalized higher BGW tau-functions}\label{S5}

In this section, we introduce a one-parametric deformation of the higher BGW tau-functions, considered in the previous section.  In particular, for this deformation we construct a complete set of the KS operators,  the quantum spectral curve, the $W^{(3)}$-constraints and for the simplest case with $m=2$ we construct the cut-and-join description. For the case of the original BGW tau-function ($m=1$) this deformation 
was investigated in \cite{KS2}, where it was called the generalized BGW tau-function.

\subsection{Matrix models and tau-functions}
Let us consider the matrix model \cite{Mironov:1994mv}
\begin{equation}\label{bgwp1}
Z^{(m,N)}=\frac{\displaystyle{\int [d \Phi] \exp\left(\frac{1}{\hbar}\text{Tr}\left(\frac{1}{m+1} \Lambda^{m+1} \Phi+\frac{1}{m(m+1)  \Phi^m}+\hbar(N- M)\log  \Phi   \right)     \right)}}{\displaystyle{\int [d \Phi] \exp\left(\frac{1}{\hbar}\text{Tr}~~\left(\frac{1}{m(m+1)  \Phi^m}+\hbar(N- M)\log \Phi   \right)     \right)}}.
 \end{equation}
 Here we use the notation from Section \ref{S4.1}. We consider $N$ as a formal parameter, in particular, we do not require that $N$ is an integer.
For $N=0$ this model coincides with the higher BGW model \eqref{Ebgwp}, 
\be
Z^{(m,0)}=Z^{(m)}.
\ee
Using the Harish-Chandra--Itzykson--Zuber formula for the integrals (\ref{bgwp1}) we have
\begin{equation}
Z^{(m,N)}=(-1)^{\frac{M(M-1)}{2}}\text{det}\left(\frac{\Lambda^{m+1}}{(m+1)\hbar}\right)^{N}\prod_{j=1}^M\Gamma(j-N)\frac{\text{det}_{i,j=1}^{M}\left(\mathcal I_{M-i}^{(m,N)}(\lambda_j)  \right)}{\Delta\left(\frac{\lambda^{m+1}}{(m+1)\hbar}\right)}~.
\end{equation}
 Here $\mathcal{ I}_{M-i}^{(m,N)}:=\mathcal I_{M-N-i}^{(m)}$ ,where the function $\mathcal I_{\nu}^{(m)}(z)$ is defined by \eqref{bess}.

To relate the matrix model in (\ref{bgwp1}) to the KP hierarchy we need to introduce a suitable prefactor. Namely,
we consider a function ${\mathcal C}_{m,N}^{-1}\, Z^{(m,N)}$, where
\be
{\mathcal C}_{m,N}=\frac{{ \Delta(\lambda)e^{\frac{\text{Tr} {\Lambda}^m}{m\hbar}}\text{det}\left(\frac{\Lambda^{m}}{(m+1)\hbar}\right)^{N}\prod_{i=1}^{M}\Gamma(j-N)}}{{ \Delta\left(\frac{\lambda^{m+1}}{(m+1)\hbar}\right) \left(\frac{2\pi}{\hbar}\right)^{\frac{M}{2}}\prod_{i=1}^{M}\lambda_i^\frac{m}{2}}}.
\ee
For this function a determinant formula holds,
\be
{\mathcal C}_{m,N}^{-1}\, Z^{(m,N)}=\frac{\text{det}_{i,j=1}^M\Phi_j^{(m,N)}(\lambda_i)}{\Delta(\lambda)}.
\ee

The functions $\Phi_j^{(m,N)}$ are given by

\begin{equation}
\begin{split}\label{bvN}
\Phi_j^{(m,N)}(z)&=z^N \Phi_{j-N}^{(m)}(z)\\
&=z^N\sqrt{\frac{2\pi {z}^m}{\hbar}}e^{-\frac{z^m}{m\hbar}}\mathcal I_{j-1}^{(m,N)}({z})\\
&=z^N\sqrt{\frac{2\pi {z}^m}{\hbar}}e^{-\frac{z^m}{m\hbar}}\frac{1}{2 \pi \iota}\int_{\gamma}e^{\frac{1}{m(m+1)\hbar}\left(m z^{m+1}\varphi+\frac{1}{\varphi^{m}}\right)}\frac{d \varphi}{\varphi^{j-N}}.
\end{split}
\end{equation}
By Remark \ref{R4.1} we have  $\Phi_j^{(m,N)}(z)=z^{j-1}\left(1+O(z^{-1})\right)$, therefore for a given $m\in {\mathbb Z}_{>0}$ and arbitrary parameter $N$ they constitute an admissible basis for a point of the Sato Grassmannian
\begin{equation}
\mathcal{W}_{m}^N=\sppan\left\{ \Phi_1^{(m,N)},\Phi_2^{(m,N)},\Phi_3^{(m,N)},\ldots\right\} \in \rm{Gr}^{(0)}_+,
\end{equation}
and
\begin{equation}
\tau^{(m,N)}\left(\left[\Lambda^{-1}\right]\right)=\frac{\text{det}_{i,j=1}^M\Phi_j^{(m,N)}(\lambda_i)}{\Delta(\lambda)}
\end{equation}
is a tau-function of the KP hierarchy. We call it the generalized higher BGW tau-function.
This is a one parameter deformation of the higher BGW tau-function,
\be
\tau^{(m,0)}=\tau^{(m)}.
\ee

The coefficients of the expansion of the basis vectors $\Phi_j^{(m,N)}(z)$ can be obtained from the coefficients of $\Phi_j^{(m)}(z)$, given in Appendix \ref{A1}, by a translation of $j$.
For example,  for $ j=1$ and $m=2$ we have
\begin{equation}
\Phi_1^{(2,N)}(z)=1+\hbar \frac{1 - 3 N - 3 N^2}{6 z^2}+\hbar^2 \frac{1 - 60 N - 15 N^2 + 30 N^3 + 9 N^4}{72 z^4}+\ldots.
\end{equation}

\begin{remark}
In \cite{KS2} for $m=1$ (the original generalized BGW model) it was proven that for $N\in {\mathbb Z}+1/2$ the tau-function is polynomial in ${\bf t}$ and is given by a certain Schur function. For the higher $m$ we do not see any similar polynomiality property.
\end{remark}

\subsection{KS algebra for $\tau^{(m,N)}$}
In Section \ref{kso}, we construct the KS algebra for the tau-function $\tau^{(m)}$ with $m>1$.  In this section, we construct a deformation of this KS algebra with the deformation parameter $N$. This deformation describes the tau-function    $\tau^{(m,N)}$.
Again, using the integral representation of the basis vectors $ \Phi_j^{(m,N)}$, one shows that the operators $\mathtt a_m$ and $\mathtt b_m$ given by \eqref{ks} satisfy the equations
\begin{eqnarray}
{\mathtt a}_m\cdot  \Phi_j^{(m,N)} &=&((j-1)(m+1)-Nm)\Phi_j^{(m,N)}+\frac{1}{\hbar}\Phi_{j+m}^{(m,N)},\nonumber\\
\mathtt b_m\cdot \Phi_j^{(m,N)}&=&(j-N) (m+1)\hbar \Phi_{j+1}^{(m,N)}+  \Phi_{j+m+1}^{(m,N)},
\end{eqnarray}
and are the KS operators for $\tau^{(m,N)}$
\begin{equation}
{\mathtt a}_m\cdot \mathcal{W}_{m}^N \in \mathcal{W}_{m}^N,\quad \quad 
{\mathtt b}_m\cdot \mathcal{W}_{m}^N \in \mathcal{W}_{m}^N.
\end{equation}
\begin{remark}
The operators ${\mathtt a}_m$ and ${\mathtt b}_m$ are KS for the family $\tau^{(m,N)}$ with all values of $N$, therefore they do not specify a tau-function and do not generate the KS algebra.
\end{remark}

From integration by parts we have
\begin{equation}
\frac{\hbar}{\mathtt b_m}\left({\mathtt a}_m -N\right)\cdot \Phi_j^{(m,N)}=\Phi_{j-1}^{(m,N)},
\end{equation}
but the operator $\frac{\hbar}{\mathtt b_m}\left({\mathtt a}_m -N\right)$ is not a KS operator for $\mathcal{W}_m^N$ because 
\begin{equation}
\frac{\hbar}{\mathtt b_m}\left({\mathtt a}_m -N\right) \cdot \Phi_1^{(m,N)}=\Phi_{0}^{(m,N)}\notin \mathcal{W}_m^N.
\end{equation}

Let us consider the operators 
\begin{equation}
\begin{split}
\mathtt c_m^{N}&=\frac{\hbar}{\mathtt b_m}({\mathtt a}_m-N)({\mathtt a}_m+Nm),\\
\mathtt{d}_m^N &=\frac{1}{\hbar\left({\mathtt a}_m -N\right)}{\mathtt b}_m.
\end{split}
\end{equation}
For $N=0$ they coincide with the operators, considered in Section \ref{kso}, $\mathtt c_m^{0}=\mathtt c_m$ and $\mathtt d_m^{0}=\mathtt d_m$.
These operators belong to ${\mathcal D}$,
\begin{equation}\label{Gcm}
\begin{split}
\mathtt c_m^{N}&=\hbar z^{1-m}\frac{\partial^2}{\partial z^2}+\frac{z^{m-1}}{\hbar}-\frac{m\hbar(2N+m)(2N-1) }{4z^{m+1}}+\frac{N(m-1)}{z}\\&+\left(2 +\hbar\frac{(1-N)(1-m)}{z^m}\right)\frac{\partial}{\partial z},\\
\mathtt{d}_m^N &=\frac{1}{z^{m}\left(1+\hbar z^{-m}\left(z\frac{\p}{\p z}-\frac{m}{2}-N\right)\right)}z^{m+1}.
\end{split}
\end{equation}
The operator $\mathtt{d}_m^N$ can be expressed as a formal series
\begin{equation}\label{Gdser}
{\mathtt d}_m^N=\sum_{k=0}^{\infty}\left(-\hbar z^{-m}\left(z\frac{\p}{\p z}-\frac{m}{2}-N\right)\right)^kz.
\end{equation}
These operators act on the basis vectors  $\Phi_j^{(m,N)}$ by a direct generalization of \eqref{dKS},
\begin{equation}\label{ks1}
\begin{split}
\mathtt c_m^{N}\cdot \Phi_j^{(m,N)}&=(j-1)(m+1)\Phi_{j-1}^{(m,N)}+\frac{1}{\hbar}\Phi_{j+m-1}^{(m,N)},\\
\mathtt{d}_m^N\cdot \Phi_j^{(m,N)}&=\Phi_{j+1}^{(m,N)}.
\end{split}
\end{equation}
Therefore, these operators are KS operators for $\mathcal{W}_{m}^N$.

The operators ${\mathtt c}_m^N$ and ${\mathtt d}_m^N$ satisfy the commutation relation
\be
[{\mathtt c}_m^N,{\mathtt d}_m^N]=m+1.
\ee
The KS operators ${\mathtt c}_m^N$ and ${\mathtt d}_m^N$ completely specify the point ${\mathcal W}_m^N$ of the Sato Grassmannian. 

\begin{proposition}\label{Prop5.0}
For the generalized BGW tau-function $\tau^{(2,N)}$ the canonical pair of KS operators is given by
 \begin{equation}
\begin{split}
{\mathtt P}_{{\mathcal W}_2^N}&=\sqrt{\hbar^{-1}{\mathtt c}_2^N} -\frac{1}{\hbar},\\
{\mathtt Q}_{{\mathcal W}_2^N}&=\left({\mathtt d}_2^N{\mathtt c}_2^N-N+\frac{1}{2}\right)\frac{1}{\sqrt{\hbar^{-1}{\mathtt c}_2^N}}.
\end{split}
\end{equation}
 \end{proposition}
 
\begin{proposition}\label{P5.1}
For $m\geq 2$ the KS operators ${\mathtt c}_m^N$ and ${\mathtt d}_m^N$ completely specify the point ${\mathcal W}_m^N$ of the Sato Grassmannian. Namely, the quantum spectral curve operator 
is given by
\be\label{GC1}
{\mathtt P}_{{\mathcal W}_m^N}=\frac{1}{m+1}\left( {\mathtt c}_m^N-\frac{1}{\hbar}({\mathtt d}_m^N)^{m-1}\right).
\ee
For $m=2$ we also have
\begin{equation}\label{GC2}
\begin{split}
{\mathtt Q}_{{\mathcal W}_2^N}&=\frac{1}{3}\left(\hbar {\mathtt c}_2^N+2{\mathtt d}_2^N\right),
\end{split}
\end{equation}
for $m\geq3$ the operator ${\mathtt d}_m^N$ coincides with a canonical KS operator
\begin{equation}\label{GdasQ}
\begin{split}
{\mathtt Q}_{{\mathcal W}_m^N}&={\mathtt d}_m^N.
\end{split}
\end{equation}
\end{proposition}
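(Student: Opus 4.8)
The plan is to run the argument of Proposition~\ref{P4.2} essentially verbatim, the only genuinely new input being that the deformation parameter $N$ enters ${\mathtt c}_m^N$ and ${\mathtt d}_m^N$ only through their ${\mathcal D}_-$-parts. By \eqref{ks1} both ${\mathtt c}_m^N$ and ${\mathtt d}_m^N$ stabilize ${\mathcal W}_m^N$, so every combination appearing in \eqref{GC1}, \eqref{GC2} and \eqref{GdasQ} is automatically a KS operator for ${\mathcal W}_m^N$. As recalled after \eqref{GrD}, the absence of nontrivial KS operators in ${\mathcal D}_-$ \cite{H3_2} means it suffices to check that the operator in \eqref{GC1} lies in $\frac{\p}{\p z}+{\mathcal D}_-$ and that the operators in \eqref{GC2}, \eqref{GdasQ} lie in $z+{\mathcal D}_-$; uniqueness then forces them to be the canonical pair. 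Thus the whole proof reduces to computing ${\mathcal D}_+$-components, which I write $(\,\cdot\,)_+$.

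First I would isolate the $N$-dependence. Comparing \eqref{Gcm} with \eqref{cex}, the three $N$-dependent summands of ${\mathtt c}_m^N$, namely $N(m-1)z^{-1}$, the $z^{-m-1}$ term, and $\hbar(1-N)(1-m)z^{-m}\frac{\p}{\p z}$, all carry a strictly negative power of $z$ for $m\geq 2$ and hence lie in ${\mathcal D}_-$. Therefore $({\mathtt c}_m^N)_+=\frac{z^{m-1}}{\hbar}+2\frac{\p}{\p z}$, exactly as for ${\mathtt c}_m$. Similarly, from the series \eqref{Gdser} the difference ${\mathtt d}_m^N-{\mathtt d}_m$ is supported on monomials whose coefficient carries a power of $z$ at most $1-m$, the leading such term being $\hbar N z^{1-m}$ at $k=1$. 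The key threshold observation is that inside the product $({\mathtt d}_m^N)^{m-1}$ a single differentiating factor $z^{p}\frac{\p}{\p z}$ surrounded by $m-2$ leading factors $z$ contributes to ${\mathcal D}_+$ only when $p\geq 2-m$; since every $N$-dependent monomial has $p\leq 1-m<2-m$, it never reaches ${\mathcal D}_+$. Consequently $\bigl(({\mathtt d}_m^N)^{j}\bigr)_+$ coincides with $\bigl(({\mathtt d}_m)^{j}\bigr)_+$ for all $j$, and the computation is identical to the undeformed one.

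With the $N$-dependence removed I would finish as in Proposition~\ref{P4.2}. For $m\geq 3$ the remainder ${\mathtt d}_m^N-z$ consists of monomials of coefficient-power $\leq 2-m\leq -1$, so ${\mathtt d}_m^N\in z+{\mathcal D}_-$, giving \eqref{GdasQ}. For the spectral curve operator, the only ${\mathcal D}_+$-contribution to $({\mathtt d}_m^N)^{m-1}$ beyond the leading $z^{m-1}$ comes from placing the boundary piece $-\hbar z^{2-m}\frac{\p}{\p z}$ in one of the $m-1$ slots, which yields $-\hbar(m-1)\frac{\p}{\p z}$, since two or more differentiating factors drop the power to $1-m<0$. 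Hence $\bigl(\tfrac{1}{\hbar}({\mathtt d}_m^N)^{m-1}\bigr)_+=\frac{z^{m-1}}{\hbar}-(m-1)\frac{\p}{\p z}$; subtracting this from ${\mathtt c}_m^N$ cancels the $z^{m-1}$ and leaves $(m+1)\frac{\p}{\p z}$, proving \eqref{GC1}. For $m=2$ one combines $(\hbar{\mathtt c}_2^N)_+=z+2\hbar\frac{\p}{\p z}$ with $(2{\mathtt d}_2^N)_+=2z-2\hbar\frac{\p}{\p z}$; the first-derivative terms cancel, so $\frac{1}{3}(\hbar{\mathtt c}_2^N+2{\mathtt d}_2^N)\in z+{\mathcal D}_-$, which is \eqref{GC2}.

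The only delicate point, and where I would spend the most care, is the power-counting of the previous paragraph: verifying that every $N$-dependent monomial and every multi-derivative cross term in $({\mathtt d}_m^N)^{m-1}$ really falls below the power $2-m$ needed to survive projection onto ${\mathcal D}_+$. Once this threshold bound is in hand, the remainder is bookkeeping identical to Proposition~\ref{P4.2}. As a consistency check one may note that the relation $[{\mathtt c}_m^N,{\mathtt d}_m^N]=m+1$ immediately yields $[{\mathtt P}_{{\mathcal W}_m^N},{\mathtt Q}_{{\mathcal W}_m^N}]=1$ in all cases: for $m\geq 3$ this is $\frac{1}{m+1}[{\mathtt c}_m^N,{\mathtt d}_m^N]$, and for $m=2$ the combination gives $\frac{1}{9}\cdot 3\cdot[{\mathtt c}_2^N,{\mathtt d}_2^N]=1$, confirming that $({\mathtt P}_{{\mathcal W}_m^N},{\mathtt Q}_{{\mathcal W}_m^N})\in\Gr_{\mathcal D}$ as required.
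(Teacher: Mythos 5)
Your proposal is correct and follows essentially the same route as the paper: the paper's entire proof of Proposition~\ref{P5.1} is the remark that one literally repeats the proof of Proposition~\ref{P4.2}, which is exactly what you do. Your explicit verification that every $N$-dependent monomial in ${\mathtt c}_m^N$ and in the expansion of $({\mathtt d}_m^N)^{m-1}$ falls below the power threshold needed to reach ${\mathcal D}_+$ (so that the ${\mathcal D}_+$-projections coincide with the undeformed case), together with the commutator consistency check, simply makes precise the step the paper leaves implicit.
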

The proofs literally repeats the proofs of Propositions \ref{Prop3.1} and \ref{P4.2}.

\subsection{Quantum spectral curve}\label{S5.3}        
 
 Let 
 \be
 \widehat{x}=\frac{z^{m+1}}{m+1},\quad \quad \widehat{y}=\frac{\hbar}{z^{m}}\frac{\p}{\p z}.
 \ee
These operators satisfy the canonical commutation relation
 \be
 \left[\widehat{y}, \widehat{x}\right]=\hbar.
 \ee
 
Using integration by parts it is easy to show that the functions ${\mathcal I}^{(m,n)}_{j}={\mathcal I}^{(m)}_{j-N}$, given by \eqref{bess}, satisfy
\be\label{Nqsc1}
\left((m+1)(m+1-j+N)\hbar\widehat{y}^m+(m+1)\widehat{x}\widehat{y}^{m+1}-1\right){\mathcal I}^{(m,N)}_{j-1}=0.
\ee
In particular, for $j=1$ we have
\be\label{Nqsc}
\left((m+N)(m+1)\hbar\widehat{y}^m+(m+1)\widehat{x}\widehat{y}^{m+1}-1\right){\mathcal I}^{(m,N)}_{0}=0.
\ee   
This equation can be interpreted as a quantum spectral curve equation.

It is natural to consider another normalization of the parameter $N$ and introduce $S=N \hbar$. 
For instance, in this normalization for $m=1$ the quantum spectral curve \eqref{Nqsc} reduces to
\be
\left(2(S+\hbar)\widehat{y}+2\widehat{x}\widehat{y}^2-1\right){\mathcal I}_{0}^{(1,N)}=0.
\ee 
Then the classical spectral curve takes the form
\be
(m+1)(xy+S)y^m=1.
\ee

For $m=2$ the requiered solution to the equation (\ref{Nqsc1}) is given by (\ref{Aseq}) after the substitution of $j\rightarrow j-N$.
\subsection{W-constraints}
Let us construct the  $W^{(3)}$-constraints for the tau-function $\tau^{(m,N)}$. These constraints are given by the operators correspondent to the particular linear combinations of the  KS operators $\mathtt b^k_m$,  $\mathtt b_m^k{\mathtt a}_m$ and  $\mathtt b_m^k{\mathtt a}_m^2$.  The construction of the constraints is completely analogous to the construction of the constraints for the higher BGW tau-functions given in Section \ref{S3.6}. Namely, we leave the operators ${\mathtt j}_k^{(m)}$ and ${\mathtt l}_k^{(m)}$, which are are defined by equation (\ref{vir}), undeformed.  For the operators ${\mathtt m}_k^{(m)}$ we introduce the following deformation
\begin{equation}
\begin{split}
{\mathtt m}_k^{(m,N)}&=\frac{1}{m+1}\mathtt  b_m^k ({\mathtt a}_m-N)({\mathtt a}_m+Nm)+(k+1){\mathtt b}_m^k\left({\mathtt a}_m+\frac{1}{6}(m+1)(k+2)+\frac{A_{m,N}}{2}\right)\\
&=\frac{1}{m+1}\left({\mathtt m}_{(m+1)k}-\frac{2}{\hbar}{\mathtt l}_{(m+1)k+m}+\frac{1}{\hbar^2}{\mathtt j}_{(m+1)k+2m}+C_{m,N}{\mathtt j}_{(m+1)k}\right.\\ & \left. -A_{m,N}\left(\mathtt l_{(m+1)k}-\frac{1}{\hbar}{ \mathtt j_{(m+1)k+m}}\right)\right),
\end{split}
\end{equation}
where
\begin{equation}
C_{m,N}=C_{m}-N^2m, \quad \quad A_{m,N}=N(m-1).
\end{equation} 
Operators  ${\mathtt j}_k^{(m)}$ for $k \geq 1$, ${\mathtt l}_k^{(m)} $ for $k \geq 0$, and ${\mathtt m}_k^{(m,N)}$ for $k \geq -1$ are the KS operators for
$\mathcal{W}_m^N$. Let us apply Lemma \ref{lemma33} to construct  operators annihilating the tau-function $\tau^{(m,N)}$. 
Namely, consider the operators
\begin{equation}\label{W3N}
\begin{split}
\widehat{\mathcal J}_{k}^{(m,N)}&=\frac{1}{m+1}\widehat{J}_{(m+1)k},\\
\widehat{\mathcal L}_k^{(m,N)}&=\frac{1}{m+1}\left(\widehat{L}_{(m+1)k} -\frac{1}{\hbar}\widehat{J}_{(m+1)k+m}+\frac{1}{2}C_{m,N}\delta_{k,0}\right),\\
\widehat{\mathcal M}_k^{(m,N)}&=\frac{1}{m+1}\left(\widehat{M}_{(m+1)k}-\frac{2}{\hbar}  \widehat{L}_{(m+1)k+m}+\frac{1}{\hbar^2}\widehat{J}_{(m+1)k+2m}+C_{m,N}\widehat{J}_{(m+1)k}+\right.\\  &\left. -A_{m,N}\left(\widehat{L}_{(m+1)k}-\frac{1}{\hbar}{ \widehat{J}_{(m+1)k+m}}\right)-\frac{A_{m,N}}{3}\left(\frac	{1}{2}C_{m,N}+\frac{1}{12}(m^2+2m)\right)\delta_{k,0}\right).
\end{split}
\end{equation}
Commutation relations between these operators take the form
\begin{equation}
 \begin{aligned}
\left[{\widehat{\mathcal J}}_k^{(m,N)},~ { \widehat{\mathcal J}}_{k'}^{(m,N)} \right]&=0,~\quad&&k,k'\geq 1,\\
\left[{\widehat{\mathcal J}}_k^{(m,N)},~ { \widehat{\mathcal L}}_{k'}^{(m,N)} \right]&=k{\widehat{\mathcal J}}_{k+k'}^{(m,N)},&&k\geq 1,k'\geq 0,\\
\left[{\widehat{\mathcal L}}_k^{(m,N)},~ { \widehat{\mathcal L}}_{k'}^{(m,N)} \right]&=(k-k')\widehat{\mathcal L}_{k+k'}^{(m,N)},~\quad&&k,k'\geq 0,\\\left[{\widehat{\mathcal J}}_k^{(m,N)},~ {\widehat{\mathcal M}}_{k'}^{(m,N)} \right]&=2k{\widehat{\mathcal L}}_{k+k'}^{(m,N)}-kA_{m,N}{\widehat{\mathcal J}}_{k+k'}^{(m,N)},~\quad&&k\geq 1,k'\geq -1,\\
\left[{\widehat{\mathcal L}}_k^{(m,N)},~ { \widehat{\mathcal M}}_{k'}^{(m,N)}\right]&=(2k-k'){\widehat{\mathcal M}}_{k+k'}^{(2)}+k ~A_{m,N}{\widehat{\mathcal L}}_{k+k'}^{(m,N)}\\ &+\left(\frac{1}{6}k((m+1)^2(k^2-1)-12N^2m\right){\widehat{\mathcal J}}_{k+k'}^{(m,N)},
 \qquad~&&k\geq 0,k'\geq -1.
 \end{aligned}
\end{equation}
From these commutation relations we have the following theorem. 
\begin{theorem}\label{T3}
The generalized higher BGW tau-functions $\tau^{(m,N)}$ satisfy the $W^{(3)}$-constraints
\begin{equation}\label{contmn2}
\begin{split}
\widehat{\mathcal J}_{k}^{(m,N)}\cdot \tau^{(m,N)}&=0,\quad k\geq1,\\
\widehat{\mathcal L}_{k}^{(m,N)}\cdot \tau^{(m,N)}&=0,\quad k\geq0,\\
\widehat{\mathcal M}_{k}^{(m,N)}\cdot \tau^{(m,N)}&=0, \quad k\geq -1.
\end{split}
\end{equation}
\end{theorem}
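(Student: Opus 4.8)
The plan is to mirror the proof of Theorem \ref{T2}, inserting the $N$-dependent modifications. The input is the statement, recorded just above, that ${\mathtt j}_k^{(m)}$ for $k\geq 1$, ${\mathtt l}_k^{(m)}$ for $k\geq 0$, and ${\mathtt m}_k^{(m,N)}$ for $k\geq -1$ are KS operators for $\mathcal{W}_m^N$. I would first make explicit why the lower endpoint $k=-1$ is admissible for ${\mathtt m}_k^{(m,N)}$: writing $({\mathtt a}_m-N)({\mathtt a}_m+Nm)=\tfrac{1}{\hbar}{\mathtt b}_m{\mathtt c}_m^N$, the first summand of ${\mathtt m}_k^{(m,N)}$ becomes $\tfrac{1}{\hbar(m+1)}{\mathtt b}_m^{k+1}{\mathtt c}_m^N$, a product of the KS operators ${\mathtt b}_m^{k+1}$ (a nonnegative power) and ${\mathtt c}_m^N$, while the prefactor $(k+1)$ annihilates the ${\mathtt b}_m^k{\mathtt a}_m$ and ${\mathtt b}_m^k$ pieces exactly at $k=-1$, where those would otherwise fail to be KS. Since each of ${\mathtt j}_k^{(m)},{\mathtt l}_k^{(m)},{\mathtt m}_k^{(m,N)}$ is, by \eqref{vir} and its deformation, a linear combination of the ${\mathtt j},{\mathtt l},{\mathtt m}$ with ${\bf t}$-independent coefficients, Lemma \ref{lemma33} applies; together with the constant shifts in \eqref{W3N}, which are multiples of the identity and hence also scalars, it shows that each $\widehat{\mathcal J}_k^{(m,N)}$, $\widehat{\mathcal L}_k^{(m,N)}$, $\widehat{\mathcal M}_k^{(m,N)}$ acts on $\tau^{(m,N)}$ as multiplication by some ${\bf t}$-independent scalar, which I denote $\mu^J_k,\mu^L_k,\mu^M_k$.

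The heart of the argument is to show every such scalar vanishes, using only the displayed commutation relations and the elementary fact that operators acting as scalars on $\tau^{(m,N)}$ commute there, so every commutator annihilates $\tau^{(m,N)}$. Evaluating each bracket on $\tau^{(m,N)}$ turns the operator identity into a homogeneous linear relation among the $\mu$'s, which I would solve in the order $\widehat{\mathcal J}\to\widehat{\mathcal L}\to\widehat{\mathcal M}$. From $[\widehat{\mathcal J}_k^{(m,N)},\widehat{\mathcal L}_{k'}^{(m,N)}]=k\widehat{\mathcal J}_{k+k'}^{(m,N)}$ with $k\geq 1$ the coefficient $k$ is nonzero, giving $\mu^J_n=0$ for every $n\geq 1$ (and $\widehat{\mathcal J}_0=0$ since $\widehat J_0=0$). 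Feeding this into $[\widehat{\mathcal J}_k^{(m,N)},\widehat{\mathcal M}_{k'}^{(m,N)}]=2k\widehat{\mathcal L}_{k+k'}^{(m,N)}-kA_{m,N}\widehat{\mathcal J}_{k+k'}^{(m,N)}$ annihilates the $A_{m,N}$-term on $\tau^{(m,N)}$ and leaves $2k\mu^L_{k+k'}=0$; for any $n\geq 0$ the admissible choice $k=1,\,k'=n-1$ then forces $\mu^L_n=0$. This mixed bracket is essential for the value $n=0$, which the bracket $[\widehat{\mathcal L}_k^{(m,N)},\widehat{\mathcal L}_{k'}^{(m,N)}]=(k-k')\widehat{\mathcal L}_{k+k'}^{(m,N)}$ alone cannot reach.

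Finally, with $\mu^J=\mu^L=0$ the relation $[\widehat{\mathcal L}_k^{(m,N)},\widehat{\mathcal M}_{k'}^{(m,N)}]=(2k-k')\widehat{\mathcal M}_{k+k'}^{(m,N)}+k\,A_{m,N}\widehat{\mathcal L}_{k+k'}^{(m,N)}+(\dots)\widehat{\mathcal J}_{k+k'}^{(m,N)}$ collapses on $\tau^{(m,N)}$ to $(2k-k')\mu^M_{k+k'}=0$. For any $n\geq -1$ with $n\neq 0$ the choice $k=0,\,k'=n$ makes $2k-k'=-n\neq 0$ while simultaneously killing the $A_{m,N}\widehat{\mathcal L}$ and central $\widehat{\mathcal J}$ terms (both carry a factor $k=0$), so $\mu^M_n=0$; the value $n=0$ follows from $k=1,\,k'=-1$, where $2k-k'=3$. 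Hence $\mu^M_n=0$ for all $n\geq -1$, which is exactly \eqref{contmn2}. The argument needs no hard analysis; its delicate points are the endpoint indices $n=0$, where the same-type brackets degenerate and one must invoke the mixed brackets at $k=1,\,k'=-1$, together with the fact---taken from the displayed relations---that the algebra closes with the stated constants $C_{m,N}$ and $A_{m,N}$, which is the only place the deformation parameter $N$ genuinely intervenes. I expect this last, purely mechanical verification to be the main labor, while the descent above obstructs nowhere because the new $A_{m,N}$-terms always multiply operators already shown to annihilate $\tau^{(m,N)}$.
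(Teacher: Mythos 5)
Your proposal is correct and follows essentially the same route as the paper: establish that the relevant operators are KS operators, invoke Lemma \ref{lemma33} to get ${\bf t}$-independent eigenvalues, and then use the displayed commutation relations to show every operator in the family is (modulo terms already known to annihilate $\tau^{(m,N)}$) a commutator of family members, forcing all eigenvalues to vanish. The paper compresses this into the single sentence ``From these commutation relations we have the following theorem,'' so your explicit descent through $\widehat{\mathcal J}\to\widehat{\mathcal L}\to\widehat{\mathcal M}$, including the endpoint cases $\widehat{\mathcal L}_0$ and $\widehat{\mathcal M}_0$ via the mixed brackets at $k=1,\,k'=-1$, and your verification that ${\mathtt m}_{-1}^{(m,N)}=\tfrac{1}{\hbar(m+1)}{\mathtt c}_m^N$ is indeed KS, simply makes explicit what the paper leaves implicit.
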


In particular, we see that the $N$-deformations of the higher BGW models are tau-functions of the $m+1$-reduction of the KP hierarchy.

In the next section we will apply these constraints for $m=2$ to construction of the cut-and-join description for the tau-function $\tau^{(2,N)}$,
so let us consider this case in more detail. For $ m=2$,  $C_{2,N}=\frac{2}{3}-2N^2$, and the operators \eqref{W3N} took the form
\begin{equation}
\begin{split}
\widehat{\mathcal J}_{k}^{(2,N)}&=\frac{1}{3}\widehat{J}_{3k},\\
\widehat{\mathcal L}_k^{(2,N)}&=\frac{1}{3}\left(\widehat{L}_{3k} -\frac{1}{\hbar}\widehat{J}_{3k+2}+\frac{1}{2}C_{2,N}\delta_{k,0}\right) ,\\
\widehat{\mathcal M}_k^{(2,N)}&=\frac{1}{3}\left(\widehat{M}_{3k}-\frac{2}{\hbar}  \widehat{L}_{3k+2}+\frac{1}{\hbar^2}\widehat{J}_{3k+4} -N\left(\widehat{L}_{3k}-\frac{1}{\hbar}{ \widehat{J}_{3k+2}}\right)+C_{2,N}\widehat{J}_{3k}+\frac{N^3-N}{3}\delta_{k,0}\right).
\end{split}
\end{equation}

\subsection{Cut-and-join operators and algebraic topological recursion}
In this section, we describe the cut-and-join operators for the generalized higher BGW tau-function $\tau^{(2,N)}$. From Lemma \ref{lemma31} and equation \eqref{bvN} one has
\be\label{GH1}
\widehat{D}\cdot \tau^{(m,N)}=m  \hbar \frac{\p}{\p \hbar} \tau^{(m,N)}, 
\ee
where $\widehat{D}$ is the Euler operator.
Repeating the arguments of Section \ref{S4.6} we have
 \begin{equation}\label{GH2}
\widehat{D}\cdot \tau^{(2,N)} =\left(\hbar\widehat{W}_1 ^{(N)}+\hbar^2\widehat{W}_2^{(N)} \right)\cdot \tau^{(2,N)},
\end{equation}
where the cut-and-join operators $\widehat{W}_1^{(N)}$ and $\widehat{W}_2^{(N)}$ are
\begin{equation}
\begin{split}
\widehat{W}_1^{(N)} &=\sum_{k=0}^{\infty}\left((3k+2)t_{(3k+2)}\widehat{L}_{3k}+2(3k+1)t_{(3k+1)}\widehat{L}_{3k-1}\right)+2\left(\frac{1}{3}-N^2\right)t_2\\
 & -N\left(t_1^2+4t_4\frac{\partial}{\partial t_2}\right)\\
\widehat{W}_2 ^{(N)}&=-\sum_{k=0}^{\infty}(3k+1)t_{(3k+1)} \widehat{ M}_{3k-3}-\left(2-6N^2\right)t_3t_1 +N\left(4t_4 \widehat{L}_{0}+ t_1\widehat{L}_{-3}\right)\\
&-\frac{4(N^3-N)}{3}t_4.
\end{split}
\end{equation}
Combining \eqref{GH1} and \eqref{GH2} we arrive at the cut-and-join equation
\be\label{GH3}
 \frac{\p}{\p \hbar} \tau^{(2,N)}=\frac{1}{2} \left(\widehat{W}_1 ^{(N)}+\hbar\widehat{W}_2^{(N)} \right)\cdot \tau^{(2,N)}.
\ee

Let us consider the topological expansion of the generalized higher BGW tau-function
\begin{equation}
\tau^{(m,N)}({\bf t},\hbar)=1+\sum_{k=1}^{\infty}\hbar^k \tau^{(m,N,k)}(\bf t).
\end{equation}
Then the cut-and-join equation \eqref{GH3} is equivalent to the recursive relation
\begin{equation}
2 k \, \tau^{(2,N,k)}= \widehat{W}_1^N \cdot \tau^{(2,N,k-1)}+\widehat{W}_2^N \cdot \tau^{(2,N,k-2)}.
\end{equation}
This recursion allows us to find all the polynomials $ \tau^{(2,N,k)}$ for $k>0$ recursively with the initial conditions  $\tau^{(2,N,-1)} =0$ and $\tau^{(2,N,0)}=1$.  
The polynomials $\tau^{(2,k,N)}$ for $1\leq k \leq 4$ and the coefficients ${\mathcal F}^k$ of the expansion 
\be
\log \tau^{(2,N)} =\sum_{k=1}^\infty \hbar^k {\mathcal F}^k
\ee
for $1\leq k \leq 6$ are given in Appendix \ref{taufunction}.

\appendix
\section{Coefficients of the basis vectors}\label{A1}
\begin{flalign}
  \begin{aligned}
\Phi_{j,1}^{(2)}&=-\frac{1}{6} \left(3 j^2-9 j+5\right) 
\end{aligned}&&&
\end{flalign}

\begin{flalign}
  \begin{aligned}
\Phi_{j,2}^{(2)}&=\frac{1}{72} \left(9 j^4-66 j^3+129 j^2-36 j-35\right)
\end{aligned}&&&
\end{flalign}

\begin{flalign}
  \begin{aligned}
\Phi_{j,3}^{(2)}&=-\frac{1
}{1296} \left(27 j^6-351 j^5+1350 j^4-855 j^3-3312 j^2+3051 j+665\right)
\end{aligned}&&&
\end{flalign}

\begin{flalign}
  \begin{aligned}
\Phi_{j,4}^{(2)}&=\frac{1}{31104}\left(81 j^8-1620 j^7+10206 j^6-12852 j^5-77301 j^4+187740 j^3+82554 j^2\right.\\&\left.-285408 j+9625\right)
\end{aligned}&&&
\end{flalign}

\begin{flalign}
  \begin{aligned}
\Phi_{j,1}^{(m)}&=-\frac{1}{24} \left(12 j^2-12 j (m+1)+2 m^2+5 m+2\right)
\end{aligned}&&&
\end{flalign}

\begin{flalign}
  \begin{aligned}
\Phi_{j,2}^{(m)}&=\frac{1}{1152}\left(144 j^4-96 j^3 (5 m+1)+24 j^2 \left(20 m^2+5 m-4\right)\right.\\&\left.-24 j \left(6 m^3-m^2-9
m-2\right)+4 m^4-28 m^3-87 m^2-28 m+4\right) 
\end{aligned}&&&
\end{flalign}

\begin{flalign}
  \begin{aligned}
\Phi_{j,3}^{(m)}&=-\frac{1}{414720}\left(8640 j^6-8640 j^5 (7 m-1)+10800 j^4 \left(14 m^2-7 m-2\right)\right.\\&\left.-1440 j^3
\left(112 m^3-147 m^2-63 m+8\right)+180 j^2 \left(364 m^4-1288 m^3-567 m^2\right.\right.\\
&\left.\left.+392 m+76\right)-180 j \left(20 m^5-496 m^4-99 m^3+589 m^2+160 m-12\right)\right.\\&\left.-1112 m^6-6036 m^5+8934 m^4+38953 m^3+8934 m^2-6036 m-1112\right)
\end{aligned}&&&
\end{flalign}

\begin{flalign}
  \begin{aligned}
\Phi_{j,4}^{(m)}&=\frac{1}{39813120}\left( 103680 j^8-414720 j^7
(3 m-1)+725760 j^6 \left(8 m^2-7 m\right)\right.\\&\left.-48384 j^5 \left(274 m^3-489 m^2+39 m+26\right)\right.\\&\left.+30240 j^4 \left(500 m^4-1740 m^3+495 m^2+340 m-12\right)\right.\\&\left.-2880 j^3 \left(2588 m^5-19780 m^4+14453 m^3+9697 m^2-1892 m-404\right)\right.\\&
\left. +48 j^2\left(11488 m^6-547836 m^5+1007484 m^4+593353 m^3-411276 m^2\right.\right.\\&\left.\left.-114396
m+4288\right)+48 j\left(7784 m^7+55964 m^6-443022 m^5-154855 m^4\right.\right.\\&\left. \left.+500081 m^3+134826 m^2-34468 m
-5560
\right)
 -9136 m^8+430496 m^7\right.\\&\left.+2055608 m^6
 -1245112 m^5-8204587 m^4-1245112 m^3+2055608 m^2+430496 m\right.\\&\left.-9136
\right)
\end{aligned}&&&
\end{flalign}

\section{Topological expansion of the tau-function $\tau^{(2)}$}\label{taufunction1}
\begin{flalign}
  \begin{aligned}
 \tau^{(2,1)}=\frac{1}{3} t_2
\end{aligned}&&&
\end{flalign}
\begin{flalign}
  \begin{aligned}
 \tau^{(2,2)}= -\frac{1}{12}  t_1^4+\frac{7}{18}  t_2^2
\end{aligned}&&&
\end{flalign}
\begin{flalign}
  \begin{aligned}
 \tau^{(2,3)}=-\frac{13}{36} t_1^4 t_2+\frac{91}{162} t_2^3-\frac{4}{3} t_1^2 t_4
\end{aligned}&&&
\end{flalign}
\begin{flalign}
  \begin{aligned}
  \tau^{(2,4)}=\frac{1}{288}{t_1^8}-\frac{247}{216} t_1^4 t_2^2+\frac{1729}{1944} t_2^4-\frac{76}{9} t_1^2 t_2 t_4-\frac{26}{9} t_4^2-\frac{25}{9} t_1^3
t_5-\frac{49 }{9}t_1 t_7
\end{aligned}&&&
\end{flalign}
\begin{flalign}
  \begin{aligned}
\tau^{(2,5)}=&\frac{25}{864} t_1^8 t_2-\frac{6175}{1944} t_1^4 t_2^3+\frac{8645}{5832} t_2^5+\frac{5}{9} t_1^6 t_4-\frac{950}{27} t_1^2 t_2^2 t_4-\frac{650}{27}
t_2 t_4^2-\frac{625}{27} t_1^3 t_2 t_5\\ &-\frac{400}{9} t_1 t_4 t_5-\frac{1225}{27} t_1 t_2 t_7-\frac{200}{9} t_1^2 t_8-\frac{1225 }{81}t_{10}
\end{aligned}&&&
\end{flalign}
\begin{flalign}
  \begin{aligned}
\tau^{(2,6)}=&-\frac{1}{10368}t_1^{12}+\frac{775}{5184} t_1^8 t_2^2-\frac{191425}{23328} t_1^4 t_2^4+\frac{267995}{104976} t_2^6+\frac{155}{27} t_1^6
t_2 t_4-\frac{29450}{243} t_1^2 t_2^3 t_4\\ &+\frac{1525}{54} t_1^4 t_4^2-\frac{10075}{81} t_2^2 t_4^2+\frac{85}{108} t_1^7 t_5-\frac{19375}{162} t_1^3
t_2^2 t_5-\frac{12400}{27} t_1 t_2 t_4 t_5-\frac{225}{2} t_1^2 t_5^2\\ &+\frac{1225}{108} t_1^5 t_7-\frac{37975}{162} t_1 t_2^2 t_7-\frac{12250}{81} t_5 t_7-\frac{6200}{27}
t_1^2 t_2 t_8-\frac{12400}{81} t_4 t_8\\ &-\frac{37975}{243} t_2 t_{10}-\frac{33275}{243} t_1 t_{11}
\end{aligned}&&&
\end{flalign}
\begin{flalign}
  \begin{aligned}
\tau^{(2,7)}=&-\frac{37}{31104} t_1^{12} t_2+\frac{28675}{46656} t_1^8 t_2^3-\frac{1416545}{69984} t_1^4 t_2^5+\frac{1416545}{314928} t_2^7-\frac{1}{24}
t_1^{10} t_4+\frac{5735}{162} t_1^6 t_2^2 t_4\\ &-\frac{544825}{1458} t_1^2 t_2^4 t_4+\frac{56425}{162} t_1^4 t_2 t_4^2-\frac{372775}{729} t_2^3 t_4^2+\frac{12200}{27}
t_1^2 t_4^3+\frac{3145}{324} t_1^7 t_2 t_5\\ &-\frac{716875}{1458} t_1^3 t_2^3 t_5+\frac{3680}{27} t_1^5 t_4 t_5-\frac{229400}{81} t_1 t_2^2 t_4 t_5-\frac{2775}{2}
t_1^2 t_2 t_5^2-\frac{24700}{27} t_4 t_5^2\\ &+\frac{45325}{324} t_1^5 t_2 t_7-\frac{1405075}{1458} t_1 t_2^3 t_7+\frac{24500}{27} t_1^3 t_4 t_7-\frac{453250}{243}
t_2 t_5 t_7+\frac{6050}{81} t_1 t_6 t_7\\ &+\frac{610}{27} t_1^6 t_8-\frac{114700}{81} t_1^2 t_2^2 t_8-\frac{458800}{243} t_2 t_4 t_8-\frac{2953400}{1701} t_1
t_5 t_8+\frac{12100}{189} t_1 t_4 t_9\\ &+\frac{221725}{972} t_1^4 t_{10}-\frac{1405075}{1458} t_2^2 t_{10}+\frac{30250}{567} t_1 t_3 t_{10}-\frac{7619975}{5103}
t_1 t_2 t_{11}-\frac{111925}{243} t_{14}
\end{aligned}&&&
\end{flalign}
\begin{flalign}
  \begin{aligned}
\tau^{(2,8)}=&\frac{1}{497664}t_1^{16}-\frac{1591}{186624} t_1^{12} t_2^2+\frac{1233025}{559872} t_1^8 t_2^4-\frac{60911435}{1259712} t_1^4 t_2^6+\frac{60911435}{7558272}
t_2^8-\frac{43}{72} t_1^{10} t_2 t_4\\ &+\frac{246605}{1458} t_1^6 t_2^3 t_4-\frac{4685495}{4374} t_1^2 t_2^5 t_4-\frac{11005}{1296} t_1^8 t_4^2+\frac{2426275}{972}
t_1^4 t_2^2 t_4^2-\frac{16029325}{8748} t_2^4 t_4^2\\ &+\frac{524600}{81} t_1^2 t_2 t_4^3+\frac{29350}{27} t_4^4-\frac{145}{2592} t_1^{11} t_5+\frac{135235}{1944}
t_1^7 t_2^2 t_5-\frac{30825625}{17496} t_1^3 t_2^4 t_5\\ &+\frac{158240}{81} t_1^5 t_2 t_4 t_5-\frac{9864200}{729} t_1 t_2^3 t_4 t_5+\frac{516650}{81}
t_1^3 t_4^2 t_5+\frac{102775}{648} t_1^6 t_5^2-\frac{39775}{4} t_1^2 t_2^2 t_5^2\\ &-\frac{1062100}{81} t_2 t_4 t_5^2-\frac{14054125}{3402} t_1 t_5^3-\frac{922625}{567}
t_1 t_4 t_5 t_6-\frac{4865}{2592} t_1^9 t_7+\frac{1948975}{1944} t_1^5 t_2^2 t_7\\ &-\frac{60418225}{17496} t_1 t_2^4 t_7+\frac{1053500}{81} t_1^3 t_2
t_4 t_7+\frac{91750}{9} t_1 t_4^2 t_7+\frac{1556975}{486} t_1^4 t_5 t_7\\ &-\frac{9744875}{729} t_2^2 t_5 t_7-\frac{438625}{324} t_1 t_3 t_5 t_7+\frac{133100}{243}
t_1 t_2 t_6 t_7+\frac{1658125}{324} t_1^2 t_7^2+\frac{26230}{81} t_1^6 t_2 t_8\\ &-\frac{4932100}{729} t_1^2 t_2^3 t_8+\frac{777100}{243} t_1^4 t_4 t_8-\frac{9864200}{729}
t_2^2 t_4 t_8-\frac{102850}{81} t_1 t_3 t_4 t_8\\ &-\frac{121959575}{5103} t_1 t_2 t_5 t_8-\frac{414425}{567} t_1^2 t_6 t_8-\frac{6216400}{1701} t_8^2+\frac{238975}{567}
t_1 t_2 t_4 t_9-\frac{75625}{108} t_1^2 t_5 t_9\\ &+\frac{3025}{54} t_7 t_9+\frac{9534175}{2916} t_1^4 t_2 t_{10}-\frac{60418225}{13122} t_2^3 t_{10}+\frac{1467125}{3402}
t_1 t_2 t_3 t_{10}\\ &+\frac{18194800}{1701} t_1^2 t_4 t_{10}+\frac{15125}{162} t_6 t_{10}+\frac{645535}{1458} t_1^5 t_{11}-\frac{83620075}{8748} t_1
t_2^2 t_{11}-\frac{1031525}{2268} t_1^2 t_3 t_{11}\\ &-\frac{29514925}{5103} t_5 t_{11}+\frac{12984725}{3888} t_1^3 t_{13}+\frac{196625}{2268} t_3 t_{13}-\frac{16900675}{2916}
t_2 t_{14}
\end{aligned}&&&
\end{flalign}

\begin{flalign}
  \begin{aligned}
\tau^{(2,9)}=&\frac{49 }{1492992}t_1^{16} t_2-\frac{77959}{1679616} t_1^{12} t_2^3+\frac{12083645}{1679616} t_1^8 t_2^5-\frac{426380045}{3779136} t_1^4 t_2^7+\frac{2984660315}{204073344}
t_2^9\\ &+\frac{13}{7776} t_1^{14} t_4-\frac{2107}{432} t_1^{10} t_2^2 t_4+\frac{12083645}{17496} t_1^6 t_2^4 t_4-\frac{229589255}{78732} t_1^2 t_2^6
t_4-\frac{539245}{3888} t_1^8 t_2 t_4^2\\ &+\frac{118887475}{8748} t_1^4 t_2^3 t_4^2-\frac{157087385}{26244} t_2^5 t_4^2-\frac{23030}{27} t_1^6
t_4^3+\frac{12852700}{243} t_1^2 t_2^2 t_4^3+\frac{1438150}{81} t_2 t_4^4\\ &-\frac{7105}{7776} t_1^{11} t_2 t_5+\frac{6626515}{17496} t_1^7 t_2^3 t_5-\frac{302091125}{52488}
t_1^3 t_2^5 t_5-\frac{1645}{54} t_1^9 t_4 t_5+\frac{3876880}{243} t_1^5 t_2^2 t_4 t_5\\ &-\frac{120836450}{2187} t_1 t_2^4 t_4 t_5+\frac{25315850}{243}
t_1^3 t_2 t_4^2 t_5+\frac{858179800}{15309} t_1 t_4^3 t_5+\frac{5035975}{1944} t_1^6 t_2 t_5^2\\ &-\frac{1948975}{36} t_1^2 t_2^3 t_5^2+\frac{2078825}{81}
t_1^4 t_4 t_5^2-\frac{26021450}{243} t_2^2 t_4 t_5^2-\frac{13536875}{1701} t_1 t_3 t_4 t_5^2\\ &-\frac{289086625}{4374} t_1 t_2 t_5^3-\frac{11966900}{1701}
t_1 t_3 t_4^2 t_6-\frac{117687625}{5103} t_1 t_2 t_4 t_5 t_6-\frac{3554375}{972} t_1^2 t_5^2 t_6\\ &-\frac{865150}{243} t_1^2 t_4 t_6^2-\frac{238385}{7776}
t_1^9 t_2 t_7+\frac{95499775}{17496} t_1^5 t_2^3 t_7-\frac{592098605}{52488} t_1 t_2^5 t_7-\frac{6545}{9} t_1^7 t_4 t_7\\ &+\frac{25810750}{243}
t_1^3 t_2^2 t_4 t_7-\frac{1491325}{243} t_1 t_3^2 t_4 t_7+\frac{118457050}{729} t_1 t_2 t_4^2 t_7+\frac{76291775}{1458} t_1^4 t_2 t_5 t_7\\ &-\frac{477498875}{6561}
t_2^3 t_5 t_7-\frac{58911875}{2916} t_1 t_2 t_3 t_5 t_7+\frac{4603675}{27} t_1^2 t_4 t_5 t_7-\frac{223850}{243} t_1^5 t_6 t_7\\ &+\frac{801625}{486}
t_1 t_2^2 t_6 t_7-\frac{1355200}{243} t_1^2 t_3 t_6 t_7-\frac{2344375}{972} t_5 t_6 t_7+\frac{241372775}{2916} t_1^2 t_2 t_7^2\\ &+\frac{31950625}{729}
t_4 t_7^2-\frac{1001}{324} t_1^{10} t_8+\frac{642635}{243} t_1^6 t_2^2 t_8-\frac{60418225}{2187} t_1^2 t_2^4 t_8+\frac{38077900}{729} t_1^4 t_2 t_4
t_8\\ &-\frac{483345800}{6561} t_2^3 t_4 t_8-\frac{14235650}{729} t_1 t_2 t_3 t_4 t_8+\frac{16050400}{189} t_1^2 t_4^2 t_8+\frac{47878520}{5103} t_1^5 t_5 t_8\\ &-\frac{2884168900}{15309}
t_1 t_2^2 t_5 t_8-\frac{7184375}{1134} t_1^2 t_3 t_5 t_8-\frac{549410975}{10206} t_5^2 t_8-\frac{7498975}{729} t_1^2 t_2 t_6 t_8\\ &-\frac{18101600}{1701}
t_4 t_6 t_8+\frac{160884275}{2916} t_1^3 t_7 t_8-\frac{260150}{27} t_3 t_7 t_8-\frac{863220700}{15309} t_2 t_8^2\\ &-\frac{447700}{567} t_1^5
t_4 t_9+\frac{568700}{567} t_1 t_2^2 t_4 t_9-\frac{2665025}{567} t_1^2 t_3 t_4 t_9-\frac{30749125}{3402} t_1^2 t_2 t_5 t_9\\ &-\frac{196625}{126} t_4
t_5 t_9-\frac{1491325}{1134} t_1^3 t_6 t_9-\frac{378125}{972} t_2 t_7 t_9-\frac{20933000}{5103} t_1 t_8 t_9-\frac{2105425}{23328} t_1^8 t_{10}\\ &+\frac{467174575}{17496}
t_1^4 t_2^2 t_{10}-\frac{2960493025}{157464} t_2^4 t_{10}-\frac{1119250}{1701} t_1^5 t_3 t_{10}+\frac{892375}{567} t_1 t_2^2 t_3 t_{10}\\ &-\frac{3025000}{1701}
t_1^2 t_3^2 t_{10}+\frac{884406200}{5103} t_1^2 t_2 t_4 t_{10}+\frac{247596950}{5103} t_4^2 t_{10}+\frac{1560537625}{27216} t_1^3 t_5 t_{10}\\ &-\frac{13990625}{6804}
t_3 t_5 t_{10}+\frac{75625}{1458} t_2 t_6 t_{10}+\frac{168778325}{2187} t_1 t_7 t_{10}+\frac{212114815}{30618} t_1^5 t_2 t_{11}\\ &-\frac{25865822125}{551124}
t_1 t_2^3 t_{11}-\frac{1331000}{243} t_1^2 t_2 t_3 t_{11}+\frac{14998174675}{367416} t_1^3 t_4 t_{11}\\ &-\frac{198618475}{30618} t_3 t_4 t_{11}-\frac{30752455525}{367416}
t_2 t_5 t_{11}-\frac{266765675}{61236} t_1 t_6 t_{11}+\frac{869976575}{17496} t_1^3 t_2 t_{13}\\ &+\frac{5151575}{20412} t_2 t_3 t_{13}+\frac{616094375}{8748}
t_1 t_4 t_{13}+\frac{118134325}{11664} t_1^4 t_{14}-\frac{123005575}{2916} t_2^2 t_{14}\\ &-\frac{30779375}{8748} t_1 t_3 t_{14}+\frac{128714200}{5103}
t_1^2 t_{16}
\end{aligned}&&&
\end{flalign}

\section{Topological expansion of the tau-function $\tau^{(2,N)}$}\label{taufunction}
\begin{flalign}
  \begin{aligned}
\tau^{(2,N,1)}&=-\frac{1}{2} N t_1^2-\frac{1}{3} \left(3 N^2-1\right) t_2,
\end{aligned}&&&
\end{flalign}
\begin{flalign}
  \begin{aligned}
\tau^{(2,N,2)} &=\frac{1}{24} \left(3 N^2-2\right) t_1^4+\frac{1}{6} N \left(3 N^2-7\right) t_1^2 t_2+\frac{1}{18} \left(9N^4-24 N^2+7\right)
t_2^2\\
&+\frac{1}{3} N \left(2 N^2-3\right) t_4,
\end{aligned}&&&
\end{flalign}
\begin{flalign}
  \begin{aligned}
\tau^{(2,N,3)} &=-\frac{1}{48} N \left(N^2-2\right) t_1^6-\frac{1}{72} \left(9 N^4-45 N^2+26\right) t_1^4 t_2\\&
-\frac{1}{36} N \left(9
N^4-60 N^2+91\right) t_1^2 t_2^2-\frac{1}{162} \left(27 N^6-189 N^4+333 N^2-91\right) t_2^3\\&
-\frac{1}{6} \left(2 N^4-15 N^2+8\right) t_1^2 t_4-\frac{1}{9} N \left(6 N^4-35 N^2+39\right)
t_2 t_4\\&+\frac{5}{3} N \left(N^2-2\right) t_1 t_5,
\end{aligned}&&&
\end{flalign}
\begin{flalign}
  \begin{aligned}
\tau^{(2,N,4)}&=\frac{1}{1152}\left(3 N^4-12 N^2+4\right) t_1^8+\frac{1}{324} N \left(27 N^6-351 N^4+1413 N^2-1729\right) t_1^2 t_2^3\\&
+\frac{1}{144} N \left(3 N^4-25 N^2+38\right) t_1^6 t_2
+\frac{1}{432} \left(27 N^6-306 N^4+933 N^2-494\right) t_1^4 t_2^2
\\&+\frac{1}{1944}\left(81 N^8-1080 N^6+4590 N^4-6600 N^2+1729\right) t_2^4\\&+\frac{1}{18} \left(6 N^6-83 N^4+309 N^2-152\right) t_1^2 t_2 t_4+\frac{1}{72} N \left(6 N^4-85 N^2+174\right) t_1^4 t_4\\&+\frac{1}{18} \left(4 N^6-48 N^4+141 N^2-52\right) t_4^2
-\frac{7}{9} \left(3 N^4-15 N^2+7\right) t_1 t_7
\\&
+\frac{1}{54} N \left(18 N^6-219 N^4+782 N^2-741\right) t_2^2 t_4-\frac{5}{9} N \left(3 N^4-25 N^2+38\right)
t_1 t_2 t_5\\&
-\frac{5}{18} \left(3 N^4-18 N^2+10\right) t_1^3 t_5-\frac{1}{9} N \left(6 N^4-50 N^2+69\right) t_8
\end{aligned}&&&
\end{flalign}
\begin{flalign}
  \begin{aligned}
\tau^{(2,N,5)} &=-\frac{1}{11520}N \left(3 N^4-20 N^2+20\right) t_1^{10}\\&-\frac{1}{3456}\left(9 N^6-111 N^4+312 N^2-100\right) t_1^8
t_2\\&-\frac{1}{864} N \left(9 N^6-150 N^4+739 N^2-950\right) t_1^6 t_2^2\\&-\frac{1}{3888}\left(81 N^8-1593 N^6+10449 N^4-24807 N^2+12350\right) t_1^4 t_2^3\\&-\frac{1}{3888}N \left(81 N^8-1728 N^6+13014 N^4-40512 N^2+43225\right) t_1^2 t_2^4\\&-\frac{1}{29160}\left(243 N^{10}-5265 N^8+40770 N^6-134550 N^4+170187 N^2\right.\\&\left.-43225\right) t_2^5-\frac{1}{144} \left(2 N^6-43 N^4+174 N^2-80\right) t_1^6 t_4\\&-\frac{1}{216} N \left(18 N^6-405 N^4+2647 N^2-4350\right) t_1^4 t_2 t_4\\&-\frac{1}{108} \left(18 N^8-399 N^6+3002 N^4-8181 N^2+3800\right) t_1^2 t_2^2 t_4\\&-\frac{1}{486} N \left(54 N^8-1107 N^6+7821 N^4-21773 N^2+18525\right) t_2^3 t_4\\&-\frac{1}{36} N \left(4 N^6-96 N^4+821 N^2-1444\right) t_1^2
t_4^2\\&+\frac{1}{72} N \left(15 N^4-160 N^2+288\right) t_1^5 t_5\\&-\frac{1}{54} \left(12 N^8-244 N^6+1623 N^4-3681 N^2+1300\right) t_2 t_4^2\\&+\frac{5}{54} \left(9 N^6-129 N^4+480 N^2-250\right) t_1^3 t_2 t_5\\&+\frac{5}{54} N \left(9 N^6-150 N^4+739 N^2-950\right) t_1 t_2^2 t_5\\&+\frac{5}{9} \left(2 N^6-43 N^4+174 N^2-80\right) t_1 t_4 t_5-\frac{5}{18} N \left(9 N^4-75 N^2+107\right) t_5^2\\&+\frac{7}{18} N \left(3 N^4-35
N^2+67\right) t_1^3 t_7+\frac{7}{27} \left(9 N^6-120 N^4+396 N^2-175\right) t_1 t_2 t_7\\&+\frac{1}{18} \left(6 N^6-170 N^4+789 N^2-400\right) t_1^2 t_8\\&+\frac{1}{27} N \left(18 N^6-300 N^4+1457 N^2-1725\right) t_2 t_8\\&+\frac{7}{81} \left(9 N^6-135 N^4+441 N^2-175\right) t_{10}.
\end{aligned}&&&
\end{flalign}

\begin{flalign}
  \begin{aligned}
\mathcal{F}^1&=-\frac{1}{2} N t_1^2-\frac{1}{3} \left(3 N^2-1\right) t_2,
\end{aligned}&&&
\end{flalign}
\begin{flalign}
  \begin{aligned}
\mathcal{F}^2 &=-\frac{1}{12}t_1^4-N t_1^2 t_2-\frac{1}{3} \left(3 N^2-1\right) t_2^2+\frac{1}{3} N \left(2 N^2-3\right) t_4,
\end{aligned}&&&
\end{flalign}

\begin{flalign}
  \begin{aligned}
\mathcal{F}^3&=-\frac{1}{3} t_1^4 t_2-2 N t_1^2 t_2^2-\frac{4}{9} \left(3 N^2-1\right) t_2^3+\frac{2}{3} \left(3 N^2-2\right) t_1^2 t_4+\frac{4}{3}
N \left(2 N^2-3\right) t_2 t_4\\&+\frac{5}{3} N \left(N^2-2\right) t_1 t_5
\end{aligned}&&&
\end{flalign}
\begin{flalign}
  \begin{aligned}
\mathcal{F}^4 &=-t_1^4 t_2^2-4 N t_1^2 t_2^3-\frac{2}{3} \left(3 N^2-1\right) t_2^4+\frac{5}{3} N t_1^4 t_4+4 \left(3 N^2-2\right) t_1^2
t_2 t_4\\&+4N(2N^2-3)t_2^2t_4-\frac{2}{9} \left(9 N^4-33 N^2+13\right) t_4^2+\frac{5}{9} \left(6 N^2-5\right) t_1^3 t_5\\&+10 N \left(N^2-2\right) t_1 t_2 t_5-\frac{7}{9} \left(3 N^4-15 N^2+7\right) t_1 t_7-\frac{1}{9} N
\left(6 N^4-50 N^2+69\right) t_8
\end{aligned}&&&
\end{flalign}

\begin{flalign}
  \begin{aligned}
\mathcal{F}^5&=-\frac{8}{3} t_1^4 t_2^3-8 N t_1^2 t_2^4-\frac{16}{15} \left(3 N^2-1\right) t_2^5+\frac{4}{9} t_1^6 t_4+\frac{40}{3} N
t_1^4 t_2 t_4+16 \left(3 N^2-2\right) t_1^2 t_2^2 t_4\\ &+\frac{32}{3} N \left(2 N^2-3\right) t_2^3 t_4-\frac{16}{3} N \left(3 N^2-7\right) t_1^2 t_4^2-\frac{16}{9} \left(9 N^4-33 N^2+13\right) t_2
t_4^2\\&+\frac{40}{9} \left(6 N^2-5\right) t_1^3 t_2 t_5+40 N \left(N^2-2\right) t_1 t_2^2 t_5-\frac{20}{9} \left(9 N^4-42 N^2+20\right)
t_1 t_4 t_5\\&-\frac{5}{18} N \left(9 N^4-75 N^2+107\right) t_5^2-\frac{70}{9} N \left(N^2-3\right) t_1^3 t_7\\&-\frac{56}{9} \left(3 N^4-15 N^2+7\right) t_1
t_2 t_7-\frac{20}{9} \left(3 N^4-18 N^2+10\right) t_1^2 t_8\\&-\frac{8}{9} N \left(6 N^4-50 N^2+69\right) t_2 t_8+\frac{7}{81} \left(9 N^6-135 N^4+441
N^2-175\right) t_{10}+\frac{7}{3} N t_1^5 t_5
\end{aligned}&&&
\end{flalign}

\begin{flalign}
  \begin{aligned}
\mathcal{F}^6 &=-\frac{20}{3} t_1^4 t_2^4-16 N t_1^2 t_2^5-\frac{16}{9} \left(3 N^2-1\right) t_2^6+\frac{40}{9} t_1^6 t_2 t_4+\frac{200}{3}
N t_1^4 t_2^2 t_4
\\&+\frac{160}{3} \left(3 N^2-2\right) t_1^2 t_2^3 t_4+\frac{80}{3} N \left(2 N^2-3\right) t_2^4 t_4-\frac{4}{9} \left(69 N^2-61\right) t_1^4 t_4^2\\&-\frac{160}{3} N \left(3 N^2-7\right) t_1^2 t_2
t_4^2-\frac{80}{9} \left(9 N^4-33 N^2+13\right) t_2^2 t_4^2\\&+\frac{5}{9} t_1^7 t_5+\frac{70}{3} N
t_1^5 t_2 t_5+\frac{200}{9} \left(6N^2-5\right) t_1^3 t_2^2 t_5+\frac{400}{3} N \left(N^2-2\right) t_1 t_2^3 t_5
\\&-\frac{20}{9} N \left(39 N^2-106\right)
t_1^3 t_4 t_5-\frac{200}{9} \left(9 N^4-42 N^2+20\right) t_1 t_2 t_4 t_5\\&-\frac{25}{6} \left(9 N^4-49 N^2+27\right) t_1^2 t_5^2-\frac{25}{9} N \left(9 N^4-75
N^2+107\right) t_2 t_5^2\\&-\frac{14}{9} \left(6 N^2-7\right) t_1^5 t_7-\frac{700}{9} N \left(N^2-3\right) t_1^3 t_2 t_7-\frac{280}{9} \left(3 N^4-15 N^2+7\right) t_1
t_2^2 t_7
\\&+\frac{56}{9} N \left(6 N^4-55 N^2+89\right) t_1 t_4 t_7+\frac{35}{81} \left(18 N^6-261 N^4+855 N^2-350\right) t_5 t_7
\\&-\frac{70}{9} N \left(2
N^2-7\right) t_1^4 t_8-\frac{200}{9} \left(3N^4-18 N^2+10\right) t_1^2 t_2 t_8\\&+\frac{40}{81} \left(18 N^6-243 N^4+765 N^2-310\right) t_4 t_8+\frac{35}{9} N \left(3 N^4-35 N^2+67\right) t_1^2 t_{10}\\&+\frac{70}{81} \left(9 N^6-135 N^4+441 N^2-175\right) t_2 t_{10}-\frac{40}{9} N \left(6 N^4-50 N^2+69\right) t_2^2 t_8\\&+\frac{121}{486} \left(18 N^6-315 N^4+1197 N^2-550\right) t_1 t_{11}+\frac{16}{3} N \left(2 N^4-14 N^2+19\right) t_4^3
\end{aligned}&&&
\end{flalign}

\bibliographystyle{alphaurl}
\bibliography{Weilref}

\end{document}